\newcommand{\ssj}[2]{\ensuremath{#1 \mapsto #2}}
\newcommand{\msj}[2]{\ensuremath{#1 \stackrel{*}{\mapsto} #2}}
\newcommand{\bsj}[2]{\ensuremath{#1 \Downarrow #2}}
\newcommand{\bstj}[2]{\ensuremath{#1 \Downmapsto #2}}
\newcommand{\bstej}[3]{\ensuremath{#1 \Downmapsto #2 \rightsquigarrow #3}}
\newcommand{\bstjp}[2]{\ensuremath{#1 \downmapsto #2}}
\newcommand{\bstejp}[3]{\ensuremath{#1 \downmapsto #2 \rightsquigarrow #3}}
\newcommand{\nontinf}{\ensuremath{\infty}}
\newcommand{\bsntj}[1]{\ensuremath{\bsj {#1} {\nontinf}}}
\newcommand{\ppj}[2]{\ensuremath{#1 \downarrow #2}}
\newcommand{\ssej}[3]{\ensuremath{#1 \mapsto #2 \rightsquigarrow #3}}
\newcommand{\msej}[3]{\ensuremath{#1 \stackrel{*}{\mapsto} #2 \rightsquigarrow #3}}
\newcommand{\bsej}[3]{\ensuremath{#1 \Downarrow #2 \rightsquigarrow #3}}
\newcommand{\nont}{\ensuremath{\Box}}
\newcommand{\eff}[2]{\ensuremath{\mathtt{eff}\{#1; #2\}}}
\newcommand{\val}[1]{\ensuremath{#1\;\mathit{val}}}
\newcommand{\Lam}[3]{\ensuremath{\mathtt{fun}\{#1, #2.\,#3\}[\,]}}
\newcommand{\lam}[3]{\ensuremath{\mathtt{fun}\{#1, #2.\,#3\}}}
\newcommand{\app}[2]{\ensuremath{\mathtt{app}[#1, #2]}}
\newcommand{\suc}[1]{\ensuremath{\mathtt{S[}#1\mathtt{]}}}
\newcommand{\Z}{\ensuremath{\mathtt{Z}[\,]}}
\newcommand{\z}{\ensuremath{\mathtt{Z}}}
\newcommand{\ifz}[4]{\ensuremath{\mathtt{case}\{#1;#2.#3\}[#4]}}
\newcommand{\elet}[3]{\ensuremath{\mathtt{let}\,#1\,=\,#2\,\mathtt{in}\,#3}}
\newcommand{\ty}[3]{\ensuremath{#1 \vdash #2 : #3}}
\newcommand{\nat}{\ensuremath{\mathtt{nat}}}
\newcommand{\fun}[2]{\ensuremath{#1 \rightarrow #2}}
\newcommand{\stack}[1]{\ensuremath{#1\;\mathit{stack}}}
\newcommand{\kframe}[1]{\ensuremath{#1\;\mathit{frame}}}
\newcommand{\frameframe}[1]{\langle #1 \rangle}
\newcommand{\fsucc}{\ensuremath{\frameframe {\suc -}}}
\newcommand{\fcase}[3]{\ensuremath{\frameframe {\ifz {#1} {#2} {#3} -}}}
\newcommand{\ffun}[1]{\ensuremath{\frameframe {\app - {#1}}}}
\newcommand{\farg}[1]{\ensuremath{\frameframe {\app {#1} -}}}
\newcommand{\estate}[2]{\ensuremath{#1 \rhd #2}}
\newcommand{\vstate}[2]{\ensuremath{#1 \lhd #2}}
\newcommand{\ksj}[2]{\ensuremath{#1 \mathop{{\mapsto}_K} } #2}
\newcommand{\kmj}[2]{\ensuremath{#1  \stackrel{*}{\mapsto}_K}  #2}
\newcommand{\kesj}[3]{\ensuremath{#1 \mathop{{\mapsto}_K} #2  \rightsquigarrow #3}}
\newcommand{\kemj}[3]{\ensuremath{#1 \stackrel{*}{\mapsto}_K} #2  \rightsquigarrow #3}
\newtheorem{theorem}{Theorem}
\newtheorem{lemma}[theorem]{Lemma}
\newtheorem{corollary}[theorem]{Corollary}
\newcommand{\jan}[1]{{\color{blue}Jan says: #1}}
\newcommand{\david}[1]{{\color{purple}David says: #1}}
\newcommand{\iwhile}[2]{\ensuremath{\mathtt{while}\{#1\}\mathtt{do}\{#2\}}}
\newcommand{\iset}[2]{\ensuremath{#1 := #2}}
\newcommand{\iseq}[2]{\ensuremath{#1;#2}}
\newcommand{\iif}[2]{\ensuremath{\mathtt{if}\{#1\}\mathtt{then}\{#2\}}}
\newcommand{\iskip}{\ensuremath{\mathtt{skip}}}
\newcommand{\hole}{\ensuremath{\langle\,\rangle}}
\newsavebox{\logoagdabox}
\sbox{\logoagdabox}{
  \raisebox{-2pt}{\includegraphics[height=1em]{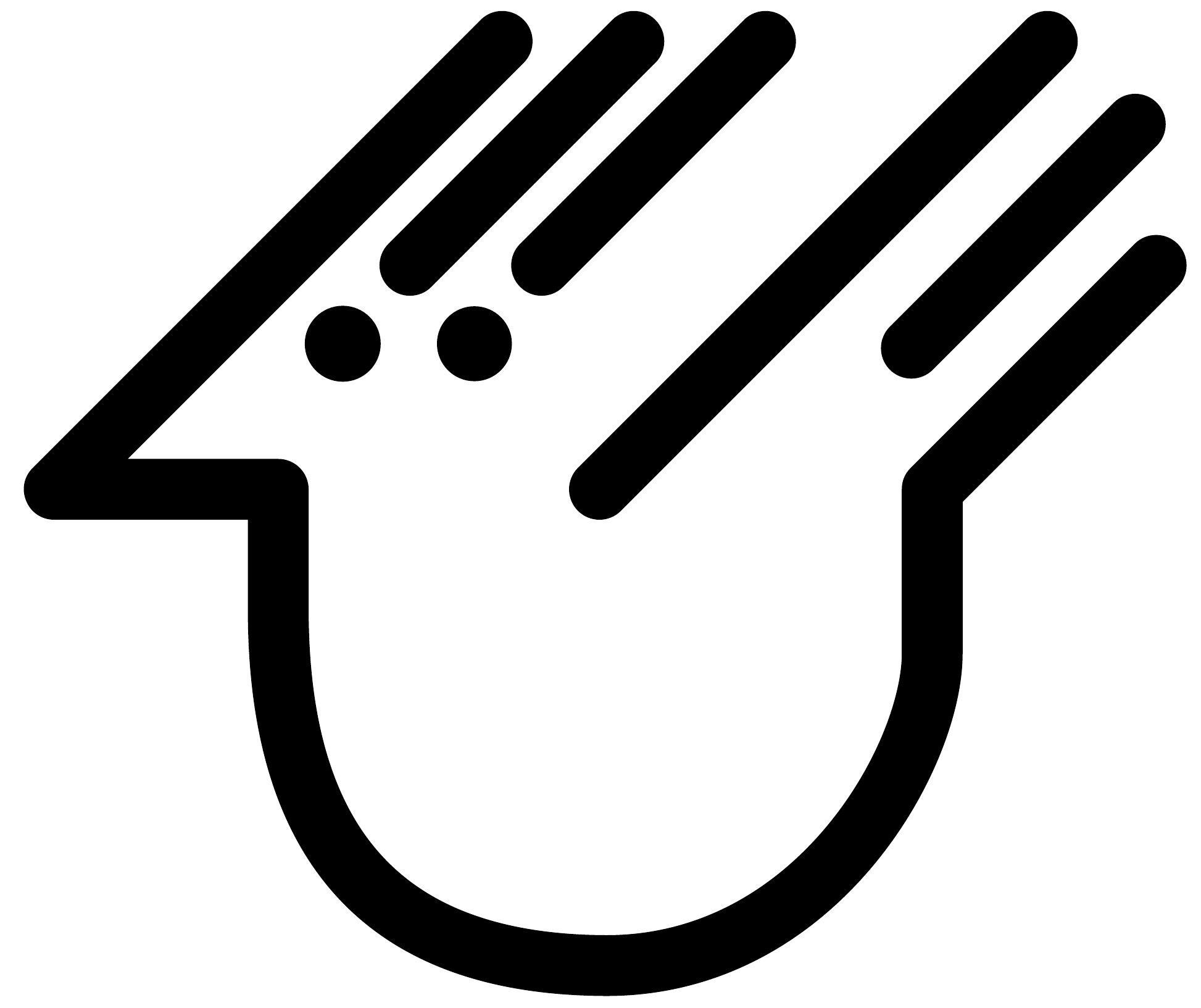}}
}
\newcommand{\agda}{\usebox{\logoagdabox}}
\begin{document}

\title{Big-Stop Semantics}
\subtitle{Small-Step Semantics 
 in a Big-Step Judgment}

\author{David M Kahn}
\email{kahnd@denison.edu}
\affiliation{%
  \institution{Denison University}
  \city{Granville}
  \state{Ohio}
  \country{USA}
}

\author{Jan Hoffmann}
\email{jhoffmann@cmu.edu}
\affiliation{%
  \institution{Carnegie Mellon University}
  \city{Pittsburgh}
  \state{Pennsylvania}
  \country{USA}
}

\author{Runming Li}
\email{runming@cmu.edu}
\affiliation{%
  \institution{Carnegie Mellon University}
  \city{Pittsburgh}
  \state{Pennsylvania}
  \country{USA}
}

\renewcommand{\shortauthors}{Kahn, Hoffmann, \& Li}

\begin{abstract}
  As is evident in the programming language literature, many
  practitioners favor specifying dynamic program behavior using
  big-step over small-step semantics.
  Unlike small-step semantics, which must dwell on every
  intermediate program state, big-step semantics conveniently jumps
  directly to the ever-important result of the computation.
  Big-step semantics also typically involves fewer inference rules than
  their small-step counterparts.
  However, in exchange for ergonomics, big-step semantics gives up
  power: Small-step semantics describes program behaviors that
  are outside the grasp of big-step semantics, notably divergence.
  
  This work presents a little-known extension of
  big-step semantics with inductive definitions that captures
  diverging computations without introducing error states. 
  This \emph{big-stop} semantics is illustrated for typed, untyped,
  and effectful variants of PCF, as well as a while-loop-based 
  imperative language.
  Big-stop semantics extends the standard big-step inference rules
  with a few additional rules to define an evaluation judgment
  that is equivalent to the reflexive-transitive 
  closure of small-step transitions.
  This simple extension contrasts 
  with other solutions in the literature that
  sacrifice ergonomics by
  introducing many additional inference rules, global state, and/or 
  less-commonly-understood reasoning principles like coinduction.
  The ergonomics of big-stop semantics is exemplified
  via concise Agda proofs for some key 
  results and compilation theorems.

\end{abstract}

\begin{CCSXML}
  <ccs2012>
     <concept>
         <concept_id>10003752.10010124.10010131.10010134</concept_id>
         <concept_desc>Theory of computation~Operational semantics</concept_desc>
         <concept_significance>500</concept_significance>
         </concept>
     <concept>
         <concept_id>10003752.10010124.10010131.10010132</concept_id>
         <concept_desc>Theory of computation~Algebraic semantics</concept_desc>
         <concept_significance>300</concept_significance>
         </concept>
   </ccs2012>
\end{CCSXML}

\ccsdesc[500]{Theory of computation~Operational semantics}
\ccsdesc[300]{Theory of computation~Algebraic semantics}
\keywords{semantics, dynamic semantics, operational semantics, big-step semantics, small-step semantics, big-stop semantics, nontermination, divergence, verification}


\maketitle



\section{Introduction}\label{sec:intro}

Operational semantics is a popular choice for defining the dynamic
behavior of programs, both in the programming language literature and
for precisely defining industrial-strength languages such as
Standard ML~\cite{MilnerTH1990}, C~\cite{Norrish99,BlazyL09,EllisonR12}, or
Web Assembly~\cite{WebAssembly,Watt18}.
Advantages of operational semantics include that they are accessible
to non-experts without in-depth mathematical training and that they
scale well to describe novel and advanced language features.

Operational semantics can be characterized as belonging to
two main styles:
The small-step style defines a
step relation between the states of a transition
system in which the states correspond 
to programs.
It originates from Plotkin's structural operational semantics
(SOS)~\cite{plotkin1981structural} and encompasses abstract
machines~\cite{Gabbrielli2010}.
The outcome of an evaluation of a program is a sequence of transitions
that starts at the state that corresponds to the program.
Transitions are usually defined inductively using syntax-directed
inference rules.

The big-step style, first proposed by Kahn~\cite{kahn1987natural},
defines a relation between programs and outcomes.
Outcomes often consist of a returned value and potentially an effect,
for instance, the cost of the evaluation or a sequence of I/O
operations.
A big-step operational semantics does not define a notion of a step
and is thus also---perhaps more appropriately---called evaluation
dynamics \cite{harper2016practical} or natural semantics \cite{kahn1987natural}.
Like transitions in a small-step semantics, the big-step evaluation
relation is usually defined inductively with syntax-directed inference
rules.

For most purposes, big-step and small-step semantics are considered to
be equally expressive for describing terminating computations but not
for diverging (non-terminating) computations.
While both the step relation of small-step and the evaluation relation
of big-step semantics are inductively defined, small-step semantics
naturally formalizes diverging computations:
They are unbounded transition sequences that never reach a final
state.
Capturing diverging computations in the dynamics is important for
describing their effects and for distinguishing them from failing
computations (transition sequences that get stuck in a state that is
not final), which is crucial for formulating and proving type
soundness.
Big-step semantics, in its standard definition, only relates
terminating programs to their outcomes.
This means that it does not describe the effects of diverging
computations and cannot distinguish between diverging and failing
computations. Big-step semantics is therefore not suitable for some
purposes such as formulating type soundness,
which can be expressed elegantly with small-step semantics using
\emph{progress} and \emph{preservation}~\cite{wright1994syntactic}.

Despite this shortcoming, big-step semantics remains a popular choice
in the literature~\cite{leroy2009coinductive,chargueraud2013pretty}.
The use of big-step semantics in comprehensive and influential
works, such as the definition of Standard ML~\cite{MilnerTH1990} and
formalizations of C~\cite{Norrish99,BlazyL09}, is a
testament to its scalability, succinctness, and readability.
%
Many authors prefer big-step semantics because it can lead to simpler and
more succinct proofs.
Examples include correctness proofs for semantics-preserving
program translations arising in compiler verification 
\cite{leroy2006formal,klein2006machine,strecker2005compiler} 
as well as for type soundness proofs where type
judgments imply properties of global outcomes, such as resource
bounds~\cite{hofmann2003static,hoffmann2022two}.
One reason is that the structure of big-step semantics often
corresponds to the typing rules, and big-step semantics directly
describes outcomes without fussing with the intermediate program states reached
along the way.

There is a line of work that proposes to use \emph{coinductive} definitions
to remedy the deficiencies of big-step semantics for describing
divergence 
\cite{chargueraud2013pretty,cousot1992inductive,cousot2009bi,crole1998lectures,
dagnino2022meta,danielsson2012operational,hughes1995making,
leroy2009coinductive,
nakata2009trace,poulsen2017flag,zuniga2022coevaluation}.
Such work aims to generalize big-step semantics to enjoy its
advantages in situations where traditional big-step semantics is not
suitable.
Since the outcome of a diverging computation can be infinite, it is
natural to use coinduction to capture outcomes.
However, coinductive big-step semantics has some undesirable
properties and has not been widely adopted.
For one thing, proofs by induction on the semantic judgment turn into
coinductive proofs that hamper the close correspondence with
inductively defined type judgments.
For another thing, coinductive big-step semantics can yield 
counter-intuitive notions of evaluation (coevaluation)
or result in an explosion in the
number of inference rules \cite{leroy2009coinductive}.

In this 
work, we develop a little-known alternative
approach that extends big-step semantics with \emph{inductive}
rules to capture diverging computations.
Our work is based on an idea proposed by Hoffmann and Hofmann
that we dub \emph{big-stop semantics} \cite{hoffmann2010amortized}.
Instead of describing complete, possibly infinite outcomes using
coinduction, big-stop semantics inductively describes all finite
prefixes of outcomes. 
Moreover, big-stop semantics does so without significantly 
altering the language, whereas other inductive work 
adds features like
\emph{error states} 
and \emph{fuel} \cite{ernst2006virtual,siek2012,amin2017type} (see 
\Cref{sec:related}). 
Inductively describing finite prefixes of computation corresponds to
small-step semantics.

The essence of big-stop semantics is 
to take a standard formulation of
big-step semantics for a language 
and add just one rule schema for nondeterministically halting the computation.
In a well-chosen setting, this schema can even be reduced 
to just a single rule (see \Cref{sec:opt}).
Because big-stop semantics is just a small
extension to standard big-step semantics, it retains all the typical
benefits of big-step semantics.
However, because big-stop semantics can precisely match the relation
of small-step semantics, big-stop semantics is suitable for reasoning
about nonterminating computation.




This work makes the following contributions:

\begin{itemize}

  \item The big-stop system of operational semantics 
  is described for purposes of simplifying proofs 
  covering diverging computation. This system is created by 
  extending big-step with rules for nondeterministically 
  stopping computation. This extension is exemplified 
  for call-by-value PCF \cite{plotkin1977lcf} both with and without effects
  (\Cref{sec:stop,sec:eff}),
  as well as a while-loop-based imperative language (\Cref{sec:imp}), 
  showing that this 
  technique is broadly applicable.

  \item Formal properties of the big-stop system for 
  call-by-value PCF are proven 
  in Agda \cite{artifact} as marked by the symbol \agda. 
  These properties include the key equivalence of \Cref{thm:stseeq}
  showing that big-stop semantics coincides with the reflexive-transitive 
  closure of small-step semantics 
   and the progress property \Cref{thm:proge}.
   See \Cref{sec:eff} for further discussion.

  \item In \Cref{sec:kmach},
  Agda proofs (again marked by \agda) are given for the correctness of 
  compiling effectful call-by-value PCF to a Krivine machine \cite{krivine2007call}.
  These proofs exhibit the ease of big-stop formalization
  where using small-step semantics is unpleasant and using big-step 
  semantics is insufficient. 
  %
  They also exemplify a general principle---which we call the
  \emph{big-stop method}---for systematically extending a theorem
  about terminating computation that relies on big-step semantics to
  diverging computations.

  \item Further ergonomic optimizations to big-stop semantics 
  are outlined in \Cref{sec:opt}. These optimizations 
  allow big-step semantics to be extended to big-stop semantics 
  with as little as one additional rule, which can significantly 
  reduce the number of proof cases required when 
  performing rule induction.
  As described in \Cref{sec:related}, existing systems 
  covering the same niche as big-stop semantics
  typically require more complex extensions.

\end{itemize}

\section{Small-Step and Big-Step Semantics}\label{sec:background}

We start by providing the standard definitions of small-step and
big-step semantics.

\subsection{Call-By-Value PCF}

The language we consider throughout most of this article and in the
mechanization is a variant of PCF~\cite{plotkin1977lcf} with call-by-value (or
eager) evaluation order.
We choose PCF to be the basis of
our language because it is simple yet expressive enough to demonstrate
the most interesting properties of big-stop semantics.
Nonetheless, we do not believe 
that call-by-value evaluation is essential and expect
all
concepts to transfer to call-by-name without complications.%
\footnote{Launchbury has shown that call-by-need languages 
    also admit big-step semantics \cite{launchbury1993natural},
    which should be readily adaptable.
    %
    }
Note that we 
also consider untyped PCF expressions to make certain points
that are masked in the typed language.

\paragraph{Expressions and values}

The expression syntax of 
our version of pure PCF is given by the grammar of $e$ in \Cref{fig:syntax}
where $f,x$ range over variable names.
In the order appearing in the grammar,
expressions can be variables,
unary natural numbers, the
case analysis for natural numbers,
recursive functions, or
function applications.
%
The expression grammar is set up so that every expression 
in the language
is of the form $E[e_1,\dots,e_n]$, where $E$ indicates 
the kind of expression along with any variable bindings,
and where each $e_i$
is the $i^{th}$
subexpression that would need to be a value 
for the whole expression to be a redex. This notation makes it 
convenient to pick out the list of such subexpressions uniformly 
from any given expression. It should be noted 
that, despite superficial similarities,
this notation is \textit{not} the evaluation context 
notation of Felleisen \cite{felleisen1987calculi} (though 
we discuss evaluation contexts more in 
\Cref{sec:smallstep} and \Cref{sec:opt}).
To make this notation friendlier, we elide empty 
such lists so that, e.g., 
we write $\z$ instead of $\Z$.

\begin{figure}[t!b]
    \small
    \begin{subfigure}{0.4\textwidth}
    \[\begin{array}{lcl}
    e  & ::= & x \mid \Z \mid \suc {e'} \\
     && \mid \ifz {e_1} x {e_2} {e_3} \\
     && \mid \Lam f x e 
     \\
     &&\mid \app {e_1} {e_2} 
\end{array}\]
\caption{Expressions}
\label{fig:syntax}
\end{subfigure}
\hspace{4em}
    \begin{subfigure}{0.4\textwidth}
    \begin{mathpar}
    \inferrule[V-Zero]{
    }{
        \val \z
    }

    \inferrule[V-Succ]{
        \val v
    }{
        \val {\suc v}
    }

    \inferrule[V-Fun]{
    }{
        \val {\lam f x e}
    }
    \end{mathpar}
    \caption{Values}
\label{fig:values}    
\end{subfigure}
\hspace{2em}
\caption{Language of call-by-value PCF}
\label{fig:lang}
\end{figure}

As expected of PCF, this language extends lambda calculus with natural numbers
and a fixed-point operator. The expression $\z$ represents the number
0 and $\suc n$ represents $n+1$.  
The presence of primitive natural
numbers provides a simple way of writing (ill-typed)
programs that get
stuck, such as the application of zero to itself, $\app \z \z$.  Additionally,
the presence of a fixed-point operator allows one to write
nonterminating programs, even in a well-typed setting (for types, see
\Cref{fig:ty}).

Values $v$ of the language are indicated by the judgment 
$\val v$. The rules defining this judgment are given in
\Cref{fig:values}. The only well-typed values of the language 
are natural numbers and functions.

PCF typically evaluates its fixed-point operator lazily,
but we wish to work in an eager setting. There are at least 
two ways to work around this obstacle: either 
explicitly work with thunks, or change to 
a suitable eager fixed-point operator. 
While either solution should suffice,
our presentation uses the latter solution, 
effectively 
exchanging the Y fixed-point operator for 
the Z fixed-point operator. 
This exchange results in the 
expression $\lam f x e$,
which defines the recursive function $f$ 
with parameter $x$.
This fixed-point expression also happens to 
subsume the behaviour of function abstraction,
so we simplify by having it perform that role as well.
When abstracting a non-recursive function,
we may simply write $\lam {\_} x e$.

\paragraph{Static Semantics}

The static semantics of PCF is based on the type judgment
$\Gamma \vdash e : \tau$. It states that the expression $e$ has type
$\tau$ in context $\Gamma$, which assigns types to the free variables
in $e$. The definition is standard and given in \Cref{fig:ty} 
in \Cref{sec:types}.

The programs of PCF are defined to be well-typed, closed expressions,
that is, expressions $e$ such that $\cdot \vdash e : \tau$ for some
$\tau$ and the empty context $\cdot$.
In the mechanization of the results, we only consider well-typed
expressions.
However, many of our results also hold for untyped expressions and we
present these stronger results if applicable.

\subsection{Small-Step Semantics}\label{sec:smallstep}

Small-step semantics grew out of Gordon
Plotkin's structural operational 
semantics (SOS) \cite{plotkin1981structural}.
The idea behind this approach is to describe individual computation
steps in a transition system in a structurally-directed manner.
Describing operational
semantics in this way allows for program behaviour to be reasoned
about inductively in a straightforward, syntax-oriented way.

\begin{figure}[t!]
    \small
    \begin{mathpar}
        \inferrule[S-Seq(k)]{
            1 \leq k \leq n
            \\
            \forall\, 1\leq i < k.\, \val {e_i}
            \\
            \ssj {e_k} {e_k'}
            \\
            \forall\, i \neq k.\, e_i = e_i'
        }{
            \ssj {E[e_1,\dots,e_n]} {E[e_1',\dots,e_n']} 
        }

        \inferrule[S-CaseZ]{
        }{
            \ssj {\ifz {e_1} {x} {e_2} \z} {e_1} 
        }

        \inferrule[S-CaseS]{
            \val v
        }{
            \ssj {\ifz  {e_1} {x} {e_2} {\suc v}} {[v/x]e_2} 
        }

        \inferrule[S-App]{
            \val v
        }{
            \ssj {\app {\lam {f} {x} {e}} v} {[\lam {f} {x} {e}/f,\, v/x]e} 
        }
    \end{mathpar}
    \caption{Small-step transitions for pure call-by-value PCF}
    \label{fig:sss}
  \end{figure}

Pure PCF's call-by-value small-step transitions 
are given in \Cref{fig:sss}, where the judgment
$\ssj {e_1} {e_2}$ means that 
the expression $e_1$ transitions to 
the expression $e_2$ in one computational step.
We then take the reflexive-transitive closure 
of this small-step relation, $\msj {e_1} {e_2}$,
to describe sequences of computation as defined in \Cref{fig:mss}.

\begin{wrapfigure}{RI}{5.5cm}
  \small
  \centering{
  \begin{minipage}[c]{5cm}
    \begin{mathpar}
        \inferrule[M-Refl]{
        }{
            \msj {e} {e} 
        }

        \inferrule[M-Step]{
            \ssj {e_1} {e_2} 
            \\
            \msj {e_2} {e_3} 
        }{
            \msj {e_1} {e_3} 
        }
    \end{mathpar}
  \end{minipage}}
    \caption{Multi-step semantics}
    \label{fig:mss}
\end{wrapfigure}

Small-step semantics typically require multiple rules 
for language constructs with multiple 
subexpressions,
which are called \textit{congruence rules}. 
Such rules can bloat the number of rules 
required for small-step semantics.
However,
the rules of \Cref{fig:sss} avoid listing out all 
the congruence rules
through the use of the rule schema \textsc{S-Seq(k)}. 
For each choice of the parameter $k$, this schema results in a
rule that says that one computation step 
can be made, in place, for the $k^{th}$ subexpression
of $E[e_1,\dots,e_n]$
so long as all
prior subexpressions are already values.
This setup also enforces left-to-right 
evaluation of the subexpressions.
For 
our particular language, this rule schema replaces 
four congruence rules: one each to step the function
and argument in an application, one to step the 
contents of a successor, and one to step the argument 
to a case analysis.

An alternative to using our rule schema is to use evaluation contexts
\cite{felleisen1987calculi}.
Evaluation contexts provide a meta-syntax for expressions that helps
deal with congruences by cutting out a hole to indicate where an
expression is to step next.
In a small-step semantics with evaluation contexts, we would replace
the rule schema \textsc{S-Seq(k)} with one rule stating that
expressions in holes can step independently of their surrounding
context.
However, this approach inherently orients the language toward
small-step semantics, as big-step semantics considers more than just
the next step of evaluation.
Thus, to prevent obscuring eventual comparison between big-step 
and big-stop
semantics, we initially focus on schema-based semantics and
revisit evaluation contexts in \Cref{sec:opt}.

Aside from \textsc{S-Seq(k)}, 
the remaining three rules of \Cref{fig:sss}
are simply the rules for redexes in our language.
\textsc{S-CaseZ} and \textsc{S-CaseS} 
are the case analysis rules for zero and non-zero 
numbers, respectively, and 
\textsc{S-App} is the rule for function application.

Small-step systems of operational semantics 
are used to state some foundational results in 
the study of programming languages. 
For example, small-step semantics 
are used to state the concepts of \textit{progress} and 
\textit{preservation}, which are the key underpinnings 
of syntactic proofs of type soundness \cite{wright1994syntactic}.
``Progress'' states that well-typed 
expressions either are values or can take a small step.
``Preservation'' states that the typing of an expression is 
maintained through taking small steps.
If both progress and preservation hold for a language,
then well-typed 
expressions cannot get stuck at
type errors during evaluation. This approach 
does not require that evaluation terminates.


\iffalse
\subsubsection{Effects}

Our motivating problem also considers effects. Specifcally, it considers
writer-monad style effects that are
emitted by computation but do not affect computation. Such effects
include behaviour like printing text and accumulating cost but do not
include behaviour like changing a mutable state.

To capture writer-monad-style effects,
we consider the traces of effects in the order 
they occur.
Thus, for example, the effect term $abc$ just means that 
effect $a$ is followed by effect $b$ and then effect $c$.
Such effects are emitted
by the expression $\eff a e$, which has the effect 
indicated by $a$ and continues on to evaluate as 
the expression $e$.

Such effects can be incorporated into the small-step 
semantics using the judgment $\ssej {e_1} {e_2} a$,
meaning that $e_1$ small-steps to $e_2$ and 
has effect $a$. The rules for this system 
are given in \Cref{fig:sses}.
Here, we consider $1$ to be the identity effect, which can be thought
of as an uninteresting effect or no effect at all.

\Cref{fig:sses} also defines a multi-step judgment
$\msej {e_1} {e_3} {s}$, where $s$ is a sequence of effects.
Such a sequence can be treated as a
\textit{monoid} in which $1$ is the identity element.

\begin{figure}
    \begin{mathpar}
        \inferrule[SE-Seq(k)]{
            1 \leq k \leq n
            \\
            \forall\, 1\leq i < k.\, \val {e_i}
            \\
            \ssej {e_k} {e_k'} a
            \\
            \forall\, i \neq k.\, e_i = e_i'
        }{
            \ssej {E[e_1,\dots,e_n]} {E[e_1',\dots,e_n']}  a
        }

        \inferrule[SE-CaseZ]{
        }{
            \ssej {\ifz {e_1} {x} {e_2} \z} {e_1} 1
        }

        \inferrule[SE-CaseS]{
            \val v
        }{
            \ssej {\ifz {e_1} {x} {e_2} {\suc v}} {[v/x]e_2} 1
        }

        \inferrule[SE-App]{
            \val v
        }{
            \ssej {\app {\lam {f} {x} {e}} v} {[\lam {f} {x} {e}/f,\, v/x]e} 1
        }

        \inferrule[SE-Effect]{
        }{
            \ssej {\eff a e} e a
        }
        \\
        \inferrule[ME-Refl]{
        }{
            \msej {e} {e} 1
        }

        \inferrule[ME-Step]{
            \ssej {e_1} {e_2} a 
            \\
            \msej {e_2} {e_3} b
        }{
            \msej {e_1} {e_3} {ab}
        }
    \end{mathpar}
    \caption{Small- and multi-step semantics for effectful call-by-value PCF}
    label{fig:sses}
    \end{figure}
\fi

\iffalse 
\subsubsection{Types}
\label{sec:ty}

\begin{figure}
    \begin{mathpar}
        %
        %
        \inferrule[T-Var]{
        }{
            \ty {\Gamma,x:\tau} x \tau
        }

        \inferrule[T-Lam]{
            \ty {\Gamma, f:\fun \tau \sigma, x:\tau} e \sigma
        }{
            \ty {\Gamma} {\lam {f} {x} e} {\fun \tau \sigma}
        }

        \inferrule[T-Zero]{
        }{
            \ty \Gamma \z \nat
        }

        \inferrule[T-Succ]{
            \ty \Gamma e \nat
        }{
            \ty \Gamma {\suc e} {\nat}
        }

        \inferrule[T-App]{
            \ty \Gamma {e_1} {\fun \sigma \tau} 
            \\
            \ty \Gamma {e_2} {\sigma}
        }{
            \ty \Gamma {\app {e_1} {e_2}} \tau
        }

        \inferrule[T-Case]{ 
            \ty \Gamma {e_3} \nat
            \\
            \ty \Gamma {e_1} \tau 
            \\
            \ty {\Gamma, x:\nat} {e_2} \tau
        }{
            \ty \Gamma {\ifz {e_1} x {e_2} {e_3}} \tau 
        }

        \inferrule[T-Eff]{
            \ty {\Gamma} e \tau
        }{
            \ty {\Gamma} {\eff a e} \tau
        }
    \end{mathpar}
    \caption{Typing rules for PCF}
\label{fig:ty}
\end{figure}

To properly consider well-typed settings,
we lay out the standard type system of PCF. The only types of PCF are
natural numbers and functions, formalized by the following grammar:
\[ \tau ::= \nat \mid \fun {\tau_1} {\tau_2} \]

The typing rules of \Cref{fig:ty} assigns 
these types to expressions using the
judgment $\ty{\Gamma}{e}{\tau}$, which
means that the expression $e$ has type $\tau$ 
given the typing assumptions of the typing context $\Gamma$.
The rules are standard, with the exception of \textsc{T-Eff}. It
assign the type $\tau$ to the expression $\eff a e$ in context
$\Gamma$ if $e$ has type $\tau$ in $\Gamma$.
Note that types are not unique since functions can be assigned
different argument types.
\fi

\subsection{Big-Step Semantics}

Big-step semantics originates 
from Gilles Kahn's natural semantics \cite{kahn1987natural}.
In a sense,
the big-step approach achieves
the main thrust of small-step semantics by 
directly relating expressions to the values they reduce to.
The big-step relation is 
written here as $\bsj {e} {v}$, meaning that the 
expression $e$ evaluates to the value $v$.
The rules of big-step semantics are syntax-directed and given in
\Cref{fig:bigpure}.

\begin{figure}
    \small
\begin{mathpar}
 \inferrule[B-Val]{
       \val v
    }{
        \bsj {v} {v} 
    }

  \inferrule[B-Succ]{
    \bsj e v
  }{
    \bsj {\suc e} {\suc v}
  }

  \inferrule[B-CaseZ]{ 
    \bsj e \z 
    \\ 
    \bsj {e_1} {v_1} 
  }{ 
    \bsj  {\ifz {e_1} x {e_2} e} {v_1} 
  }
    
  \inferrule[B-CaseS]{ 
    \bsj e {\suc v} 
    \\
    \bsj {[v/x]e_2} {{v_2}} 
  }{ 
    \bsj  {\ifz  {e_1} x {e_2} e} {v_2}
  }
    
  \inferrule[B-App]{ 
    \bsj {e_1} {\lam f x e} 
    \\ 
    \bsj {e_2} {v_2} 
    \\
    \bsj {[{\lam f x e}/f, v_2/x]e} {v} 
  }
  { \bsj {\app {e_1} {e_2}} {v}}
\end{mathpar}
\caption{Big-step semantics for pure call-by-value PCF}
\label{fig:bigpure}
\end{figure}

Crucially, these big-step semantics match those given 
by the small-step system of \Cref{fig:sss}.
This equivalence can be formalized by \Cref{lem:bseq}, which can be
proved by straightforward rule induction. More precisely, this equivalence
states that the two systems agree on terminating computations.
However, this is already sufficient to obtain equivalence of the two
semantics for a type-safe language without effects such as pure PCF
since divergence is the only other possible outcome.

\begin{theorem}[Big/Multi Equivalence]\label{lem:bseq}
    For all expressions $e$ and values $v$,
    $\bsj e v \iff \msj e v$.
\end{theorem}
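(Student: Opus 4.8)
The plan is to prove the two directions of the biconditional separately, each by rule induction on the relevant derivation.

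For the forward direction, $\bsj e v \Rightarrow \msj e v$, I would proceed by induction on the derivation of $\bsj e v$. The key auxiliary fact needed is a \emph{congruence lemma} for multi-step: if $\msj {e_k} {e_k'}$ and $e_1,\dots,e_{k-1}$ are values, then $\msj {E[e_1,\dots,e_{k-1},e_k,e_{k+1},\dots,e_n]} {E[e_1,\dots,e_{k-1},e_k',e_{k+1},\dots,e_n]}$; this follows by an inner induction on the length of the multi-step sequence, using \textsc{S-Seq(k)} at each step. With this lemma in hand, each big-step case is routine: for \textsc{B-App}, say, the induction hypotheses give $\msj {e_1} {\lam f x e}$, $\msj {e_2} {v_2}$, and $\msj {[\lam f x e/f, v_2/x]e} v$; congruence lifts the first two to multi-steps on $\app {e_1} {e_2}$ (first reducing the function position, then the argument position, which is now legal since the function is a value), then \textsc{S-App} fires, and transitivity of $\stackrel{*}{\mapsto}$ (a trivial lemma, provable by induction on the first sequence) glues everything together with the third hypothesis. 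The \textsc{B-CaseZ}, \textsc{B-CaseS}, and \textsc{B-Succ} cases are analogous but simpler, and \textsc{B-Val} is immediate from \textsc{M-Refl}.

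For the backward direction, $\msj e v \Rightarrow \bsj e v$, I would induct on the derivation of $\msj e v$. The \textsc{M-Refl} case requires knowing that $v$, being a value, big-step evaluates to itself, which is exactly \textsc{B-Val}. The \textsc{M-Step} case has $\ssj e {e'}$ and $\msj {e'} v$; the induction hypothesis gives $\bsj {e'} v$, and I need to conclude $\bsj e v$. This is handled by a \emph{single-step lifting lemma}: if $\ssj e {e'}$ and $\bsj {e'} v$ then $\bsj e v$. That lemma is proved by induction on the derivation of $\ssj e {e'}$. For the redex rules (\textsc{S-CaseZ}, \textsc{S-CaseS}, \textsc{S-App}) one inverts $\bsj {e'} v$ and rebuilds the big-step derivation using the corresponding \textsc{B-} rule together with \textsc{B-Val} for the already-evaluated subterms. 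For \textsc{S-Seq(k)}, one inverts the big-step derivation of $E[\dots,e_k',\dots] $ to extract a sub-derivation $\bsj {e_k'} {v_k}$, applies the inner induction hypothesis to get $\bsj {e_k} {v_k}$, and reassembles; here one also uses that the values $e_1,\dots,e_{k-1}$ evaluate to themselves and that the remaining subterms $e_{k+1},\dots,e_n$ are unchanged.

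I expect the main obstacle to be the \textsc{S-Seq(k)} case of the single-step lifting lemma, because it requires a uniform inversion of big-step derivations across all the expression shapes $E[\cdots]$ at once — matching up the position $k$ of the small-step schema with the particular premise of the big-step rule that mentions the $k$-th subexpression. Doing this cleanly benefits from the $E[e_1,\dots,e_n]$ notation introduced in the excerpt, but one must still argue case-by-case on $E$, checking that, e.g., stepping the scrutinee of a \texttt{case} corresponds to the shared first premise of both \textsc{B-CaseZ} and \textsc{B-CaseS}, and that stepping the function versus the argument of an application corresponds to the first versus second premise of \textsc{B-App}. Once that bookkeeping is set up, everything else is mechanical, which is why the paper can claim the result follows by "straightforward rule induction."
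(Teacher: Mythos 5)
Your proposal is correct and is exactly the "straightforward rule induction" the paper alludes to without spelling out: forward by induction on the big-step derivation with a multi-step congruence lemma, backward by induction on the multi-step derivation with a single-step (head-expansion) lifting lemma. The only detail worth making explicit is the small auxiliary fact that $\bsj e v$ implies $\val v$, which you rely on when firing \textsc{S-App} on the evaluated argument.
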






Big-step rules are very similar in structure to 
sequent-style rules for natural deduction.\footnote{Hence 
the name ``natural semantics''.} Because 
typical typing rules are also structurally 
similar to these rules, there is a good correspondence
between a language's types and big-step operational semantics.
%
This high level of structural similarity
between semantics and 
typing rules comes with ergonomic benefits.
For example, proving type soundness for these rules is
rather straightforward because the structures 
of the derivations of corresponding judgments match.
In contrast, the small-step 
system's congruence rules typically have
no structural match to typing rules, 
rendering a more cumbersome type soundness proof
that might require, e.g., context substitution lemmas.

Another benefit of using big-step semantics
is that intermediate computations never need to be inspected.
This property manifests in how the application of a given
big-step rule depends solely on the identity and subexpressions
of the expression $e$ in $\bsj e v$; the value $v$ is irrelevant
for syntax-guided rule application. In contrast, rule 
\textsc{M-Step} from \Cref{fig:mss} 
has expression $e_2$ as the righthand 
element of the premiss $\ssj {e_1} {e_2}$
\textit{and} the left-hand element of the premiss $\msj {e_2}{e_3}$.
As a result, every intermediate 
computation $e_2$ must be inspected when using multi-step semantics.

However, the benefits of big-step semantics 
come at a cost: if an expression does not 
evaluate, big-step semantics cannot reason about it. 
As a result, programs that loop forever, get 
stuck on undefined behavior, etc. cannot 
be described by big-step semantics. 
In contrast, small-step semantics can 
describe execution arbitrarily 
deep into infinite loops and
right up to getting stuck.


\section{Motivation: Semantic Preservation}\label{sec:motive}

It has been argued in the
literature~\cite{leroy2009coinductive,chargueraud2013pretty} that
big-step semantics can simplify compiler correctness proofs.
When the dynamic semantics of the source language is different from
the (lower-level) dynamic semantics of the target language, it can be
hard for a proof of semantic preservation to proceed by relating each
language's intermediate states of computation, as they may not share
any convenient structural similarities.  Big-step semantics can avoid
these problems by completely skipping past intermediate states of
computation.
%

Here we exhibit a particularly elementary setting where this problem
still arises: semantic preservation for call-by-value PCF with respect
to a Krivine machine (K machine)
\cite{krivine2007call,douence2007next,leroy1990zinc}.
If we view this problem through the lens of compiler correctness, then
the source and target languages are identical, so the compiler should simply 
copy the source unchanged into the K machine.
However, this case already demonstrates that typical proofs of semantic
preservation for compilation with respect to PCF's small-step
semantics are inconvenient.  This section describes (1) where such
inconveniences arise, (2) how big-step semantics can be used to
alleviate them, and (3) why it is desirable to
generalize this approach to diverging computations using big-stop
semantics.

\paragraph{The K Machine}

A Krivine machine (K machine) is a stack-based, abstract machine
similar to the SECD machine \cite{landin1964mechanical}.
While Krivine \cite{krivine2007call}
designed his machine for call-by-name, 
the concept is quite flexible and has been 
adapted here for our call-by-value PCF. Our presentation 
of the K machine follows the notational conventions of 
\emph{Practical Foundations of Programming Languages} (PFPL) \cite{harper2016practical} (Chapter 28),
which also considers the correctness 
of a K machine with respect to SOS.

Whereas small-step systems 
need various 
congruence rules for propagating computation into 
subexpressions (e.g., \textsc{S-Seq(k)} and
\textsc{SE-Seq(k)}),
a K machine deals with such congruences directly in its 
term language by working with
an explicit stack. The K machine operates on an expression 
by pushing subexpressions to be evaluated onto the stack 
and popping values from the stack.

Each state in the K machine 
takes the form of either $\estate{k}{e}$ or 
$\vstate k e$ where $k$ is a stack of 
\textit{frames} and $e$ is a 
PCF expression. States of the form $\estate{k}{e}$
indicate that the expression $e$ is not yet a value 
and needs to be evaluated, and states of the form $\vstate k v$
indicate that $v$ has been found to be a value and 
is ready to be plugged into the top frame of the stack $k$.
These features of the K machine are
formally described in \Cref{sec:kmach} 
across \Cref{fig:frames,fig:ke}.
In particular, \Cref{fig:ke} defines the judgment 
$\ksj S T$ to indicate that state $S$ transitions to 
state $T$, in addition to its multi-step analogue
$\kmj S T$.
For example, the following rules define the transitions for function
application.
\begin{mathpar}
  \small
  \inferrule[K-App1]{ 
  }{ 
    \ksj {\estate k {\app {e_1} {e_2}}} {} \\\\ {\estate {k;\ffun {e_2}} {e_1}} 
  }

  \inferrule[K-App2]{ 
  }{ 
    \ksj {\vstate {k;\ffun {e}} {v}} {}\\\\ {\estate {k;\farg {v}} {e}} 
  }

  \inferrule[K-App3]{ 
  }{ 
    \ksj {\vstate {k;\farg {\lam f x e}} {v}}  {} \\\\ {\estate k {[\lam f x e/f, v/x]e}} 
  }
\end{mathpar}

\subsection{Proving the Correctness of the K machine}

Now we can discuss compiling PCF to the K machine and semantic
preservation.
Compiling a PCF expression $e$ to the K machine simply yields the
state $\estate \epsilon e$. To verify that this (trivial) compilation
preserves the semantics of expressions, the following two key lemmas
are desired.

\begin{lemma}[Soundness]
  \label[lemma]{lem:sound}
  If $\kmj {\estate \epsilon e} {\vstate \epsilon v} $ then $\msj e v $.
\end{lemma}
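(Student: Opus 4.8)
The plan is to prove \Cref{lem:sound} by a standard technique for relating an abstract machine to a source-level semantics: strengthen the statement so that it works for arbitrary stacks, not just the empty stack, and then proceed by induction on the length of the K machine transition sequence. Concretely, I would introduce a \emph{readback} (or \emph{unload}) function that maps a K machine state back to a PCF expression: a value state $\vstate k v$ reads back to the expression obtained by plugging $v$ into the frames of $k$ from the top down, and an evaluation state $\estate k e$ reads back to the expression obtained by plugging $e$ into the frames of $k$. Writing $k \star e$ for this plugging operation, the generalized claim is: if $\kmj {\estate k e} {\vstate \epsilon v}$ then $\msj {k \star e} v$; and more generally it is cleanest to state it for arbitrary terminal stacks, i.e. if $\kmj {S} {\vstate k v}$ then $\msj {\mathit{readback}(S)} {k \star v}$. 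The original lemma is the special case where both stacks are empty and $\epsilon \star e = e$.

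The main work is the induction. I would induct on the derivation of $\kmj S T$ (equivalently, on the number of $\ksj{}{}{}$ steps). The reflexive base case is immediate since $\msj{}{}$ is reflexive. For the inductive step, I would do a case analysis on the first K machine transition $\ksj S {S'}$. The key lemma driving every case is a \emph{one-step simulation} (or rather, a ``step-or-stutter'' simulation): each K machine transition either leaves the readback unchanged (the administrative steps that merely shuffle subexpressions between the expression slot and the stack, e.g. \textsc{K-App1}, \textsc{K-App2}, and the analogous rules for successors and case analysis, as well as the step that turns $\estate k v$ into $\vstate k v$ when $v$ is a value), or it corresponds to exactly one PCF small step applied deep inside an evaluation context (the ``real'' reductions like \textsc{K-App3} for $\beta$-reduction, and the case-analysis reductions). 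For the administrative steps I get $\mathit{readback}(S) = \mathit{readback}(S')$ and conclude by the induction hypothesis directly; for the genuine reductions I get $\ssj {\mathit{readback}(S)} {\mathit{readback}(S')}$ — crucially using the congruence schema \textsc{S-Seq(k)} to lift the redex reduction through the surrounding frames — and then prepend this step via \textsc{M-Step} to the multi-step sequence given by the induction hypothesis.

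The part I expect to be the main obstacle is setting up the readback/plugging operation and its interaction with the frame structure carefully enough that the congruence argument goes through cleanly. In particular, each frame is an expression with a hole, so plugging a stack $k;\phi$ around an expression $e$ means first plugging $\phi[e]$ and then plugging the rest of $k$; I need the fact that if $\ssj {e} {e'}$ then $\ssj {\phi[e]} {\phi[e']}$ for any single frame $\phi$ — this is exactly an instance of \textsc{S-Seq(k)} with the appropriate $k$ and with all other subexpressions already values (which holds because frames in the K machine only ever have values to the left of the hole) — and then iterate this up the stack by induction on $k$. A secondary subtlety is the value-to-value transition (moving from $\estate k v$ to $\vstate k v$): here I must observe that $v$ is already a value, so no reduction is needed and the readbacks coincide, which requires that the machine only makes this transition on genuine values (or, if the machine instead forces the expression to be a value syntactically, that $\estate k v$ and $\vstate k v$ have literally the same readback). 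Once the plugging lemma and the step-or-stutter classification of machine rules are in hand, the rest is a routine rule induction.
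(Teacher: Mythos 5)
Your proposal is correct and matches the paper's (and PFPL's) intended argument: the paper proves soundness by induction over the K machine transition sequence rather than the PCF dynamics, generalizing over stacks via exactly the kind of readback/unraveling function you describe, with each machine step classified as either administrative (readback unchanged) or a single source reduction lifted through the frames by the congruence schema \textsc{S-Seq(k)}. The subtleties you flag --- the plugging lemma for frames and the value invariant on $\vstate{k}{v}$ states (maintained by \textsc{F-Arg} and the structure of the transitions) --- are precisely the bookkeeping the paper's formalization handles.
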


\begin{lemma}[Completeness]
  \label[lemma]{lem:compl}
  If $\msj e v $ and $\val v$ then $\kmj {\estate \epsilon e} {\vstate \epsilon v} $.
\end{lemma}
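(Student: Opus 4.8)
The plan is to sidestep the intermediate expressions of the reduction $\msj e v$ entirely and instead route the argument through big-step semantics, exploiting that the K machine and PCF's small-step semantics resolve congruences in structurally incompatible ways. First I would apply \Cref{lem:bseq}: since $v$ is a value, $\msj e v$ yields a big-step derivation $\bsj e v$. It then suffices to show that the K machine simulates big-step evaluation. The statement as phrased---starting from the empty stack---does not support a direct induction, so the crucial move is to \emph{generalize over the stack}: I would prove that for every expression $e$, value $v$, and stack $k$ with $\stack k$, if $\bsj e v$ then $\kmj {\estate k e} {\vstate k v}$. \Cref{lem:compl} is then the special case $k = \epsilon$.

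I would prove the generalized claim by rule induction on the derivation of $\bsj e v$. Every non-leaf big-step rule decomposes the evaluation of $e$ into evaluating one or more subexpressions left-to-right, and each such subevaluation becomes a contiguous segment of the K-machine run obtained by invoking the induction hypothesis at the stack $k$ extended with the appropriate frame; the segments are then spliced together by the relevant frame-push, frame-pop, and redex transitions, using transitivity of multi-step K-machine reduction and the fact that a single K-machine step can be prepended to a multi-step run. For instance, \textsc{B-App} has premises $\bsj {e_1} {\lam f x {e_0}}$, $\bsj {e_2} {v_2}$, and $\bsj {[\lam f x {e_0}/f,\, v_2/x]{e_0}} {v}$; the run I would assemble starts at $\estate k {\app {e_1} {e_2}}$, steps to $\estate {k;\ffun {e_2}} {e_1}$ by \textsc{K-App1}, runs the first induction hypothesis to $\vstate {k;\ffun {e_2}} {\lam f x {e_0}}$, steps to $\estate {k;\farg {\lam f x {e_0}}} {e_2}$ by \textsc{K-App2}, runs the second induction hypothesis to $\vstate {k;\farg {\lam f x {e_0}}} {v_2}$, steps to $\estate k {[\lam f x {e_0}/f,\, v_2/x]{e_0}}$ by \textsc{K-App3}, and runs the third induction hypothesis to $\vstate k v$. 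The cases \textsc{B-Succ}, \textsc{B-CaseZ}, and \textsc{B-CaseS} follow the same recipe with the successor and case-analysis transitions of \Cref{fig:ke} replacing the application ones.

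The one rule that escapes this recipe is \textsc{B-Val}, whose sole premise is $\val v$, leaving no subderivation to recurse on. I would dispatch it with a short auxiliary lemma proved by induction on the structure of $v$ (equivalently, on the derivation of $\val v$): for every value $v$ and stack $k$, $\kmj {\estate k v} {\vstate k v}$. A function value, and $\z$, are returned in a single K-machine step; for $\suc {v'}$ one pushes a successor frame, applies the induction hypothesis to $v'$ at the extended stack, and pops the frame with the value restored. Feeding this lemma into the \textsc{B-Val} case completes the rule induction, and instantiating the generalized claim at $k = \epsilon$ (using $\stack \epsilon$) yields \Cref{lem:compl}.

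The only real obstacle is recognizing the stack generalization up front---and, relatedly, noticing that \textsc{B-Val} requires its own structural induction on values. Once the claim is generalized correctly, every case is routine bookkeeping against the K-machine transitions of \Cref{fig:ke} and the closure properties of multi-step K-machine reduction. Crucially, nothing in this argument ever inspects the intermediate expressions of $\msj e v$; this is precisely the ergonomic payoff of the big-step detour, and it is the template that the development in \Cref{sec:kmach} later extends to diverging runs via big-stop semantics.
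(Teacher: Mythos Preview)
Your proposal is correct and follows essentially the same approach as the paper: invoke \Cref{lem:bseq} to pass to big-step, then prove the stack-generalized statement (which the paper isolates as \Cref{lem:bigcomplete}) by rule induction on $\bsj e v$, and instantiate at $k = \epsilon$. Your case analysis and the auxiliary induction on values for \textsc{B-Val} are exactly the details one fills in when carrying out the induction the paper describes.
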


Even though these
lemmas are essentially by-the-book, 
they are already inconvenient to prove,
and we largely follow
PFPL \cite{harper2016practical} to demonstrate this fact.
The problem is that there does not exist a direct proof by rule
induction on the judgments on the left side of the
implication.
Instead, the classic proof approaches require the formal development
of additional theoretical machinery, such as simulation relations or
alternative semantics, in particular for the proof of
\Cref{lem:compl}, which we focus on in this section.
To summarize the results that we find: proofs using big-step semantics
can be carried out by simple rule induction and are easier than other
approaches.

\paragraph{Proving Completeness Directly}

The direct proof of \Cref{lem:compl} (completeness) would 
induct over the
derivation of the antecedent.
So consider proving completeness by rule
induction over the derivation of the multi-step relation $\msj {e} {v}$.
First, it is necessary to generalize the inductive hypothesis
to consider any stack $k$:
If $\msj e v$ and $\val v$ then $\kmj {\estate k e} {\vstate k v}$

Now consider the structure of the multi-step relation's derivation.
One possibility is that 
the relation was derived via \textsc{M-Step}, 
requiring the premisses $\ssj e {e'}$ and 
$\msj {e'} v$ for some expression $e'$.
The latter premiss is ripe for applying the inductive 
hypothesis, but the former is not because it does not use the 
multi-step judgment and $e'$ may not 
be a value. Thus, further reasoning is warranted.

To continue the inductive 
proof, one needs to show the following statement, where 
the value $v$ is the
evaluation of both expressions $e$ and $e'$:
If $\ssj e {e'}$ and $\kmj {\estate k {e'}} {\vstate k v} $, then
$\kmj {\estate k e} {\vstate k v} $.

Given the antecedent,
by transitivity, it would suffice to 
show that $\kmj {\estate k e} {\estate k e'}$.
Similarly, by determinacy, it would suffice to show 
that $\kmj {\estate k {e'}} {\estate k e}$.
However, neither statement holds, as the 
next time the stack $k$ should be present after transitioning from 
a state of the form
$\estate k {e''}$ should be in the state $\vstate k {v''}$ where 
$v''$ is the evaluation of $e''$. 
Thus, neither of the transition sequences of $\estate k {e}$ nor $\estate k {e'}$
should be expected to be an
extension of the other, and the
first point at which their transition sequences
coincide should be expected to be $\vstate k v$.
This circumstance is structurally inconvenient, and 
one is forced to continue 
by inner induction on the structure of $\ssj e {e'}$.

Many cases here pose a problem. Consider
the case from
\textsc{S-Seq(1)} where 
$e = \app {e_1} {e_2}$ and $e' = \app {e_1'} {e_2}$ for some expressions 
$e_1, e_1', e_2$ where $\ssj {e_1} {e_1'}$. 
Here one wants to show that, if 
$\kmj {\estate k {\app {e_1'} {e_2}}} {\vstate k v}$, 
then $\kmj{\estate k {\app {e_1} {e_2}}} {\vstate k v}$.
To continue this case, note that \textsc{K-App1} yields both that
$\ksj {\estate k {\app {e_1} {e_2}}} {\estate {k;\ffun {e_2}} {e_1}}$
and that $\ksj {\estate k {\app {e_1'} {e_2}}} {\estate {k;\ffun {e_2}} {e_1'}}$,
which unwraps the first steps of both transition sequences of interest.
At this point, the proof seems feasible if 
one can apply an inductive hypothesis.
However, both the outer and inner 
inductive hypotheses would 
require the judgment $\msj {e_1'} {v_1}$ 
(at least as a subderivation of $\msj {e_1} {v_1}$),
but this judgment is not readily 
available and it is not clear how to proceed.
%


\paragraph{The Simulation Approach}

Simulation arguments are an available avenue to proving 
\Cref{lem:compl} if one 
is willing to develop some additional 
proof machinery.
Specifically, a simulation requires defining a 
functional
relation $\looparrowright$
between K machine states and PCF expressions such 
that:
\begin{tasks}[style=itemize](2)
  \task $\estate \epsilon e \looparrowright e$
\task $\vstate \epsilon v \looparrowright v$
\end{tasks}
\begin{tasks}[style=itemize]
\task if $\msj e {e'}$ and $S \looparrowright e$, then 
there is some state $T$ such that
$T \looparrowright e'$,
and $\kmj S T$
\end{tasks}

A sensible such simulation relates
each K machine state to the PCF 
expression obtained by unwinding its stack. 
For example, $\estate \fsucc \z \looparrowright \suc \z$. 
This relation clearly 
satisfies the first two necessary properties.
However, verifying the last of the necessary properties 
for this relation is a problem. The issue can be seen 
when considering the case where $e'$ is a value 
and $S$ is chosen to be $\estate \epsilon e$. 
Then this final property is just a slight generalization 
of the completeness lemma itself. Proving this slight generalization 
is no easier.

Another conceptual annoyance for making a simulation argument is 
that there is no bound on how many K machine states may be related 
to an expression, nor is there a bound on how many steps 
of the K machine may be necessary to match a single step in PCF.
As a result, one must at the very 
least perform an inner induction on the structure of 
the expression $e$, which runs into similar 
issues as the basic inductive approach.

Note that the complexity of a simulation argument would increase 
when considering language features like effects, different
source and target languages, and/or more involved compiler passes.

\paragraph{The Big-Step Approach}

The textbook approach to proving \Cref{lem:compl} 
is to avoid using small-step semantics entirely.
First one derives \Cref{lem:bseq} (big/multi equivalence), which 
follows from straightforward inductive arguments.
Then
\Cref{lem:compl} is a consequence of the
following lemma,
which can be directly proved by induction on the judgment $\bsj e v$.

\begin{lemma}[Big-Step Completeness]
\label[lemma]{lem:bigcomplete}
  If $\bsj e v$ then $\kmj {\estate k e} {\vstate k v}$.
\end{lemma}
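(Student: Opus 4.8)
The plan is a direct rule induction on the derivation of $\bsj{e}{v}$, with the stack $k$ kept universally quantified throughout. That generality is exactly what makes the argument go through: whenever a big-step rule recurses into a subexpression, the matching K machine run first pushes a frame onto the current stack, so the induction hypothesis must already be available at the extended stack. Each inductive step is then assembled from K machine transitions using transitivity of $\kmj$, which holds because $\kmj$ is the reflexive--transitive closure of $\mapsto_K$.

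Concretely, for \textsc{B-Succ} we have $\bsj{e}{v}$, hence $\kmj{\estate{k;\fsucc}{e}}{\vstate{k;\fsucc}{v}}$ by the induction hypothesis at stack $k;\fsucc$; composing this with the frame-push step $\ksj{\estate{k}{\suc{e}}}{\estate{k;\fsucc}{e}}$ on the left and the pop step $\ksj{\vstate{k;\fsucc}{v}}{\vstate{k}{\suc{v}}}$ on the right yields $\kmj{\estate{k}{\suc{e}}}{\vstate{k}{\suc{v}}}$. The two case rules are analogous: push a case frame, apply the induction hypothesis to the scrutinee, take the dispatch step (returning control with $e_1$ when the scrutinee evaluates to $\z$, or with the branch $[v/x]e_2$ when it evaluates to $\suc{v}$), and then apply the induction hypothesis to the chosen branch. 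For \textsc{B-App} one chains three uses of the induction hypothesis --- on $e_1$ under the frame $\ffun{e_2}$ (entered via \textsc{K-App1}), on $e_2$ under $\farg{\lam{f}{x}{e}}$ (entered via \textsc{K-App2}), and on the substituted body $[\lam{f}{x}{e}/f, v_2/x]e$ (entered via \textsc{K-App3}) --- threading the K machine steps between successive uses.

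The base case \textsc{B-Val} ($e = v$, with $\val{v}$) is the one case not closed by the induction hypothesis of the main induction, since its premise lies in the value judgment rather than in the evaluation judgment. I would discharge it with an auxiliary fact: for every value $v$ and every stack $k$, $\kmj{\estate{k}{v}}{\vstate{k}{v}}$. This is itself a short induction on the derivation of $\val{v}$: the cases $v = \z$ and $v = \lam{f}{x}{e}$ are single K machine steps, while $v = \suc{v'}$ pushes a successor frame $\fsucc$, applies the inner induction hypothesis to $v'$, and pops.

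There is no genuine obstacle here, which is precisely the contrast this section draws with the small-step arguments above: the need for a simulation relation, a global bound on the number of related K machine states, or an awkward inner induction on expression structure all evaporate. The only point requiring a moment's care is recognizing that the statement must quantify over an arbitrary stack $k$, as it already does; granted that, the proof is a mechanical recursion mirroring the shape of the big-step rules against the K machine's frame discipline, and it transcribes to Agda essentially verbatim.
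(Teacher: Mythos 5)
Your proposal is correct and matches the paper's approach: the paper proves this lemma by direct rule induction on the derivation of $\bsj e v$ with the stack $k$ held general, exactly as you describe, and your case-by-case assembly of K machine transitions (including the auxiliary induction on $\val v$ for the \textsc{B-Val} case) fills in the details the paper leaves implicit.
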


The key to the proof of \Cref{lem:bigcomplete} is that the
big-step derivation tracks a resulting value $v$ for
every relevant subexpression of $e$. With such a value always handy,
the recovery of such evaluations is trivial, and big-step semantics
avoids the problem encountered by small-step semantics.

\subsection{Accounting for Diverging Computations}

If one assumes that programs are well-typed, then it is sufficient for
\Cref{lem:sound,lem:compl} to focus on computations 
that result values.
Thanks to type-safety, computations cannot get stuck,
so the only two possible outcomes for computations 
are to converge to a value or to
diverge and loop forever. Because all divergence is observationally 
identical in a pure language, one only needs to ensure 
that every diverging expression in PCF is compiled to some diverging 
state in the K machine. This property clearly 
holds because \Cref{lem:sound,lem:compl} clearly show the 
contrapositive---converging expressions are compiled to 
converging states.

However, the story changes 
if one considers effects, which real-world languages virtually
always have. In a setting 
with effects, diverging computations can be distinguished
as they may induce different effects. These distinctions arise
even with very simple writer-monad-style effects that do not 
affect computation, like printing. To compile such 
a language correctly, it is necessary to ensure 
that the observable effects emitted by the source language and 
target language match, regardless of program termination.

In \Cref{sec:eff,sec:kmach}, we extend both SOS and 
the K machine with
effects, defining the judgments
\[
  \begin{array}{ccc}
    \msej {e_1} {e_2} a & \hspace{2em} \text{and} \hspace{2em} & \kemj{S}{T}{a}
  \end{array}
\]
where $e_i$ are expressions, $S$ and $T$ are K machine states,
and $a$ is an abstract sequence of effects.
%

In a setting with effects, additional properties 
analogous to \Cref{lem:sound,lem:compl}
must be formalized to handle diverging computations.
In particular, our goal is to prove \Cref{lem:soundd,lem:compld}.
Essentially, these properties state that, if 
one dynamic semantics emits some sequence of effects at some point, 
the other semantics emits a matching sequence 
of effects.
Without these latter two lemmas, programs that 
run indefinitely cannot be considered to have preserved semantics in the K machine.

\begin{lemma}[Divergent Soundness]
  \label[lemma]{lem:soundd}
  If $\kemj {\estate \epsilon e} {S} a$ then $\msej e {e'} a$ for some expression $e'$.
\end{lemma}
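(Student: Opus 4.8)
The plan is to avoid big-step machinery entirely: unlike divergent completeness, divergent soundness is a machine-to-SOS direction that can be handled directly, by the same \emph{unwinding} argument that proves \Cref{lem:sound}, now instrumented with effects. Write $\widehat{(-)}$ for the map that collapses a K-machine state into a PCF expression by plugging its focused term into the frames of its stack from the innermost outward, so that $\widehat{\estate{k}{e}} = \widehat{\vstate{k}{e}}$ and, e.g., $\widehat{\estate{\fsucc}{\z}} = \suc\z$; in particular $\widehat{\estate{\epsilon}{e}} = e$. The crux is that $\widehat{(-)}$ exhibits the K machine as a stuttering refinement of effectful SOS: the ``search'' transitions that push a frame (such as \textsc{K-App1}) and the ``return'' transitions that plug a value into a frame (such as \textsc{K-App2}) leave $\widehat{(-)}$ unchanged and emit no effect, whereas each genuine reduction transition --- $\beta$, case-on-zero, case-on-successor, and effect emission --- maps under $\widehat{(-)}$ to exactly one effectful small step carrying the same emitted effect.

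Concretely, I would proceed in four steps. (1) A \emph{stack-congruence} lemma: if $\ssej{e_1}{e_2}{a}$ then $\ssej{\widehat{\estate{k}{e_1}}}{\widehat{\estate{k}{e_2}}}{a}$, proved by induction on $k$ with one application of \textsc{SE-Seq($k$)} per frame (each frame is an expression with its active subexpression singled out, so a stack behaves as an evaluation context). (2) A \emph{single-step correspondence}: if $\kesj{S}{T}{a}$ then $\msej{\widehat{S}}{\widehat{T}}{a}$, by case analysis on the K-machine rule --- administrative steps give $\widehat{S}=\widehat{T}$ and $a=1$, so \textsc{ME-Refl} suffices, while each reduction rule reduces to its SOS redex rule (\textsc{SE-App}, \textsc{SE-CaseZ}, \textsc{SE-CaseS}, \textsc{SE-Effect}) lifted through the stack by step (1). (3) A \emph{multi-step correspondence}: if $\kemj{S}{T}{a}$ then $\msej{\widehat{S}}{\widehat{T}}{a}$, by induction on the K-machine multi-step derivation, using (2) on each step and the routine transitivity of $\stackrel{*}{\mapsto}$, which concatenates emitted-effect sequences monoidally. (4) Instantiate (3) with initial state $\estate{\epsilon}{e}$, whose unwinding is $e$: this gives $\msej{e}{\widehat{S}}{a}$, and taking $e' := \widehat{S}$ discharges the existential. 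No typing hypothesis is needed, since this direction never has to exclude stuck states; and --- tellingly for the big-stop method --- the proof never distinguishes converging from diverging runs, so the terminating-case argument for \Cref{lem:sound} generalizes essentially verbatim.

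The main obstacle is step (2) together with step (1): one has to set up $\widehat{(-)}$ so that the numerous administrative K-machine transitions are provably invisible to it, and one has to check that the effect-emitting transition $\kesj{\estate{k}{\eff{a}{e}}}{\estate{k}{e}}{a}$ is matched, after propagation through $k$, by \emph{precisely} the same effect $a$ --- conceptually trivial, but this is where the bulk of the case work lives, and an inopportune formulation of the effectful K-machine rules (say, one fusing a search step with a reduction) would force the correspondence in (2) to emit several small steps and complicate the monoidal bookkeeping in (3).
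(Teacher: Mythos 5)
Your proof is correct and follows essentially the same route the paper takes for the soundness direction: induct on the K-machine (multi-)step derivation rather than on the PCF dynamics, relating machine states to expressions by unwinding the stack, with administrative transitions invisible and reduction/effect transitions matching single effectful SOS steps lifted through the stack via \textsc{SE-Seq($k$)}. The paper's only twist is that it states and mechanizes the lemma with the big-stop judgment as the target (recoverable from your multi-step conclusion via \Cref{thm:stseeq}), while explicitly noting that small-step is an equally fit target for this direction.
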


\begin{lemma}[Divergent Completeness]
  \label[lemma]{lem:compld}
  If $\msej e {e'} a$ then $\kemj {\estate \epsilon e} {S} a$ for some 
  state $S$.
\end{lemma}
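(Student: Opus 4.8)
The plan is to use what the paper calls the \emph{big-stop method}: rather than inducting directly on the derivation of $\msej e {e'} a$ — which runs into exactly the structural obstructions already described for \Cref{lem:compl}, now aggravated by the fact that a diverging computation has no big-step derivation to fall back on — I would replay the proof of the terminating case (\Cref{lem:compl} via \Cref{lem:bigcomplete}) over big-stop derivations. Concretely, \Cref{sec:eff} establishes the effectful analogue of \Cref{thm:stseeq} (and of \Cref{lem:bseq}), so from $\msej e {e'} a$ I first obtain a big-stop derivation of $\bstej e {e'} a$. It then suffices to prove a big-stop analogue of \Cref{lem:bigcomplete}: for every stack $k$, if $\bstej e {e'} a$ then $\kemj {\estate k e} S a$ for some state $S$. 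Instantiating $k = \epsilon$ and forgetting $e'$ and $S$ — the lemma only asks that \emph{some} state be reached while emitting the trace $a$ — yields \Cref{lem:compld}.

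This analogue is proved by rule induction on the big-stop derivation. On the rules inherited from the big-step semantics the argument is verbatim the one used for \Cref{lem:bigcomplete}, except that the effect traces of the premises must be concatenated, using the monoid structure of traces and a routine transitivity lemma for the effectful multi-step K-machine judgment. For example, for the rule deriving $\suc e \Downmapsto \suc e'$ from $e \Downmapsto e'$ with trace $a$, the K machine takes a single no-effect transition from $\estate k {\suc e}$ to $\estate{k;\fsucc}{e}$, the inductive hypothesis at stack $k;\fsucc$ supplies $\kemj{\estate{k;\fsucc}{e}}{S}{a}$, and composing gives $\kemj{\estate k {\suc e}}{S}{a}$. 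The redex rules (\textsc{B-CaseZ}, \textsc{B-CaseS}, \textsc{B-App}) and the effect-emitting rule are handled the same way: push the relevant frame, descend into each subexpression in left-to-right order, pop its value, fire the matching K-machine step, descend into the contractum, and concatenate the traces throughout. The genuinely new case is the nondeterministic stop schema: the K machine simply follows the big-stop derivation as far as it commits — fully evaluating the subexpressions run to values and running the partially evaluated one via the inductive hypothesis — and then halts; in the degenerate case where evaluation stops at once, the K machine halts immediately by reflexivity.

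The main obstacle I anticipate is picking an induction hypothesis strong enough for the inherited redex rules while remaining trivial to discharge for the stop rule. When a subexpression's subderivation is \emph{partial}, the bare existential statement above is exactly what is needed, and it suffices because we only have to exhibit \emph{some} final state — we can stop chaining K-machine transitions there. But when a subexpression's subderivation ends in a \emph{value}, firing the next K-machine step requires knowing the K machine actually reaches the corresponding \emph{value} state $\vstate{k'}{v}$, not merely some state. I therefore expect to carry a two-pronged statement: the existential claim in general, together with the sharper claim that if $\bstej e {e'} a$ and $\val{e'}$ then $\kemj{\estate k e}{\vstate k {e'}}{a}$. In the value case this sharper claim is essentially the effectful version of \Cref{lem:bigcomplete} — a big-stop derivation ending in a value coincides with a big-step derivation, and such a value necessarily carries the complete trace of its subcomputation — so it can be proved in tandem or cited directly. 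Getting the handoff between the two forms of the hypothesis right across the redex rules is the only delicate point; everything else is bookkeeping with the effect monoid and no-effect K-machine steps.
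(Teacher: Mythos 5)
Your proposal is correct and follows essentially the same route as the paper: convert the multi-step judgment to a big-stop derivation via the effectful stop/multi equivalence (\Cref{thm:stseeq}), generalize over the stack $k$, and proceed by rule induction on the big-stop derivation, reusing the big-step argument for the inherited rules and adding straightforward cases for the stop schema. Your ``two-pronged'' induction hypothesis is exactly the paper's decomposition into \Cref{lem:stcomplc} (the sharper value-ending claim, proved first) and \Cref{lem:stcompld} (the existential claim, built on top of it), which the paper packages as the \emph{big-stop method}.
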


Unfortunately, while big-step semantics are the nicest way 
of showing \Cref{lem:compl}, they have no hope of showing 
the diverging analogue, as big-step semantics can only 
describe converging computations, not diverging ones.
It therefore appears that one must fall back to less favorable 
approaches like simulation.

For this reason, this article develops 
big-\textit{stop} semantics, which can maintain 
the niceties of
big-step-style reasoning even for diverging computations.
Specifically, we show in \Cref{sec:kmach} that
\Cref{lem:compld} can be proven with straightforward rule induction.
The key step is formulating and proving the big-stop analog of
\Cref{lem:bigcomplete} to account for diverging computations.
Alternative approaches for handling divergence are 
discussed in \Cref{sec:related} and compared to big-stop.


\section{Pure Big-Stopping}\label{sec:stop}

This section develops big-stop semantics to 
conveniently reason about potentially nonterminating 
programs while maintaining the ease of using big-step 
semantics. Moreover, this development proceeds in a minimally 
invasive manner, making only a small adaptation to the rules of
big-step semantics.
The essence of the approach is to take a standard big-step system 
and add just one rule schema 
to allow for arbitrarily stopping computation. 
The remainder of this section shows how to perform such 
a big-stop extension to our effectful, call-by-value PCF.
However, nothing about the technique is intrinsically 
tied to PCF, so the same technique should apply to 
other languages. We exhibit such an extension 
for an imperative language in \Cref{sec:imp}.

\iffalse
\subsection{Big-Step}

\begin{figure}
\begin{mathpar}
 \inferrule[B-Val]{
       \val v
    }{
        \bsj {v} {v} 
    }

  \inferrule[B-CaseZ]{ 
    \bsj e \z 
    \\ 
    \bsj {e_1} {v_1} 
  }{ 
    \bsj  {\ifz {e_1} x {e_2} e} {v_1} 
  }
    
  \inferrule[B-CaseS]{ 
    \bsj e {\suc v} 
    \\ 
    \bsj {[v/x]e_2} {{v_2}} 
  }{ 
    \bsj  {\ifz  {e_1} x {e_2} e} {v_2}
  }
    
  \inferrule[B-App]{ 
    \bsj {e_1} {\lam f x e} 
    \\ 
    \bsj {e_2} {v_2} 
    \\
    \bsj {[{\lam f x e}/f, v_2/x]e} {v} 
  }
  { \bsj {\app {e_1} {e_2}} {v}}
\end{mathpar}
\caption{Big-step semantics for pure call-by-value PCF}
\label{fig:bigpure}
\end{figure}

\begin{figure}
\begin{mathpar}
 \inferrule[BE-Val]{
       \val v
    }{
        \bsej {v} {v} 1
    }

  \inferrule[BE-CaseZ]{ 
    \bsej e \z a
    \\ 
    \bsej {e_1} {v_1} b
  }{ 
    \bsej  {\ifz {e_1} x {e_2} e} {v_1} {ab}
  }
    
  \inferrule[BE-CaseS]{ 
    \bsej e {\suc v} a
    \\ 
    \bsej {[v/x]e_2} {{v_2}} b
  }{ 
    \bsej  {\ifz {e_1} x {e_2} e} {v_2} {ab}
  }
    
  \inferrule[BE-App]
  { \bsej {e_1} {\lam f x e} a
    \\ 
    \bsej {e_2} {v_2} b
    \\ 
    \bsej {[{\lam f x e}/f, v_2/x]e} {v} c
  }{ 
    \bsej {\app {e_1} {e_2}} {v} {abc}
  }

  \inferrule[BE-Eff]{
    \bsej e v b
  }{
    \bsej {\eff a e} v {ab}
  }
\end{mathpar}
\caption{Big-step semantics for effectful call-by-value PCF}
\label{fig:bigeffect}
\end{figure}

Before we can extend to big-stop, we first need a big-step system.
Such a system for our pure call-by-value PCF is given by the rules of
\Cref{fig:bigpure}. These rules define the judgment $\bsj e v$ 
which means that expression $e$ evaluates to value $v$.

Crucially, these big-step semantics match those given 
by the small-step system of \Cref{fig:sss}.
This equivalence can be formalized by \Cref{lem:bseq},
which is a classic lemma proved in many undergraduate 
programming language classes. Really, this equivalence 
states that the two systems agree on terminating computations,
which is usually all that 
is needed in a type-safe language without effects.

\begin{lemma}[big/small pure equivalence]\label{lem:bseq}
    For all expressions $e$ and values $v$,
    $\bsj e v \iff \msj e v$
\end{lemma}

The effectful analogue of this big-step system is 
given in \Cref{fig:bigeffect}. 
This system uses the judgment $\bsej e v a$ to 
mean that expression $e$ evaluates to value $v$
while inducing the effects indicated by $a$.

Like the pure system, this system also has 
an equivalence with its small-step analogue.
This equivalence is formalized by \Cref{lem:bseeq}.
Once again, this equivalence only applies to terminating 
computations.

\begin{lemma}[big/small effectful equivalence]\label{lem:bseeq}
    For all expressions $e$, values $v$, and effects $a$,
    $\bsej e v a \iff \msej e v a$
\end{lemma}

While the proof of \Cref{lem:bseeq} is not particularly 
different from the classic proof of \Cref{lem:bseq},
we provide an Agda proof of the statement
in a setting with types. \david{how should this be cited?}
\fi

\begin{figure}
  \small
  \def \MathparLineskip {\lineskip=0.3cm}
  \begin{mathpar}
 \inferrule[St-Stop(k)]{
        \forall 1 \leq i \leq k. \,
        \bstj {e_{i}} {e'_{i}} 
        \\
        \forall 1 \leq i < k. \,
        \val {e'_i}
        \\
        \forall k+1 \leq i \leq n. \,
        e'_i = e_i
    }{
        \bstj {E[e_1, \dots, e_n]} {E[e'_1, \dots, e'_n]} 
    }

  \inferrule[St-CaseZ]{ 
    \bstj e \z 
    \\ 
    \bstj {e_1} {e'_1} 
  }{ 
    \bstj  {\ifz {e_1} x {e_2} e} {e'_1} 
  }

  \inferrule[St-CaseS]{ 
    \bstj e {\suc v} 
    \\ 
    \val v
    \\ 
    \bstj {[v/x]e_2} {{e'_2}} 
  }{ 
    \bstj  {\ifz {e_1} x {e_2} e} {e'_2} 
  }
    
  \inferrule[St-App]{ 
    \bstj {e_1} {\lam f x e} 
    \\ 
    \bstj {e_2} {v_2} 
    \\\\
    \val {v_2}
    \\ 
    \bstj {[{\lam f x e}/f, v_2 /x]e_2} {e'} 
  }{ 
    \bstj {\app {e_1} {e_2}} {e'} 
  }
\end{mathpar}
\vspace{-2ex}
\caption{Big-stop semantics for pure call-by-value PCF}
\vspace{-1.5ex}
\label{fig:stoppure}
\end{figure}

The fundamental idea of big-stop semantics is to add new 
rules to big-step semantics that
enable the semantics to nondeterministically stop a computation 
midway through. Each of the possible resulting
partially-computed expressions corresponds
to computing only a prefix of the complete sequence of
computation steps. Even if the complete 
computation sequence is infinite because the 
computation is nonterminating, its prefixes 
will be finite and inductively capturable. 
Capturing all
such prefixes of computation is sufficient to 
describe both terminating and nonterminating 
computations uniformly---terminating 
computations will simply have finitely many prefixes,
one of which ends with a value.
These prefixes also allow big-stop semantics 
to rival its coinductive alternatives
in reasoning about infinite computations,
in the same way that 
infinite streams can be equivalently reasoned about
directly using coinduction
or via their prefixes using induction.

When adapting a big-step system to big-stop,
it may also be necessary to 
add explicit premisses to existing big-step rules 
to ensure certain expressions 
are values. In the original big-step system,
such expressions would necessarily be values,
and thus such premisses are optional in the big-step system, while 
they are required in the big-stop extension. 
The details of this extension are exemplified 
for pure PCF in the following paragraphs.
Effectful PCF is covered in \Cref{sec:eff}, and 
\Cref{sec:opt} shows techniques that further minimize
the changes required for
this kind of extension.

Formally, the big-stop relation is given
by the judgment $\bstj e {e'}$, which means 
that the expression $e$ partially evaluates to the 
expression $e'$. 
The rules for this judgment are given in \Cref{fig:stoppure} 
and described in the following paragraphs.
Most of the rules correspond to \Cref{fig:bigpure}'s
big-step rules. Some examples of the use of
the big-stop rules can be found 
in \Cref{sec:examples}.

The key new feature in \Cref{fig:stoppure} is 
the rule schema
\textsc{St-Stop(k)}, which represents 
the core idea of nondeterministically halting 
evaluation. This rule allows any expression 
$E[e_1, \dots, e_n]$ to stop evaluating
after reducing only some of its 
subexpressions $e_i$. In accordance with 
the evaluation order of PCF, these subexpressions 
must be reduced from left to right
so that $e_i$ is reduced to a value when 
$i < k$ and is left untouched when $i > k$---only $e_k$ 
may be not fully evaluated.
Moreover, when instantiated at $k=0$, this rule schema 
allows all expressions to immediately stop evaluation 
by big-stopping to themselves, 
subsuming the role of 
\textsc{B-Val}, which already halted computation for values.
Finally, this rule schema is suggestively similar 
to the \textit{small}-step rule schema 
\textsc{S-Seq(k)} from \Cref{fig:sss},
hinting at the role it plays for big-stop 
semantics. 

To be explicit, the schema \textsc{St-Stop(k)} stands in for the following 
five rules:
\begin{small}
  \def \MathparLineskip {\lineskip=0.3cm}  
\begin{mathpar}
\inferrule[St-Stop]{
}{
  \bstj e e
}

\inferrule[St-Succ]{
  \bstj {e} {e'}
}{
  \bstj {\suc e} {\suc e'}
}

\inferrule[St-Case]{
  \bstj {e} {e'}
}{
  \bstj {\ifz {e_1} x {e_2} {e}} {\ifz {e_1} x {e_2} {e'}}
}
\\
\inferrule[St-App1]{
  \bstj {e_1} {e_1'}
}{
  \bstj {\app {e_1} {e_2}}  {\app {e_1'} {e_2}}
}

\inferrule[St-App2]{
  \bstj {e_1} {v_1}
  \\
  \val {v_1}
  \\
  \bstj {e_2} {e_2'}
}{
  \bstj {\app {e_1} {e_2}}  {\app {v_1} {e_2'}}
}
\end{mathpar}
\end{small}

Aside from \textsc{St-Stop(k)}, 
the other rules are nearly unchanged from their big-step 
counterparts. However, some additional premisses are added 
to ensure that certain subexpressions are values. 
For example, in \textsc{St-CaseS}, the premiss 
$\val v$ is added to ensure that $e$ fully 
evaluates to $\suc v$. This additional premiss is necessary to 
ensure that partial evaluations of $e$ are not prematurely 
plugged into the case expression. In big-\textit{step} semantics,
such a premiss is unnecessary because the dynamics
already ensure that $\suc v$ is a value.
As big-\textit{stop} semantics allows for partial 
evaluations, this invariant is no longer present.
Thus the new value premisses do not comprise 
new restrictions for the semantic rules, 
but rather explicitly maintain pre-existing restrictions. 
An alternative formulation of big-stop semantics could
simply use a big-step judgment for these premisses 
to maintain the same invariant, but it requires fewer rules 
overall to have only one kind of judgment.

One subtlety of the big-stop extension is 
that, while a big-step system may be deterministic, 
big-stop is always nondeterministic---every expression 
can either reduce as normal 
or halt. However, this nondeterminism 
is as benign as the multi-step judgment's nondeterminism---each
can only relate expressions to partial evaluations that arise
during typical evaluation (regardless of whether that 
typical evaluation is deterministic or not).
Moreover, this nondeterminism poses no problems for
proving with big-stop semantics, as exemplified 
by the Agda proofs referenced in \Cref{sec:eff,sec:kmach} \cite{artifact},
which each maintain the straightforward nature of their big-step counterparts.

\subsection{Properties}

The following property,
\Cref{thm:eq}, is the 
key to showing that big-stop 
semantics captures precisely the notion we are interested in.
That is, it shows big-stop semantics 
captures precisely the notion of evaluation typically built upon 
small-step semantics.

\begin{theorem}[Stop/Multi Equivalence]\label{thm:eq}
    For all expressions $e,e'$,\;\; $\bstj e {e'} \iff \msj e {e'}$
\end{theorem}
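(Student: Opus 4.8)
The plan is to prove the two implications of the biconditional separately, each by a single rule induction, reducing everything to a handful of routine auxiliary inductions. For the forward direction, $\bstj e {e'} \Rightarrow \msj e {e'}$, I would induct on the derivation of $\bstj e {e'}$. Each of the \emph{redex} rules \textsc{St-CaseZ}, \textsc{St-CaseS}, \textsc{St-App} is discharged uniformly: apply the induction hypothesis to each big-stop premise to get a multi-step judgment, then assemble the conclusion as a multi-step congruence into the active subexpression, followed by the matching small-step redex rule (\textsc{S-CaseZ}, \textsc{S-CaseS}, or \textsc{S-App}, whose value side conditions are exactly the $\val{-}$ premises already carried by the corresponding big-stop rule), followed by the multi-step coming from the last premise, all glued by transitivity of $\msj{}{}$. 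The schema case \textsc{St-Stop(k)} needs only the congruence part: feed each $\bstj{e_i}{e_i'}$ through the induction hypothesis and chain congruences at subexpression positions $1, \dots, k$, where the premises $\val{e_i'}$ for $i<k$ are precisely what each successive congruence step requires; since $e_i' = e_i$ for $i>k$, the chain ends at $E[e_1',\dots,e_n']$. Hence this direction rests on two easy inductions on multi-step derivations: transitivity, and the congruence lemma that $\msj{e_j}{e_j'}$ together with $\val{v_1},\dots,\val{v_{j-1}}$ implies $\msj{E[v_1,\dots,v_{j-1},e_j,\dots,e_n]}{E[v_1,\dots,v_{j-1},e_j',\dots,e_n]}$, the latter proved by applying \textsc{S-Seq(j)} at each single step.

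For the backward direction, $\msj e {e'} \Rightarrow \bstj e {e'}$, I would induct on the derivation of $\msj e {e'}$. The \textsc{M-Refl} case is immediate from \textsc{St-Stop(k)} at $k=0$. The \textsc{M-Step} case, with premises $\ssj e {e_1}$ and $\msj{e_1}{e'}$, follows once we have the key \emph{prepend lemma}: if $\ssj e {e_1}$ and $\bstj{e_1}{e'}$, then $\bstj e {e'}$; we simply apply it to $\ssj e {e_1}$ and the induction hypothesis $\bstj{e_1}{e'}$. One might hope to route instead through the big-step/multi-step equivalence of \Cref{lem:bseq}, but big-stop results need not be values, so that does not directly apply. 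The prepend lemma is the crux of the whole theorem and the part I expect to take real care.

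I would prove the prepend lemma by induction on the derivation of $\ssj e {e_1}$, with an inversion on the derivation of $\bstj{e_1}{e'}$. When $\ssj e {e_1}$ is one of the redex rules (\textsc{S-App}, \textsc{S-CaseZ}, \textsc{S-CaseS}), $e$ is itself the redex, so we just reapply the corresponding big-stop rule (\textsc{St-App}, \textsc{St-CaseZ}, \textsc{St-CaseS}) with $\bstj{e_1}{e'}$ plugged into its last premise and \textsc{St-Stop} supplying the trivial big-stop judgments for the already-evaluated positions; no recursion is needed here. The interesting case is \textsc{S-Seq(k)}, where $e = E[a_1,\dots,a_n]$ steps to $e_1 = E[a_1,\dots,a_{k-1},a_k',a_{k+1},\dots,a_n]$ with $\ssj{a_k}{a_k'}$ and $a_1,\dots,a_{k-1}$ values. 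Here we invert $\bstj{e_1}{e'}$: if it is rooted at a redex rule (possible when $E$ is an application or a case), we recurse on the subderivation $\ssj{a_k}{a_k'}$ to repair that subexpression and reapply the redex rule; if it is rooted at \textsc{St-Stop(m)}, we split on whether $m \ge k$ or $m < k$. If $m \ge k$, reuse \textsc{St-Stop(m)}, recursing on $\ssj{a_k}{a_k'}$ paired with the big-stop subderivation of $a_k'$ that instance provides. If $m < k$, the big-stop of $e_1$ does not reach position $k$, so $e'$ and $e_1$ agree there; we instead apply \textsc{St-Stop(k)}, recursing on $\ssj{a_k}{a_k'}$ paired with the reflexive \textsc{St-Stop} derivation of $a_k'$, and we use the small observation that values are irreducible under big-stop --- a one-line induction, since a value of shape $E[\,]$ can only big-stop by \textsc{St-Stop} and $\suc v$ only by \textsc{St-Stop} or \textsc{St-Succ} with $v$ already a value --- to conclude that the already-reduced prefix positions $1,\dots,k-1$ stay fixed and remain values. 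None of this needs typing information, matching the untyped statement of \Cref{thm:eq}; the only subtle bookkeeping is the \textsc{S-Seq(k)} / \textsc{St-Stop(m)} interaction just described.
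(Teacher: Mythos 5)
Your proposal is correct and takes essentially the same route as the paper, whose proof is stated only as ``induction on the derivation of each judgment'': you carry out exactly those two inductions and correctly identify the auxiliary facts they need (multi-step transitivity and congruence for the forward direction; the prepend/head-expansion lemma and the irreducibility of values under big-stop for the backward direction). In particular, your handling of the \textsc{S-Seq(k)}/\textsc{St-Stop(m)} interaction, including the observation that when $m<k$ the value premisses force the already-evaluated prefix to be fixed, is the right resolution of the only delicate case.
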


\begin{proof}
    This equivalence follows by induction on the 
    derivation of each judgment. 
\end{proof}

The key benefit of big-stop semantics is not that it 
infers any new relations beyond the multi-step judgment, but rather 
that it allows the derivation of the same multi-step relations 
in a manner more ergonomic for use in proofs.
From \Cref{thm:eq}, it immediately 
follows that any property one might state 
using multi-step semantics can be equivalently 
stated using big-stop semantics.
Already, this result covers most of the gap between 
big- and small-step semantics, and the 
only cost is a minor extension to standard big-step
operational semantics.

In light of \Cref{thm:eq}, one should also expect that
big-stopping indeed captures stopped evaluations. 
This property can be formally expressed
with the following statement, which says
that stopped expressions could continue to be evaluated 
as if they had not stopped. That is, the 
big-stop relation is transitive just like the multi-step relation.

\begin{lemma}[Transitivity]
    If $\bstj {e_1} {e_2}$ and 
    $\bstj {e_2} {e_3}$, then 
    $\bstj {e_1} {e_3}$.
\end{lemma}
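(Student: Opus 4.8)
The plan is to route through the multi-step relation rather than to induct directly on the big-stop derivation. First I would apply \Cref{thm:eq} to both hypotheses, obtaining $\msj {e_1} {e_2}$ and $\msj {e_2} {e_3}$. The goal is then to show $\msj {e_1} {e_3}$, after which a final application of \Cref{thm:eq} in the reverse direction delivers $\bstj {e_1} {e_3}$.

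The one genuine sub-lemma is transitivity of the multi-step relation itself: if $\msj {e_1} {e_2}$ and $\msj {e_2} {e_3}$ then $\msj {e_1} {e_3}$. I would prove this by induction on the derivation of $\msj {e_1} {e_2}$. In the \textsc{M-Refl} case, $e_1 = e_2$, so the second hypothesis is already the goal. In the \textsc{M-Step} case, the derivation supplies $\ssj {e_1} {e_1'}$ and $\msj {e_1'} {e_2}$ for some $e_1'$; the inductive hypothesis applied to the latter together with $\msj {e_2} {e_3}$ yields $\msj {e_1'} {e_3}$, and one more use of \textsc{M-Step} gives $\msj {e_1} {e_3}$. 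Composing this with the two invocations of \Cref{thm:eq} above completes the argument.

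There is no real obstacle here, but it is worth flagging why one should not attempt a direct induction on the derivation of $\bstj {e_1} {e_2}$. The schema \textsc{St-Stop(k)} produces partially-evaluated expressions $e_2 = E[e_1', \dots, e_n']$, so one would be forced to analyze how $\bstj {e_2} {e_3}$ decomposes across the subexpressions $e_i'$ according to where $e_3$'s own stopping point falls — essentially re-deriving the bookkeeping that \Cref{thm:eq} already packages. Going through $\msj {\cdot} {\cdot}$ sidesteps this entirely, and the auxiliary transitivity of the multi-step relation is dispatched by the one-line induction described above.
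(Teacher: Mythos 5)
Your proof is correct, but it takes a different route from the paper, which establishes transitivity by direct induction over the big-stop derivations themselves. Your detour through \Cref{thm:eq} is legitimate and non-circular given the paper's ordering (the equivalence theorem is stated and proved before transitivity), and the auxiliary transitivity of $\msj{\cdot}{\cdot}$ is indeed the one-line induction you describe. What your route buys is brevity: transitivity becomes a near-immediate corollary, and you correctly identify the pain point of the direct argument — in the \textsc{St-Stop(k)} case one must case-analyze how the \emph{second} derivation $\bstj{e_2}{e_3}$ acts on $E[e'_1,\dots,e'_n]$ (whether it stops again at some index $j$ or fires a redex rule) and then thread the inductive hypothesis through each subexpression. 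What the paper's direct induction buys is self-containedness: it does not lean on \Cref{thm:eq}, whose harder direction ($\msj{e}{e'} \Rightarrow \bstj{e}{e'}$) itself needs a head-expansion lemma (prepending a single small step to a big-stop derivation) that is a close cousin of transitivity — so in a formalization one may well prove a transitivity-like lemma for big-stop anyway before the equivalence is available. Either way, your argument is sound.
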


\begin{proof}
This property follows from induction over the derivations.
\end{proof}

An important observation is that the big-stop 
system does not interfere with
standard big-step evaluation to values.
Call a derivation \emph{strict} if every use of the rule schema
\textsc{St-Stop(k)} in that derivation 
has the conclusion $\bstj v v$ for some value $v$.
Then there is a canonical isomorphism between strict
derivations of $\bstj e v$ and derivations of $\bsj e v$ where
each rule \textsc{St-X} corresponds to \textsc{B-X},
except for \textsc{St-Stop(0)} and \textsc{St-Stop(1)},
which correspond to \textsc{B-Val} and \textsc{B-Succ}, respectively.

\begin{lemma}[Derivation Isomorphism]\label[lemma]{lem:iso}
  For all expressions $e$ and values $v$, strict derivations of 
  $\bstj e v$ are isomorphic to derivations of $\bsj e v$.
\end{lemma}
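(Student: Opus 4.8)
The plan is to build the isomorphism explicitly, as a pair of mutually inverse maps defined by structural recursion on derivation trees. Note that it does not suffice to invoke \Cref{thm:eq} (or \Cref{lem:bseq}): those equate the two \emph{relations}, whereas here we need a bijection at the level of derivations, and the two sides genuinely admit several derivations each — e.g. $\bstj {\suc\z} {\suc\z}$ is provable both by \textsc{St-Stop(0)} and by \textsc{St-Stop(1)} over $\bstj\z\z$, matching the two big-step derivations of $\bsj{\suc\z}{\suc\z}$ by \textsc{B-Val} and \textsc{B-Succ} — so the proof must match these multiplicities, not merely provability.

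First I would pin down which instances of the schema \textsc{St-Stop(k)} can appear in a strict derivation. The conclusions of the instances at applications and case expressions (\textsc{St-App1}, \textsc{St-App2}, \textsc{St-Case}) are never values, so strictness rules them out; what survives is \textsc{St-Stop} applied at a value and \textsc{St-Succ} producing a successor of a value. Hence a strict derivation of $\bstj e v$ is built only from \textsc{St-Stop}, \textsc{St-Succ}, \textsc{St-CaseZ}, \textsc{St-CaseS}, \textsc{St-App}, whose shapes match \textsc{B-Val}, \textsc{B-Succ}, \textsc{B-CaseZ}, \textsc{B-CaseS}, \textsc{B-App} of \Cref{fig:bigpure} one for one, the only discrepancies being (i) that the successor rule is a schema instance \textsc{St-Stop(1)} on one side and the standalone \textsc{B-Succ} on the other, and (ii) that \textsc{St-CaseS} and \textsc{St-App} carry extra side-premisses $\val v$ and $\val {v_2}$ with no counterpart in big-step. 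With this, define $\Phi$ on a strict derivation of $\bstj e v$ by recursion: rewrite \textsc{St-Stop}$\mapsto$\textsc{B-Val}, \textsc{St-Succ}$\mapsto$\textsc{B-Succ}, and \textsc{St-CaseZ}/\textsc{St-CaseS}/\textsc{St-App}$\mapsto$\textsc{B-CaseZ}/\textsc{B-CaseS}/\textsc{B-App}, delete the value side-premisses, and recurse on the rest, yielding a derivation of $\bsj e v$. For the inverse $\Psi$, recurse over a derivation of $\bsj e v$ making the reverse rewrites and, wherever a value side-premiss is needed, supplying the unique derivation of the relevant $\val{\cdot}$; this is available because $\bsj e v$ implies $\val v$ (a one-line induction, or an invariant already on hand) and because the value judgment of \Cref{fig:values} is syntax-directed, so each value has exactly one derivation of $\val{\cdot}$.

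It then remains to check that $\Psi$ lands in strict derivations (all its \textsc{St-Stop(k)} uses are the permitted \textsc{St-Stop}/\textsc{St-Succ} instances) and that $\Phi\circ\Psi$ and $\Psi\circ\Phi$ are the identity, each by a routine induction on derivations; determinacy of the value judgment is exactly what makes inserting and then deleting the side-premisses an exact round trip. I expect the only real subtlety to be this bookkeeping around the two mismatches — aligning the successor case across the schema-versus-standalone presentations, and confirming the side-premiss insertion/deletion is a bijection — while everything else is a mechanical, structure-preserving traversal that in the mechanization reduces to two recursions plus two equational lemmas.
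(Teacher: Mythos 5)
Your proof is correct and takes essentially the same route as the paper, which likewise argues by structural induction on derivations using exactly the rule-by-rule correspondence \textsc{St-X} $\leftrightarrow$ \textsc{B-X} (with \textsc{St-Stop(0)} and \textsc{St-Stop(1)} matching \textsc{B-Val} and \textsc{B-Succ}) described in the paragraph preceding the lemma. The bookkeeping you make explicit --- which schema instances strictness permits, and the insertion/deletion of value side-premisses justified by uniqueness of $\val{\cdot}$ derivations --- is precisely what the paper's one-line proof leaves implicit.
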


\begin{proof}
    This property follows by induction over the structure of the derivations.
\end{proof}

\begin{corollary}[Stop/Big Equivalence] \label{cor:stbeq}
  For all expressions $e$ and values $v$,
  $\bstj e v \iff \bsj e v$.
\end{corollary}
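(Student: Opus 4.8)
The plan is to obtain this corollary as an immediate consequence of the two equivalence results already in hand. Specializing \Cref{thm:eq} to the case $e' := v$ gives $\bstj e v \iff \msj e v$, and \Cref{lem:bseq} gives $\msj e v \iff \bsj e v$; composing these two biconditionals yields $\bstj e v \iff \bsj e v$, which is precisely the statement. On this route essentially all the work has been front-loaded into \Cref{thm:eq} and \Cref{lem:bseq}, and the corollary itself is a one-line consequence.

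A more self-contained alternative is to prove the two directions directly. For $\bsj e v \Rightarrow \bstj e v$, one appeals to the Derivation Isomorphism (\Cref{lem:iso}): a derivation of $\bsj e v$ transports to a strict derivation of $\bstj e v$, so in particular $\bstj e v$ holds. For the converse $\bstj e v \Rightarrow \bsj e v$, I would induct on the derivation of $\bstj e v$ and translate each \textsc{St-X} rule into the corresponding \textsc{B-X} rule. The value premisses that big-stop adds over big-step --- $\val v$ in \textsc{St-CaseS}, $\val{v_2}$ in \textsc{St-App}, and so on --- are exactly what licenses the inductive hypothesis to turn a big-stop subderivation ending in a value into the corresponding big-step judgment. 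The one case needing a moment's care is \textsc{St-Stop(k)}: its conclusion $\bstj {E[e_1,\dots,e_n]} {E[e'_1,\dots,e'_n]}$ must here have a value on the right, and in PCF that forces $E[e_1,\dots,e_n]$ to be $\z$, a $\lambda$-abstraction, or a successor $\suc{e_1}$; only the successor subcase is nontrivial, and there the premiss $\bstj{e_1}{e'_1}$ together with $\val{e'_1}$ again feeds the inductive hypothesis (yielding $\bsj{e_1}{e'_1}$ and then \textsc{B-Succ}).

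The one pitfall worth flagging is that the forward direction does \emph{not} follow from \Cref{lem:iso} by itself: not every derivation of $\bstj e v$ is strict, even when its right-hand side is a value. For example, applying \textsc{St-Succ} to the premiss $\bstj{\app{(\lam{\_}{x}{x})}{\z}}{\z}$ produces a non-strict derivation of $\bstj{\suc{(\app{(\lam{\_}{x}{x})}{\z})}}{\suc\z}$ whose conclusion still has a value on the right. So bridging \Cref{lem:iso} to the forward direction would require an extra ``strictification'' argument, whereas routing through \Cref{thm:eq} and \Cref{lem:bseq} --- or running the direct induction sketched above --- sidesteps the issue entirely. I would therefore take the first route as the primary proof, since it is the shortest, and note the direct induction as the self-contained alternative.
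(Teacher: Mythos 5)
Your proposal is correct, and your primary route is a genuinely different decomposition from the one the paper implies. The paper presents \Cref{cor:stbeq} with no proof of its own, positioned as a consequence of the Derivation Isomorphism (\Cref{lem:iso}); you instead compose \Cref{thm:eq} (specialized to $e' := v$) with \Cref{lem:bseq}, which immediately yields the biconditional. Your flagged pitfall is the most valuable part of the review: the isomorphism only speaks about \emph{strict} derivations, and your counterexample --- a non-strict derivation of $\bstj{\suc{(\app{(\lam{\_}{x}{x})}{\z})}}{\suc\z}$ obtained by applying \textsc{St-Stop(1)} to $\bstj{\app{(\lam{\_}{x}{x})}{\z}}{\z}$ --- correctly shows that a $\bstj e v$ derivation with a value on the right need not be strict. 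So the direction $\bstj e v \Rightarrow \bsj e v$ does not fall out of \Cref{lem:iso} without an additional strictification argument (every derivable $\bstj e v$ admits a strict derivation), which the paper never states. Your composition route sidesteps this entirely at the cost of depending on the heavier \Cref{thm:eq}; the isomorphism route, once patched with strictification, is more self-contained and additionally yields a bijection on derivations rather than mere logical equivalence. Your sketched direct induction is also sound: the added value premisses in \textsc{St-CaseS} and \textsc{St-App} are exactly what feeds the inductive hypothesis, and your case analysis of \textsc{St-Stop(k)} when the right-hand side is a value is the right way to handle the only delicate case.
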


\subsection{Progress and Preservation}

While multi-step judgments are commonly used 
incarnations of small-step semantics, there still 
exist programming language properties 
that are traditionally stated using small-step 
semantics simpliciter, without any multi-stepping.
In particular, progress and 
preservation are typically stated this way.
The rest of this section is devoted to 
recovering these statements in  
the big-stop system.

The preservation property can be restated using big-stop semantics 
as follows:

\begin{theorem}[Preservation]\label{thm:pres}
    Big-stopping preserves typing.

    That is, if $\ty \cdot e \tau$ and 
    $\bstj e {e'}$, then $\ty \cdot {e'} \tau$.
\end{theorem}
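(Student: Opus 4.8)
The plan is to prove this by rule induction on the derivation of $\bstj e {e'}$, following the shape of a textbook preservation argument for the small-step system and relying only on the standard inversion and substitution lemmas for PCF's type system (\Cref{fig:ty}). The one preliminary observation that makes the induction go through without generalizing the context is that, since $e$ is closed, every expression occurring in a sub-derivation of $\bstj e {e'}$ is again closed: in \textsc{St-Stop(k)} the reduced subexpressions sit in scrutinee, function, or argument positions, none of which introduces a binder; in \textsc{St-CaseZ} the reduced branch $e_1$ lies outside the scope of $x$; and in \textsc{St-CaseS} and \textsc{St-App} the reduced bodies are of the form $[v/x]e_2$ and $[\lam f x e/f,v_2/x]e$, hence closed after substitution. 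Consequently the induction hypothesis "$\ty \cdot d \tau$ and $\bstj d {d'}$ imply $\ty \cdot {d'} \tau$" applies verbatim to every premiss of the form $\bstj{-}{-}$.

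For the base instance \textsc{St-Stop(0)} (i.e. $\bstj e e$) there is nothing to show. For the congruence instances of the schema — \textsc{St-Succ}, \textsc{St-Case}, \textsc{St-App1}, \textsc{St-App2} — I would invert the typing derivation of the source expression to obtain a type for the reduced subexpression, apply the induction hypothesis to get the same type for its reduct, and reassemble the result with the corresponding typing rule. For \textsc{St-CaseZ}, inversion on $\ty \cdot {\ifz {e_1} x {e_2} e} \tau$ gives $\ty \cdot {e_1} \tau$, and the induction hypothesis on $\bstj {e_1}{e_1'}$ closes the case (the premiss $\bstj e \z$ is irrelevant to typing). For \textsc{St-CaseS}, inversion gives $\ty {\cdot, x:\nat}{e_2}\tau$ and $\ty \cdot e \nat$; the induction hypothesis on $\bstj e {\suc v}$ yields $\ty \cdot {\suc v} \nat$, hence $\ty \cdot v \nat$; the substitution lemma gives $\ty \cdot {[v/x]e_2} \tau$; and the induction hypothesis on $\bstj {[v/x]e_2}{e'_2}$ finishes it. For \textsc{St-App}, inversion gives $\ty \cdot {e_1}{\fun \sigma \tau}$ and $\ty \cdot {e_2} \sigma$; the induction hypothesis on $\bstj {e_1}{\lam f x e}$ plus inversion gives $\ty {\cdot, f:\fun\sigma\tau, x:\sigma}{e}\tau$; the induction hypothesis on $\bstj {e_2}{v_2}$ gives $\ty \cdot {v_2} \sigma$; the (double) substitution lemma gives a typing for the substituted body; and the induction hypothesis on the last premiss yields $\ty \cdot {e'} \tau$.

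The only machinery beyond the induction itself is entirely standard and would be stated without reproof: the inversion lemmas for the typing judgment and the substitution lemma ($\ty {\Gamma, x:\sigma} e \tau$ and $\ty \Gamma v \sigma$ imply $\ty \Gamma {[v/x]e} \tau$), exactly as needed for ordinary small-step preservation. The place I would be most careful is precisely the bookkeeping that keeps intermediate terms closed, so that the induction does not need to be relativized to an arbitrary context. Alternatively — and perhaps most economically given what is already established — one can avoid the induction on big-stop derivations altogether: by \Cref{thm:eq}, $\bstj e {e'}$ is equivalent to $\msj e {e'}$, so preservation follows from ordinary single-step preservation chained along the \textsc{M-Refl}/\textsc{M-Step} structure of the multi-step derivation.
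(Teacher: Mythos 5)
Your proposal is correct and takes exactly the paper's route: the paper's proof is simply ``induction on the derivation of the big-stop judgment,'' and your case analysis (inversion, the substitution lemma, and the observation that all intermediate terms stay closed) is a faithful elaboration of that one-liner. Even your closing remark about deriving preservation indirectly from \Cref{thm:eq} and single-step preservation mirrors the paper, which mentions that same alternative immediately after the proof and likewise sets it aside as unnecessary.
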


\begin{proof}
    This property can be shown by induction 
    on the derivation of the big-stop 
    judgment. 
\end{proof}

Note that preservation could also have been proven by 
first observing that the same property holds for
multi-stepping
and then applying \Cref{thm:eq}.
However, it is not necessary to make such an indirect proof;
using big-stop directly is unproblematic.

With a little additional maneuvering, one can 
also restate the property of progress.
To do so, call all of the rules ``progressing''
except for \textsc{St-Stop(k)}. 
Further, call a big-stop derivation ``progressing'' 
if it uses
a progressing rule. We use the notation 
$\bstjp e {e'}$ to mean that $\bstj {e} {e'}$ 
is the conclusion of a progressing derivation.
We can then state a big-stop version of 
the progress property as follows:

\begin{theorem}[Progress]
\label{thm:progress}
    Well-typed, non-value expressions can progress.

    That is, if $\ty \cdot e \tau$,
    then either $\val e$ or 
    $\bstjp e {e'}$ for some $e'$.
\end{theorem}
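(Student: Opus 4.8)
The plan is to prove this theorem by rule induction on the typing derivation $\ty \cdot e \tau$, following the familiar structure of the small-step progress proof but discharging each redex case with a big-stop rule rather than a small-step rule. Two standard ingredients are needed. First, \emph{canonical forms}, obtained by inversion on the typing rules: if $\ty \cdot v \nat$ and $\val v$, then $v$ is $\z$ or $\suc {v'}$ with $\val {v'}$; and if $\ty \cdot v {\fun \sigma \tau}$ and $\val v$, then $v = \lam f x {e_0}$ for some $f, x, e_0$. (This is the only place well-typedness is genuinely used.) Second, the observation that instantiating the schema \textsc{St-Stop(k)} at $k = 0$ yields the reflexive rule \textsc{St-Stop}, deriving $\bstj {e_0} {e_0}$ for every $e_0$; this lets us ``contract a redex and then immediately stop'', so that the recursive premiss of a big-step elimination rule --- the one that mentions the contractum --- can be discharged by applying \textsc{St-Stop(0)} to that contractum.

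The case analysis proceeds on the last typing rule. The variable case \textsc{T-Var} is vacuous because the context is empty, and \textsc{T-Zero} and \textsc{T-Lam} give $\val e$ outright. For \textsc{T-Succ}, where $e = \suc {e_0}$, the inductive hypothesis on $e_0$ gives either $\val {e_0}$, hence $\val {\suc {e_0}}$, or $\bstjp {e_0} {e_0'}$, hence $\bstjp {\suc {e_0}} {\suc {e_0'}}$ by \textsc{St-Succ}. For \textsc{T-Case}, where $e = \ifz {e_1} x {e_2} {e_3}$, apply the inductive hypothesis to the scrutinee $e_3$: if $\bstjp {e_3} {e_3'}$, then \textsc{St-Case} gives a progressing derivation; otherwise $\val {e_3}$, and by canonical forms $e_3$ is $\z$ --- in which case \textsc{St-CaseZ} applies, with premisses $\bstj {e_3} \z$ and $\bstj {e_1} {e_1}$ both from \textsc{St-Stop(0)} --- or $e_3 = \suc v$ with $\val v$ --- in which case \textsc{St-CaseS} applies, with premisses $\bstj {e_3} {\suc v}$ and $\bstj {[v/x]e_2} {[v/x]e_2}$, the latter from \textsc{St-Stop(0)}; in either sub-case the resulting derivation is progressing since its last rule is \textsc{St-CaseZ} or \textsc{St-CaseS}. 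For \textsc{T-App}, where $e = \app {e_1} {e_2}$, apply the inductive hypothesis to $e_1$: if $\bstjp {e_1} {e_1'}$, then \textsc{St-App1} gives a progressing derivation; otherwise $\val {e_1}$, so canonical forms force $e_1 = \lam f x {e_0}$, and we apply the inductive hypothesis to $e_2$: if $\bstjp {e_2} {e_2'}$, then \textsc{St-App2} (using $\val {e_1}$) gives a progressing derivation; otherwise $\val {e_2}$, and \textsc{St-App} applies --- its first three premisses are immediate ($\bstj {e_1} {\lam f x {e_0}}$ and $\bstj {e_2} {e_2}$ from \textsc{St-Stop(0)}, and $\val {e_2}$ by assumption) and its final premiss, a big-stop reduction of the contractum, is discharged by \textsc{St-Stop(0)} --- yielding a progressing derivation.

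I expect the only point requiring real care to be these redex cases (\textsc{St-App}, \textsc{St-CaseZ}, \textsc{St-CaseS}): one must notice that the rule's final premiss, which in the big-\emph{step} system recursively evaluates the contractum to a value, can in the big-\emph{stop} system be satisfied by the contractum itself via \textsc{St-Stop(0)}, so that ``contract the redex, then stop'' is a legal and --- crucially --- \emph{progressing} derivation. Everything else is the usual progress bookkeeping; indeed the argument is structurally identical to the standard small-step progress proof, with each rule \textsc{St-X} playing the role of \textsc{S-X}. As an alternative, one could instead show that every single transition $\ssj e {e'}$ underlies a progressing big-stop derivation $\bstjp e {e'}$ --- by induction on the derivation of $\ssj e {e'}$, with \textsc{S-Seq(k)} handled by \textsc{St-Succ}, \textsc{St-Case}, \textsc{St-App1}, or \textsc{St-App2} according to the enclosing expression --- and then invoke ordinary small-step progress; but the direct induction above is self-contained.
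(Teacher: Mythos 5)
Your proof is correct and follows essentially the same route as the paper, which proves this theorem by induction on the structure of $e$ (equivalent to your rule induction on the syntax-directed typing derivation), using canonical forms and discharging each redex rule's recursive premiss with \textsc{St-Stop(0)}. Your key observation --- that ``contract the redex, then stop'' yields a derivation whose last rule is progressing, while the congruence cases inherit progressivity from the sub-derivation --- is exactly the content the paper's one-line proof sketch leaves implicit.
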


\begin{proof}
This property follows from induction over the 
structure of the expression $e$.
\end{proof}

In other words, \Cref{thm:progress} ensures that well-typed,
non-value expressions $e$ can make at least 
one small-step of
evaluation progress to some $e'$. 
To confirm that \Cref{thm:progress} really does 
capture the key notion of progress,
we also point out the following lemma:

\begin{lemma}[Progressing Means Progress]
If $\bstjp {e_1} {e_2}$, then $\exists e_3.\, \ssj {e_1} {e_3}$
and $\msj {e_3} {e_2}$.
\end{lemma}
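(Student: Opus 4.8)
The plan is to induct on the progressing derivation of $\bstjp {e_1} {e_2}$ --- that is, on the big-stop derivation of $\bstj {e_1} {e_2}$, which by assumption contains at least one application of a progressing rule --- and to proceed by case analysis on its root rule. Two auxiliary facts are used throughout. First, \Cref{thm:eq} lets us replace any premise $\bstj e {e'}$ by a multi-step judgment $\msj e {e'}$. Second, we use the evident congruence lemma for multi-step: if $\msj {e_j} {e_j'}$ and slots $1,\dots,j-1$ of a shape $E$ are values, then $\msj {E[\dots, e_j, \dots]} {E[\dots, e_j', \dots]}$; this follows by a routine induction on the multi-step derivation using \textsc{S-Seq(k)}, and with it (plus transitivity of the multi-step relation) we can freely reduce scrutinees, functions, and arguments into position.

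If the root rule is one of the redex rules \textsc{St-CaseZ}, \textsc{St-CaseS}, or \textsc{St-App}, no inductive hypothesis is needed. We convert each $\bstj{\cdot}{\cdot}$ premise to a multi-step via \Cref{thm:eq} and assemble a chain from $e_1$ to $e_2$ that threads through the matching small-step redex rule (\textsc{S-CaseZ}, \textsc{S-CaseS}, or \textsc{S-App}). For example, in the \textsc{St-App} case with $e_1 = \app {e_1'} {e_2'}$, the premises give $\msj {e_1'} {\lam f x e}$ and $\msj {e_2'} {v_2}$ with $\val{v_2}$, so by congruence $\msj {\app {e_1'} {e_2'}} {\app {\lam f x e} {e_2'}}$ and then $\msj {\app {\lam f x e} {e_2'}} {\app {\lam f x e} {v_2}}$ (legal since $\lam f x e$ is a value), followed by one \textsc{S-App} step to the substituted body and then the multi-step to $e_2$ from the last premise. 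Since this chain contains the \textsc{S-App} step it is nonempty, and peeling off its first transition yields the required $e_3$. The \textsc{St-CaseZ} and \textsc{St-CaseS} cases are analogous, threading through \textsc{S-CaseZ} and \textsc{S-CaseS} respectively.

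If the root rule is one of the congruence variants \textsc{St-Succ}, \textsc{St-Case}, \textsc{St-App1}, or \textsc{St-App2}, then since none of these is a progressing rule, the progressing rule used by the derivation must occur inside a premise subderivation, so that premise is itself progressing and the inductive hypothesis applies to it. In \textsc{St-Succ}, say, $\bstj {e_1} {e_2}$ is $\bstj {\suc{e_0}} {\suc{e_0'}}$ with progressing premise $\bstjp {e_0} {e_0'}$; the hypothesis gives $\ssj {e_0} {e_0''}$ and $\msj {e_0''} {e_0'}$, so \textsc{S-Seq(1)} lifts the first step to $\ssj {\suc{e_0}} {\suc{e_0''}}$ and the congruence lemma lifts the rest to $\msj {\suc{e_0''}} {\suc{e_0'}}$; take $e_3 = \suc{e_0''}$. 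The cases \textsc{St-Case} and \textsc{St-App1} have exactly the same shape. The case \textsc{St-App2} splits according to which of its two big-step premises carries the progressing rule, but both sub-cases are handled uniformly: convert the non-progressing premise via \Cref{thm:eq}, apply the hypothesis to the progressing one, and recombine using \textsc{S-Seq(1)} or \textsc{S-Seq(2)} together with the congruence lemma. Finally, \textsc{St-Stop} cannot be the root rule, since a derivation ending in it has no premises and hence uses no progressing rule, so this case is vacuous.

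The only mildly delicate point --- the main obstacle, such as it is --- is the bookkeeping at the end of each case: one must build a \emph{nonempty} multi-step sequence $\msj {e_1} {e_2}$ and exhibit its first intermediate state as $e_3$, which requires distinguishing whether the leading congruence segment is empty (in which case $e_3$ is the target of the first genuine redex step). Everything else is mechanical, and, as in the neighbouring results, the argument is type-agnostic: no well-typedness assumption is used.
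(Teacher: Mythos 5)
Your proof is correct and follows the same route as the paper, which states only that the result ``follows from induction over the derivation of $\bstjp{e_1}{e_2}$''; your case analysis on the root rule, use of \Cref{thm:eq} to discharge non-progressing premises, and the care taken to peel the first transition off a provably nonempty multi-step chain fill in exactly the details that one-line proof elides. The one point worth noting explicitly in the \textsc{St-App2} sub-case is that a non-progressing premise derivation necessarily concludes $\bstj{e}{e}$ (its leading congruence segment is empty), which is precisely the bookkeeping situation you flag at the end.
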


\begin{proof}
This property follows from induction over 
the derivation of $\bstjp {e_1} {e_2}$.
\end{proof}

As a result, progressing derivations 
are the correct big-stop notion for 
making evaluation progress.
Thus, e.g., progressing derivations can be made 
arbitrarily deep iff small-step reduction loops forever.
For convenience, we use the notation 
$\bsntj{e}$ to mean that such arbitrarily 
deep progressing derivations can be made, i.e., that $e$ diverges.
This notation will help to compare to other
systems for inferring divergence in \Cref{sec:related},
but will not otherwise be used in the big-stop inference system.

\section{Effectful Big-Stopping}\label{sec:eff}

This section develops big-stop semantics with effects.
Here, we specifically consider writer-monad-style effects 
that are emitted but do not otherwise affect computation,
like printing.
Such a system can be used to solve
the effectful compilation problem presented in 
\Cref{sec:motive}. For effects that 
may interact with computation, like mutation,
see \Cref{sec:imp}.

In this section, we also prove all provided properties in Agda 
in an intrinsically typed setting~\cite{artifact}.
This formalization shows that 
 big-stop semantics makes good on its ergonomic promise:
these proofs largely follow by straightforward 
induction on the derivation of the big-stop judgment
and do not need the detours required by small-step systems.
Moreover, these proofs work in an effectful setting 
with nontermination, which previously 
was a blocker for the use of big-step semantics.

To introduce writer-monad-style effects, we extend 
PCF syntax with the expression $\eff a e$, which 
emits the effect indicated by $a$ and then continues on as $e$.
To capture these effects,
we consider sequences of them in the order 
they occur.
Thus, for example, the effect term $abc$ just means that 
effect $a$ is followed by effect $b$ and then effect $c$.
We also use $1$ to represent the identity effect, which can be 
thought
of as an uninteresting effect or no effect at all.

Such effects can be incorporated into the small-step 
semantics using the judgment $\ssej {e_1} {e_2} a$,
meaning that $e_1$ small-steps to $e_2$ and 
has the effects of $a$. The rules for this system 
are given in \Cref{fig:sses}.
The figure also defines the multi-step variant 
of this judgment,
$\msej {e_1} {e_3} {a}$.

\begin{figure}
  \small
    \begin{mathpar}
        \inferrule[SE-Seq(k)]{
            1 \leq k \leq n
            \\
            \forall\, 1\leq i < k.\, \val {e_i}
            \\
            \ssej {e_k} {e_k'} a
            \\
            \forall\, i \neq k.\, e_i = e_i'
        }{
            \ssej {E[e_1,\dots,e_n]} {E[e_1',\dots,e_n']}  a
        }

        \inferrule[SE-CaseZ]{
        }{
            \ssej {\ifz {e_1} {x} {e_2} \z} {e_1} 1
        }

        \inferrule[SE-CaseS]{
            \val v
        }{
            \ssej {\ifz {e_1} {x} {e_2} {\suc v}} {[v/x]e_2} 1
        }

        \inferrule[SE-App]{
            \val v
        }{
            \ssej {\app {\lam {f} {x} {e}} v} {[\lam {f} {x} {e}/f,\, v/x]e} 1
        }

        \inferrule[SE-Effect]{
        }{
            \ssej {\eff a e} e a
        }
        
        \inferrule[ME-Refl]{
        }{
            \msej {e} {e} 1
        }

        \inferrule[ME-Step]{
            \ssej {e_1} {e_2} a 
            \\
            \msej {e_2} {e_3} b
        }{
            \msej {e_1} {e_3} {ab}
        }
    \end{mathpar}
    \caption{Small- and multi-step semantics for effectful call-by-value PCF}
    \label{fig:sses}
    \end{figure}

The corresponding effectful big-step semantics are given by 
the rules of
\Cref{fig:bigeffect}. These rules define 
the judgment $\bsej e v a$ meaning that the expression 
$e$ evaluates to the value $v$ while emitting 
the effects indicated by $a$. 
To ensure that this system
agrees with the small-step system of 
\Cref{fig:sses}, we provide \Cref{lem:bmeeq},
the effectful analogue of \Cref{lem:bseq}.

\begin{figure}
  \small
\begin{mathpar}
  \inferrule[BE-Succ]{
    \bsej e v a
  }{
    \bsej {\suc e} {\suc v} a
  }

  \inferrule[BE-CaseZ]{ 
    \bsej e \z a
    \\ 
    \bsej {e_1} {v_1} b 
  }{ 
    \bsej  {\ifz {e_1} x {e_2} e} {v_1} {ab}
  }
    
  \inferrule[BE-CaseS]{ 
    \bsej e {\suc v} a
    \\ 
    \bsej {[v/x]e_2} {{v_2}} b
  }{ 
    \bsej  {\ifz  {e_1} x {e_2} e} {v_2} {ab}
  }
    
  \inferrule[BE-App]{ 
    \bsej {e_1} {\lam f x e} a
    \\ 
    \bsej {e_2} {v_2} b
    \\\\
    \bsej {[{\lam f x e}/f, v_2/x]e} {v} c
  }{ 
    \bsej {\app {e_1} {e_2}} {v} {abc}
  }

   \inferrule[BE-Val]{
       \val v
    }{
        \bsej {v} {v} 1
    }

  \inferrule[BE-Eff]{
    \bsej e v b
  }{
    \bsej {\eff a e} v {ab}
  }
\end{mathpar}
\caption{Big-step semantics for effectful call-by-value PCF}
\label{fig:bigeffect}
\end{figure}

\begin{lemma}[Effectful Big/Multi Equivalence\agda]\label[lemma]{lem:bmeeq}
    For all expressions $e$, values $v$, and effects $a$,
    \[\bsej e v a \iff \msej e v a\]
\end{lemma}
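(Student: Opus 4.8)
The plan is to follow the structure of the proof of \Cref{lem:bseq}, threading an effect monoid (with unit $1$) through the derivations. I prove the two implications separately.

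\emph{Forward direction} ($\bsej e v a \Rightarrow \msej e v a$). I proceed by rule induction on the big-step derivation. Beyond the pure case this requires two routine auxiliary facts about \Cref{fig:sses}: (i) a \emph{composition} lemma, that $\msej{e_1}{e_2}{a}$ and $\msej{e_2}{e_3}{b}$ imply $\msej{e_1}{e_3}{ab}$, by induction on the first derivation; and (ii) a \emph{congruence} lemma, that if $\msej{e_k}{e_k'}{a}$ and $e_1,\dots,e_{k-1}$ are values then $\msej{E[e_1,\dots,e_k,\dots,e_n]}{E[e_1,\dots,e_k',\dots,e_n]}{a}$, by induction on the multi-step derivation using \textsc{SE-Seq(k)}. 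Each big-step rule is then discharged mechanically: for instance, in the \textsc{BE-App} case the inductive hypotheses give $\msej{e_1}{\lam f x e}{a}$, $\msej{e_2}{v_2}{b}$, and $\msej{[\lam f x e/f, v_2/x]e}{v}{c}$; lifting the first through the function position, the second through the argument position (legal since $\lam f x e$ is a value), appending one \textsc{SE-App} step (effect $1$), and composing with the third yields effect $a\,b\,1\,c = abc$, as required. The \textsc{BE-Eff} case is one \textsc{SE-Effect} step composed with the inductive hypothesis, and the remaining cases are analogous.

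\emph{Backward direction} ($\msej e v a \Rightarrow \bsej e v a$). I induct on the multi-step derivation. The base case \textsc{ME-Refl} forces $e = v$ and $a = 1$, so \textsc{BE-Val} applies. For \textsc{ME-Step} I have $\ssej{e_1}{e_2}{a}$ and $\msej{e_2}{v}{b}$; the inductive hypothesis gives $\bsej{e_2}{v}{b}$, and it remains to establish the key \emph{lifting lemma}:
\[ \text{if } \ssej{e_1}{e_2}{a} \text{ and } \bsej{e_2}{v}{b}, \text{ then } \bsej{e_1}{v}{ab}. \]
I prove this by induction on the small-step derivation. The redex rules \textsc{SE-CaseZ}, \textsc{SE-CaseS}, \textsc{SE-App}, \textsc{SE-Effect} are each handled by directly assembling the matching big-step derivation from the given $\bsej{e_2}{v}{b}$ together with \textsc{BE-Val} sub-derivations for the already-evaluated positions, observing that prepending the redex step contributes exactly its effect $a$ on the left ($a = 1$ for all of these except \textsc{SE-Effect}). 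The congruence rule \textsc{SE-Seq(k)} is the crux: here $e_1 = E[\dots,e_k,\dots]$ steps to $E[\dots,e_k',\dots]$ with $\ssej{e_k}{e_k'}{a}$, the positions before $k$ are values, and the rest are unchanged. I invert the big-step derivation of $\bsej{E[\dots,e_k',\dots]}{v}{b}$ to expose its sub-derivation $\bsej{e_k'}{v_k}{b_k}$ for position $k$, apply the inner inductive hypothesis to obtain $\bsej{e_k}{v_k}{a\,b_k}$, and reassemble the big-step derivation for $E[\dots,e_k,\dots]$ with this replacement (all other premisses, including the one for the substituted body, are unchanged since $v_k$ is the same value). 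Using the side lemma that a value evaluates to itself with effect $1$, the effects of positions before $k$ are all $1$, so the overall effect rearranges from $b$ to $a\,b$ by the monoid unit law.

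The step I expect to be the main obstacle is the \textsc{SE-Seq(k)} case of the lifting lemma: it is the one place where the flat rule schema must be unpacked into its individual congruence instances (one per constructor: successor, case scrutinee, function position, argument position), each requiring an inversion of the corresponding big-step rule and a careful surgical reconstruction with the right effect reordering. Everything else is a routine—if tedious—accounting of the effect monoid. This lemma (and the theorem) is mechanized in Agda in an intrinsically typed setting (as marked by \agda); the mechanization additionally carries typing information along the same derivations, but the propositional content is exactly as above.
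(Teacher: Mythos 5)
Your proposal is correct and matches the paper's approach: the paper's proof is simply ``induction over the structure of the derivations,'' and your two directions (rule induction on the big-step derivation with composition and congruence lemmas for multi-step, and induction on the multi-step derivation via a single-step lifting lemma proved by inner induction on the small step) are the standard elaboration of exactly that, consistent with the accompanying Agda mechanization. The only detail worth flagging is that in the \textsc{SE-Seq(k)} case of the lifting lemma the inverted derivation may be \textsc{BE-Val} (when $E[\dots]$ is itself a value, i.e.\ the successor case), where the position-$k$ sub-derivation must be manufactured via \textsc{BE-Val} rather than read off; this is routine and does not affect correctness.
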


\begin{proof}
This property follows by induction over the structure of the derivations.
\end{proof}

\begin{figure}
  \small
\begin{mathpar}
 \inferrule[StE-Stop(k)]{
        \forall 1 \leq i \leq k. \,
        \bstej {e_{i}} {e'_{i}} {a_i}
        \\
        \forall 1 \leq i \leq k-1. \,
        \val {e'_i}
        \\\\
        \forall k+1 \leq i \leq n. \,
        e'_i = e_i
    }{
        \bstej {E[e_1, \dots, e_n]} {E[e'_1, \dots, e'_n]} {a_1\dots a_k}
    }

  \inferrule[StE-CaseZ]{ 
    \bstej e \z a
    \\ 
    \bstej {e_1} {e'_1} b
  }{ 
    \bstej  {\ifz  {e_1} x {e_2} e} {e'_1} {ab}
   }
    
  \inferrule[StE-CaseS]{ 
    \bstej e {\suc v} a
    \\ 
    \val v
    \\\\ 
    \bstej {[v/x]e_2} {{e'_2}} b
  }{ 
    \bstej  {\ifz  {e_1}  x {e_2} e} {e'_2} {ab}
  }
    
  \inferrule[StE-App]{ 
    \bstej {e_1} {\lam f x e} a
    \\ 
    \bstej {e_2} {v_2} b
    \\\\
    \val {v_2}
    \\ 
    \bstej {[{\lam f x e}/ f, v_2/x]e_2} {e'} c
  }{ 
    \bstej {\app {e_1} {e_2}} {e'} {abc}
  }

  \inferrule[StE-Eff]{
    \bstej e {e'} b
  }{
    \bstej {\eff a e} {e'} {ab}
  }
\end{mathpar}
\caption{Big-stop semantics for effectful call-by-value PCF}
\label{fig:stopeffect}
\end{figure}

With a proper big-step system in hand, 
we can now go over the big-stop extension. 
Fundamentally, the essence of big-stop semantics 
remains the same in this new setting: extend big-step 
semantics with a rule schema for stopping
(and possibly insert 
some value premisses, depending on the 
exact formulation of the big-step system).
The resulting adaptation for the big-step rules 
of \Cref{fig:bigeffect} yields the 
big-stop rules of \Cref{fig:stopeffect}.
Examples using these rules can be found 
in \Cref{sec:examples}.

\subsection{Properties}

The big-stop system can be verified to properly 
extend the given big-step system using the following lemma,
which is analogous to \Cref{lem:iso} and \Cref{cor:stbeq}.

\begin{theorem}[Effectful Stop/Step Equivalence\agda] \label{lem:stbeeq}
  For all expressions $e$, values $v$, and effects $a$,
  \[\bstej e v a \iff \bsej e v a\]
\end{theorem}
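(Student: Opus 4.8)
The plan is to establish the effectful analogue of the derivation isomorphism underlying \Cref{lem:iso} and \Cref{cor:stbeq}. Call a big-stop derivation \emph{strict} if every instance of \textsc{StE-Stop(k)} in it has a conclusion of the form $\bstej {v'} {v'} 1$ for some value $v'$. I would show there is a canonical bijection between strict derivations of $\bstej e v a$ and derivations of $\bsej e v a$, under the rule correspondence \textsc{StE-X} $\leftrightarrow$ \textsc{BE-X}, with \textsc{StE-Stop(0)} and \textsc{StE-Stop(1)} corresponding to \textsc{BE-Val} and \textsc{BE-Succ}. The two directions of the theorem then fall out: the $(\Leftarrow)$ direction produces a (strict) big-stop derivation from a big-step one, and the $(\Rightarrow)$ direction shows that any big-stop derivation whose conclusion is a value is effectively strict, so it maps back to a big-step derivation.

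For $(\Leftarrow)$ I would induct on the derivation of $\bsej e v a$ and translate each rule to its big-stop counterpart. The rules \textsc{BE-Eff}, \textsc{BE-CaseZ}, \textsc{BE-CaseS}, and \textsc{BE-App} map directly onto \textsc{StE-Eff}, \textsc{StE-CaseZ}, \textsc{StE-CaseS}, and \textsc{StE-App}: the latter differ only by explicit $\val{-}$ premisses, which are available here because the corresponding sub-derivations of the big-step rule evaluate those subexpressions to values (by inversion). The axiom \textsc{BE-Val} on a value $v$ becomes \textsc{StE-Stop(0)}, yielding $\bstej v v 1$; and \textsc{BE-Succ} deriving $\bsej {\suc e}{\suc v} a$ from $\bsej e v a$ becomes \textsc{StE-Stop(1)} instantiated at $\suc{-}$, deriving $\bstej {\suc e}{\suc v} a$ from $\bstej e v a$ (supplied by the IH). Effect sequences agree in every case because the big-stop rules accumulate effects in exactly the same left-to-right order as the big-step originals.

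For $(\Rightarrow)$ it is cleanest to prove the slightly stronger claim: if $\bstej e {e'} a$ and $\val {e'}$, then $\bsej e {e'} a$ (then specialize $e' = v$). I would induct on the big-stop derivation and case on its last rule. For \textsc{StE-Eff}, \textsc{StE-CaseZ}, \textsc{StE-CaseS}, and \textsc{StE-App}, every non-final premiss concludes an expression that is manifestly a value — either by an explicit $\val{-}$ premiss, or because its stated form ($\z$, $\suc v$, or $\lam f x e$) is a value — so the IH applies to each; the final premiss concludes $e'$, a value by assumption, so the IH applies there too; invoking the matching \textsc{BE-X} rule then gives the result with the same effect sequence. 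The remaining case is \textsc{StE-Stop(k)} with conclusion $\bstej {E[e_1,\dots,e_n]}{E[e'_1,\dots,e'_n]}{a_1\cdots a_k}$ where $E[e'_1,\dots,e'_n]$ is a value; here I would invert on the rules defining $\val{-}$. For $\z$ and $\lam f x e$ there are no must-be-value subexpressions ($n=0$, hence $k=0$ and $a_1\cdots a_k = 1$), the conclusion is $\bstej {v}{v}1$, and \textsc{BE-Val} applies. For $\suc v$, either $k=0$ (the input was already the value $\suc v$, and \textsc{BE-Val} applies) or $k=1$, in which case the single premiss is $\bstej {e_1}{v}{a_1}$ with $\val v$, the IH gives $\bsej {e_1}{v}{a_1}$, and \textsc{BE-Succ} gives $\bsej {\suc{e_1}}{\suc v}{a_1}$.

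I expect the \textsc{StE-Stop(k)} case of the $(\Rightarrow)$ direction to be the only non-mechanical step: it is where the hypothesis has to be strengthened from a fixed $v$ to an arbitrary value conclusion, and where a short inversion on the structure of PCF values is needed to see that a computation which stops yet still reaches a value must have run each required subexpression fully — that is, that the stop was harmless. Everything else, including the effect-sequence bookkeeping, follows the same straightforward induction that underlies \Cref{lem:iso} and \Cref{cor:stbeq}; in particular, once the isomorphism is in place, \Cref{lem:bmeeq} is not needed for this theorem, though it remains the tool for relating $\bstej{-}{-}{-}$ to $\msej{-}{-}{-}$ on non-value results.
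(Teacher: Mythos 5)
Your proof is correct and follows essentially the same route as the paper: the paper proves this theorem "by induction over the derivations" and explicitly frames it as the effectful analogue of \Cref{lem:iso} and \Cref{cor:stbeq}, which is precisely the derivation-isomorphism structure you set up. The only nitpick is that your notion of "strict" (every \textsc{StE-Stop(k)} concluding $\bstej{v'}{v'}{1}$) does not literally cover the $k=1$ successor instances that you correctly map to \textsc{BE-Succ}, but your actual case analysis in the $(\Rightarrow)$ direction handles this properly, so the argument goes through.
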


\begin{proof}
This property follows from induction over the derivations.
\end{proof}

More importantly, as described by 
\Cref{thm:stseeq}, the effectful big-stop judgment coincides with the
effectful multi-step 
judgment, just as in the pure setting's \Cref{thm:eq}.

\begin{theorem}[Effectful Stop/Multi Equivalence\agda] \label{thm:stseeq}
  For all expressions $e, e'$ and effects $a$,
  \[\bstej e {e'} a \iff \msej e {e'} a\]
\end{theorem}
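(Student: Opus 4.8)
The plan is to prove both implications by rule induction, following the template of \Cref{thm:eq} but threading the effect sequences through. Two auxiliary facts about the effectful small/multi-step relations do most of the work. First, a \emph{congruence lemma}: if $\msej {e_k} {e'_k} a$, each of $e_1,\dots,e_{k-1}$ is a value, and $e_i = e'_i$ for $i > k$, then $\msej {E[e_1,\dots,e_n]} {E[e'_1,\dots,e'_n]} a$; this follows by induction on the derivation of $\msej {e_k} {e'_k} a$, firing \textsc{SE-Seq($k$)} at each step. Second, an \emph{append lemma}: $\msej {e_1} {e_2} a$ and $\msej {e_2} {e_3} b$ imply $\msej {e_1} {e_3} {ab}$, by induction on the first derivation via \textsc{ME-Step}. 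Throughout I use that effect sequences form a monoid with identity $1$ (and that the empty product of effects is $1$).

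For the forward direction, $\bstej e {e'} a \implies \msej e {e'} a$, I induct on the big-stop derivation. \textsc{StE-Eff} prepends the step $\ssej {\eff a e} e a$ (rule \textsc{SE-Effect}) to the multi-step relation supplied by the induction hypothesis. The cases \textsc{StE-CaseZ}, \textsc{StE-CaseS}, and \textsc{StE-App} all follow one pattern: apply the congruence lemma to reduce the relevant subexpressions left-to-right (the value premisses of the big-stop rules license the later congruences), fire the corresponding redex rule (\textsc{SE-CaseZ}, \textsc{SE-CaseS}, \textsc{SE-App}, each contributing effect $1$), then splice on the multi-step relation for the contractum from the induction hypothesis; the append lemma and monoid laws assemble the effect word (e.g.\ $a\cdot 1\cdot b = ab$ for the case rules, $a\cdot b\cdot 1\cdot c = abc$ for application). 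The case \textsc{StE-Stop($k$)} chains $k$ congruence-lemma applications: reduce $e_1$ to $e'_1$, then $e_2$ to $e'_2$ (legal since $e'_1$ is now a value), and so on through $e_k$; appending these yields $\msej {E[e_1,\dots,e_n]} {E[e'_1,\dots,e'_n]} {a_1\cdots a_k}$, and at $k=0$ this degenerates to \textsc{M-Refl} with effect $1$.

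For the backward direction, $\msej e {e'} a \implies \bstej e {e'} a$, I induct on the multi-step derivation. \textsc{ME-Refl} is discharged by \textsc{StE-Stop($0$)} (effect the empty product $1$). For \textsc{ME-Step}, with $\ssej {e_1} {e_2} a$ and $\msej {e_2} {e'} b$, the induction hypothesis gives $\bstej {e_2} {e'} b$, so it suffices to prepend a single small step to a big-stop derivation. I establish the auxiliary fact that $\ssej {e_1} {e_2} a$ implies $\bstej {e_1} {e_2} a$ by induction on the small-step derivation: each redex rule maps to the matching big-stop rule with every subderivation an immediate \textsc{StE-Stop($0$)}, and \textsc{SE-Seq($k$)} maps to \textsc{StE-Stop($k$)} (the leading subexpressions, being values, reduce by \textsc{StE-Stop($0$)}), with all the spurious effect slots collapsing to $1$. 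Combining $\bstej {e_1} {e_2} a$ with $\bstej {e_2} {e'} b$ then requires transitivity of effectful big-stop, which holds by an induction on derivations exactly like the (pure) Transitivity lemma, and produces $\bstej {e_1} {e'} {ab}$.

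I expect the main obstacle to be the effect-sequence bookkeeping in the \textsc{StE-Stop($k$)}/\textsc{SE-Seq($k$)} cases: one must interleave the per-subexpression effect words $a_1,\dots,a_k$ in the right order, repeatedly appeal to monoid associativity, and carefully propagate the value premisses so that the $i$-th congruence step is licensed by $e'_1,\dots,e'_{i-1}$ already being values. Everything else is routine rule induction with effect tracking. One could shortcut part of the argument by composing \Cref{lem:stbeeq} with \Cref{lem:bmeeq}, but that only covers the case where the target is a value, so the direct induction above is the cleaner route for the general statement.
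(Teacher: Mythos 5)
Your proof is correct and takes essentially the same route the paper does: the paper's proof is just the one-line sketch ``induction over the derivations,'' and your congruence/append lemmas, the single-step-embeds-into-big-stop lemma, and the appeal to effectful transitivity (\Cref{lem:etrans}) are exactly the standard supporting facts that induction needs. Your closing remark that composing \Cref{lem:stbeeq} with \Cref{lem:bmeeq} only covers value targets is also right, so the direct induction is indeed the appropriate argument.
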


\begin{proof}
This property follows from induction over the derivations.
\end{proof}

As might be expected in light of \Cref{thm:stseeq}, 
various other properties also translate cleanly into 
the effectful big-stop setting. For example, transitivity 
and progress both hold, as stated below.

\begin{lemma}[Effectful Transitivity\agda]\label[lemma]{lem:etrans}
  If $\bstej {e_1} {e_2} a$ and $\bstej {e_2} {e_3} b$,
  then $\bstej {e_1} {e_3} {ab}$
\end{lemma}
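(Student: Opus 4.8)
The plan is to route through the multi-step semantics rather than grinding out a direct induction on the big-stop derivations. First I would apply \Cref{thm:stseeq} (Effectful Stop/Multi Equivalence) to each hypothesis, obtaining $\msej {e_1} {e_2} a$ and $\msej {e_2} {e_3} b$. Then I would establish transitivity for the multi-step relation as a standalone fact: if $\msej {e_1} {e_2} a$ and $\msej {e_2} {e_3} b$, then $\msej {e_1} {e_3} {ab}$. Finally I would apply \Cref{thm:stseeq} in the reverse direction to $\msej {e_1} {e_3} {ab}$ to recover $\bstej {e_1} {e_3} {ab}$. This keeps the big-stop side of the argument trivial and isolates the real work into a lemma about reflexive-transitive closure, exactly the sort of bookkeeping that multi-step semantics handles naturally.

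For the multi-step transitivity lemma, I would induct on the derivation of $\msej {e_1} {e_2} a$. In the \textsc{ME-Refl} case we have $e_1 = e_2$ and $a = 1$, so $ab = 1b$ reduces to $b$ by the monoid identity law and the conclusion is exactly the second hypothesis. In the \textsc{ME-Step} case there are $e'$, $a'$, $a''$ with $\ssej {e_1} {e'} {a'}$, $\msej {e'} {e_2} {a''}$, and $a = a'a''$; the inductive hypothesis applied to $\msej {e'} {e_2} {a''}$ and $\msej {e_2} {e_3} b$ gives $\msej {e'} {e_3} {a''b}$, and then \textsc{ME-Step} yields $\msej {e_1} {e_3} {a'(a''b)}$, which equals $\msej {e_1} {e_3} {ab}$ by associativity of effect concatenation. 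Since effect sequences are treated as a monoid, both the identity and associativity rewrites are available.

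The main obstacle is not conceptual but the monoid bookkeeping for effect sequences: one must make explicit that $1$ is a genuine two-sided identity and that concatenation is associative, since these equations need not hold definitionally (this matters especially in the mechanization). Everything else is routine. As an alternative one could prove the statement by direct induction on $\bstej {e_1} {e_2} a$, but then the \textsc{StE-Stop(k)} case becomes delicate: with $e_2 = E[e'_1,\dots,e'_n]$ one must analyze how $\bstej {E[e'_1,\dots,e'_n]} {e_3} b$ decomposes across the subexpressions, determine which subexpression the second derivation continues evaluating, combine it with the first derivation via inner inductive hypotheses on the $e_i$, and reassemble while respecting the effect order $a_1\cdots a_k$. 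That route is feasible but markedly more intricate, so I would favor the indirect proof via \Cref{thm:stseeq}.
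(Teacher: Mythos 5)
Your proof is correct as stated, but it takes a genuinely different route from the paper's. The paper proves \Cref{lem:etrans} by direct induction over the big-stop derivations---exactly the ``alternative'' you describe and set aside as delicate---whereas you detour through \Cref{thm:stseeq} and a standalone transitivity lemma for the multi-step relation. Your multi-step argument is fine (the \textsc{ME-Refl} and \textsc{ME-Step} cases go through with the monoid identity and associativity laws, as you note), and given the theorems as stated in the paper your composition of the two directions of \Cref{thm:stseeq} is legitimate. The one thing you must check before relying on this route is a dependency issue rather than a logical one: the natural proof of the multi-step-to-big-stop direction of \Cref{thm:stseeq} handles \textsc{ME-Step} by embedding the single small step $\ssej{e_1}{e_2}{a}$ into a big-stop derivation and then composing it with the inductive hypothesis $\bstej{e_2}{e_3}{b}$---which is an instance of the very transitivity you are trying to prove. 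If the formalization obtains that composition from \Cref{lem:etrans}, your argument is circular; the standard way out is to prove the single-step composition (``head expansion'') directly by induction on the small-step derivation, which keeps \Cref{thm:stseeq} independent. As for the trade-off: your route isolates all the work in reflexive-transitive-closure bookkeeping, but it leaves the big-stop system entirely, which cuts against the paper's point that these results follow by direct rule induction on big-stop derivations. The direct proof is less forbidding than you suggest: every case except \textsc{StE-Stop(k)} just reapplies the same rule after invoking the inductive hypothesis on the final premiss, and the \textsc{StE-Stop(k)} case needs only an auxiliary fact that values big-stop to themselves with effect $1$ plus the monoid rewrites to reassociate $a_1b_1\cdots a_kb_k$ into $(a_1\cdots a_k)(b_k\cdots)$; the paper's 30-line Agda proof reflects that this is routine.
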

\begin{proof}
This property follows from induction over the derivations.
\end{proof}

We formalize progress with the judgment
$\bstejp e {e'} a$, which means that 
$\bstej e {e'} a$ is the conclusion of a progressing derivation
(a derivation using a rule other than \textsc{StE-Stop(k)}).

\begin{theorem}[Effectful Progress\agda]\label{thm:proge}
Well-typed, non-value expressions can progress.

    That is, if $\ty \cdot e \tau$,
    then either $\val e$ or 
    $\bstejp e {e'} a$ for some $e'$ and $a$.
\end{theorem}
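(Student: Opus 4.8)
The plan is to prove \Cref{thm:proge} by structural induction on the expression $e$, reorganizing the textbook small-step progress argument so that each ``can take a step'' conclusion is repackaged as a progressing big-stop derivation; this parallels the pure \Cref{thm:progress}. The only place where typing is actually used is in two canonical-forms inversions that I would isolate as separate lemmas: if $\ty \cdot v \nat$ and $\val v$ then $v$ is $\z$ or $\suc {v'}$ with $\val {v'}$, and if $\ty \cdot v {\fun{\sigma}{\tau}}$ and $\val v$ then $v = \lam f x {e_0}$. Both follow by inversion on the value and typing derivations. Everything else is purely structural, so it is worth stating these two lemmas up front.

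For the induction itself: the variable case is vacuous since the context is empty, and $\z$ and $\lam f x {e_0}$ are values outright. For $\suc {e_1}$, I apply the IH to $e_1$ (which has type $\nat$): if $e_1$ is a value then so is $\suc {e_1}$, and otherwise the IH gives a progressing derivation of $\bstej {e_1} {e_1'} {a}$ which \textsc{StE-Stop(1)} turns into $\bstej {\suc {e_1}} {\suc {e_1'}} {a}$. For $\ifz {e_1} x {e_2} {e_3}$, I apply the IH to the scrutinee $e_3$: if it progresses, extend by \textsc{StE-Stop(1)}; if $e_3$ is a value, canonical forms give $e_3 = \z$ or $e_3 = \suc v$, and I conclude by \textsc{StE-CaseZ} or \textsc{StE-CaseS}, using trivial \textsc{StE-Stop(0)} subderivations for the remaining premises. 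For $\app {e_1} {e_2}$ (never a value), I apply the IH to $e_1$ (a function type): if it progresses, extend by \textsc{StE-Stop(1)}; otherwise $e_1$ is a $\lambda$ by canonical forms, and I apply the IH to $e_2$: if it progresses, extend by \textsc{StE-Stop(2)} (with an \textsc{StE-Stop(0)} subderivation for $e_1$); if $e_2$ is a value as well, conclude by \textsc{StE-App} with \textsc{StE-Stop(0)} subderivations for all three of its evaluation premises. Finally, $\eff a {e_1}$ is never a value and always progresses: apply \textsc{StE-Eff} with premise $\bstej {e_1} {e_1} 1$ obtained from \textsc{StE-Stop(0)}, with no appeal to the IH.

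The subtle point --- and really the only thing that needs care --- is the precise reading of ``progressing derivation''. In all the congruence cases above (successor, case scrutinee, and the function and argument positions of an application) the rule applied at the root is an instance of \textsc{StE-Stop(k)}, which is itself not a progressing rule; such a derivation qualifies as progressing only because the subderivation handed back by the IH already contains a progressing rule. So the argument depends on ``progressing'' being read as ``uses a progressing rule somewhere in the tree'' rather than ``at the root'' --- which is exactly the reading that also makes \textit{Progressing Means Progress} hold, since a congruence step delegates the actual redex contraction to a subderivation. Once this is pinned down, every case delivers a progressing derivation and the induction closes; there is no genuine obstacle, only the routine bookkeeping of the \textsc{StE-Stop(0)} premises and the identity effects $1$ they emit. (One could also obtain the theorem indirectly by combining small-step progress with \Cref{thm:stseeq}, but the direct induction above is no harder.)
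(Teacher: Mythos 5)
Your proposal is correct and matches the paper's (one-line) proof, which likewise proceeds by induction on the structure of $e$; the case analysis, the two canonical-forms inversions, and the packaging of congruence cases via \textsc{StE-Stop(k)} are exactly the routine details that induction requires. Your reading of ``progressing derivation'' as one that uses a progressing rule \emph{somewhere} in the tree is the reading the paper's definition gives, so the congruence cases close as you describe.
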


\begin{proof}
  This property follows from induction over the structure of 
  the expression $e$.
\end{proof}

All of these results are proved in a direct way by 
induction on the derivation of big-stop, big-step, small-step, 
or typing judgments. The simplicity of these proofs is 
reflected in their compact formalizations in Agda \cite{artifact}, 
as shown in \Cref{fig:theorem-code-table}.

\begin{table}
  \centering
    \begin{tabular}{@{}lccccc@{}}
    \toprule
    & \Cref{lem:bmeeq}
    & \Cref{lem:stbeeq}
    & \Cref{thm:stseeq}
    & \Cref{lem:etrans}
    & \Cref{thm:proge} \\
    \midrule
    Lines of code in Agda & 64 & 28 & 88 & 30 & 18 \\
    \bottomrule
  \end{tabular}
  \caption{Size of Agda formalizations for each theorem, measured by non-comment, non-blank lines of code
  }
  \label{fig:theorem-code-table}
\end{table}

\section{Application: K Machine Correctness}\label{sec:kmach}

Now that big-stop semantics has been established, we can demonstrate
its benefits by proving semantics preservation for the K machine
compilation of effectful call-by-value PCF, as outlined in
\Cref{sec:motive}.
%
The results of this section
have been proved in Agda \cite{artifact}.

\subsection{The Effectful K Machine}

A K machine state is either of
the form $\estate{k}{e}$ or 
$\vstate k v$ where $k$ is a \textit{stack} of 
\textit{frames}, $e$ is a 
PCF expression, and $v$ is a value. 
K machine states of the form $\estate{k}{e}$
indicate that the expression $e$ is not yet a value 
and needs to be evaluated, and 
K machine states of the form $\vstate k v$
indicate that the value $v$ can be plugged into the top frame of the stack $k$.

Stacks $k$ and frames $f$ are picked out by the judgments 
$\stack k$ and $\kframe f$, respectively, described
in \Cref{fig:frames}. Each frame is just an expression 
with a hole in it indicating the subexpression 
undergoing evaluation. The rule \textsc{F-Arg} is notable because 
it has a premiss requiring the expression $v$ to 
be a value.
Strictly speaking, this premiss is not necessary,
as the K machine can only generate 
such frames where
the expression $v$ is a value. 
Thus, the value premiss only serves to exclude 
invalid stacks that could not arise anyway.
This exclusion makes the value premiss 
useful in the course of 
proving properties of the K machine, since impossible stacks 
need not be considered.

\begin{figure}
    \small
\begin{mathpar}
  \inferrule[Stack-Empty]
  { } {\stack \epsilon}

  \inferrule[Stack-Frame]{ 
    \stack k 
    \\
    \kframe f
  } { 
    \stack {k ; f}
  }
  \\
  \inferrule[F-Succ]{
  }{
    \kframe \fsucc
  }

  \inferrule[F-Case]{
  }{
    \kframe {\fcase {e_1} x {e_2}}
  }

  \inferrule[F-Fun]{
  }{
    \kframe {\ffun e}
  }

  \inferrule[F-Arg]{
    \val v
  }{
    \kframe {\farg v}
  }
\end{mathpar}
\caption{K machine stack and frames}
\label{fig:frames}
\end{figure}

The K machine's transitions are formalized 
with \Cref{fig:ke}. These rules define the judgment 
$\kesj{S}{T}{a}$ to mean that K machine state $S$ 
transitions in one step to state $T$ while emitting the
effects indicated by $a$. This figure 
also defines the multi-step 
analogue $\kemj{S}{T}{a}$.
To obtain transitions for the pure K machine,
simply ignore the effects 
and the rule \textsc{KE-Eff}.

\begin{figure}
    \small
\begin{mathpar}
  \inferrule[KE-Zero]{
  }{
    \kesj {\estate k \z}  {\vstate k \z} 1
  }

  \inferrule[KE-Succ1]{
  }{
    \kesj {\estate {k} {\suc e}} {\estate {k;\fsucc} e} 1
  }

  \inferrule[KE-Succ2]{
  }{
    \kesj {\vstate {k;\fsucc} v} {\vstate k {\suc v}} 1
  }

  \inferrule[KE-Case]{ 
  }{ 
    \kesj {\estate{k}{\ifz {e_1} x {e_2} {e_3}}} {\estate{k; \fcase {e_1} {x} {e_2} }{e_3}} 1
  }

  \inferrule[KE-CaseZ]{ 
  }{ 
    \kesj {\vstate{k; \fcase {e_1} {x} {e_2} }{\z}}  {\estate{k}{e_1}} 1
  }

  \inferrule[KE-CaseS]{ 
  }{ 
    \kesj {\vstate{k; \fcase {e_1} {x} {e_2} }{\suc v}}  {\estate{k}{[v/x]e_2}} 1
  }

   \inferrule[KE-Fun]{ 
  }{ 
    \kesj {\estate{k}{\lam f x e}} {\vstate{k}{\lam f x e}} 1
  }

  \inferrule[KE-App1]{ 
  }{ 
    \kesj {\estate k {\app {e_1} {e_2}}} {\estate {k;\ffun {e_2}} {e_1}} 1
  }

  \inferrule[KE-App2]{ 
  }{ 
    \kesj {\vstate {k;\ffun {e}} {v}} {\estate {k;\farg {v}} {e}} 1
  }

  \inferrule[KE-App3]{ 
  }{ 
    \kesj {\vstate {k;\farg {\lam f x e}} {v}}  {\estate k {[\lam f x e/f, v/x]e}} 1
  }

  \inferrule[KE-Eff]{
  }{
    \kesj {\estate k {\eff a e}} {\estate k e} a
  }

  \inferrule[KEM-Refl]{
  }{
    \kemj S S 1
  }

  \inferrule[KEM-Step]{
    \kesj S T a
    \\
    \kemj T U b 
  }{
    \kemj S U {ab}
  }
\end{mathpar}
\caption{Effectful K machine transition rules}
\label{fig:ke}
\end{figure}

\subsection{Proving Correctness}

Semantic preservation of the compilation of effectful
PCF to the K machine can be boiled down to the following four lemmas.
In particular, soundness and completeness must hold for 
both converging and diverging computations, the latter of 
which is necessary due to the presence of effects.
The rest of this subsection will briefly discuss 
their proofs, all of which have been formalized 
in Agda \cite{artifact}.

\begin{lemma}[Convergent Soundness\agda]
  \label[lemma]{lem:stsoundc}
  If $\kemj {\estate \epsilon e} {\vstate \epsilon v} a$ then $\bstej e v a$.
\end{lemma}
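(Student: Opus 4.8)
The plan is to reduce each K machine run to PCF steps by \emph{unwinding} the stack. For a stack $k$ and an expression $e$, write $k \star e$ for the PCF expression obtained by repeatedly plugging into the innermost frame: $\epsilon \star e = e$ and $(k;f) \star e = k \star (f[e])$, where $f[e]$ denotes the frame $f$ with its hole filled by $e$. Extend this to states by setting $\lfloor \estate k e \rfloor = k \star e$ and $\lfloor \vstate k v \rfloor = k \star v$; in particular $\lfloor \estate \epsilon e \rfloor = e$ and $\lfloor \vstate \epsilon v \rfloor = v$, so the lemma will fall out of a statement quantified over all reachable states.

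The first ingredient is a \emph{lifting} lemma: if $\ssej {e_0} {e_0'} a$ then $\ssej {k \star e_0} {k \star e_0'} a$, by induction on $k$. Each frame is a one-hole context of the form $\fsucc$, $\fcase{e_1}{x}{e_2}$, $\ffun{e_2}$, or $\farg v$ --- that is, a context whose hole sits exactly in the leftmost position that \textsc{SE-Seq(k)} may step into --- so the inductive step is a single use of \textsc{SE-Seq(k)}; in the $\farg v$ case the value premiss of \textsc{F-Arg} supplies precisely the side condition \textsc{SE-Seq(k)} requires. The second ingredient is a \emph{step-correspondence} lemma: if $\kesj S T a$ then $\msej {\lfloor S \rfloor} {\lfloor T \rfloor} a$, by case analysis on the machine rule. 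The bookkeeping rules \textsc{KE-Zero}, \textsc{KE-Succ1}, \textsc{KE-Succ2}, \textsc{KE-Case}, \textsc{KE-Fun}, \textsc{KE-App1}, and \textsc{KE-App2} leave the unwinding unchanged and carry effect $1$, so \textsc{ME-Refl} applies; the rules \textsc{KE-CaseZ}, \textsc{KE-CaseS}, \textsc{KE-App3}, and \textsc{KE-Eff} correspond respectively to the single PCF redex steps \textsc{SE-CaseZ}, \textsc{SE-CaseS}, \textsc{SE-App}, and \textsc{SE-Effect}, with the same effect, lifted through $k$ by the previous lemma. The side conditions that certain subterms be values --- needed for \textsc{SE-CaseS}, \textsc{SE-App}, and for \textsc{KE-App2} to produce a legitimate $\farg v$ frame --- follow from the invariant that $\lhd$-states carry values, which is built into the state representation in the mechanization. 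By \Cref{thm:stseeq}, step-correspondence equivalently reads $\bstej {\lfloor S \rfloor} {\lfloor T \rfloor} a$.

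With these in hand I would prove the generalization: if $\kemj S {\vstate \epsilon v} a$ then $\bstej {\lfloor S \rfloor} v a$, by induction on the derivation of the machine multi-step judgment. For \textsc{KEM-Refl}, $S = \vstate \epsilon v$ and $a = 1$, and $\bstej v v 1$ holds by instantiating \textsc{StE-Stop(k)} at $k = 0$ (its premisses are then vacuous and its effect is the empty concatenation). For \textsc{KEM-Step}, $\kesj S T a$ and $\kemj T {\vstate \epsilon v} b$; the induction hypothesis gives $\bstej {\lfloor T \rfloor} v b$, step-correspondence gives $\bstej {\lfloor S \rfloor} {\lfloor T \rfloor} a$, and \Cref{lem:etrans} (effectful transitivity of big-stop) combines them into $\bstej {\lfloor S \rfloor} v {ab}$. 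Instantiating at $S = \estate \epsilon e$ yields $\bstej e v a$, which is the lemma.

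I expect the step-correspondence lemma, and inside it the lifting lemma, to be the crux: the real work is pinning down the frame-to-context translation and discharging the value side conditions of \textsc{SE-Seq(k)}, \textsc{SE-CaseS}, and \textsc{SE-App} uniformly; everything else is mechanical. One could alternatively postpone \Cref{thm:stseeq} to the very end and carry the whole argument with multi-step semantics, recovering the textbook SOS-based soundness of the K machine, but phrasing it natively in big-stop matches the statement of the lemma and keeps the reasoning within the big-stop framework.
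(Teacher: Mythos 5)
Your proposal is correct and takes essentially the same route as the paper: the paper proves \Cref{lem:stsoundc} by induction over the K machine multi-step dynamics, relating machine states to PCF expressions by unwinding the stack, exactly as you do. The only cosmetic difference is that you establish the per-step correspondence in the multi-step judgment and then invoke \Cref{thm:stseeq} to land in big-stop before chaining with \Cref{lem:etrans}, whereas the paper treats big-stop directly as the target of the correspondence---a stylistic choice the equivalence theorem makes immaterial.
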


\begin{lemma}[Divergent Soundness\agda]
  \label[lemma]{lem:stsoundd}
  If $\kemj {\estate \epsilon e} {S} a$ then $\bstej e {e'} a$ for some $e'$
\end{lemma}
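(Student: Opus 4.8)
The plan is to reduce the claim to the effectful multi-step semantics of \Cref{fig:sses} via \Cref{thm:stseeq}, using a standard stack-unwinding argument. For a K machine state $S$, let $\lceil S\rceil$ be the PCF expression obtained from the expression or value component of $S$ by plugging it successively into the frames of the stack of $S$, innermost frame first; thus $\lceil\estate\epsilon e\rceil=e$, $\lceil\vstate\epsilon v\rceil=v$, and $\lceil\estate{k;f}e\rceil=\lceil\estate k{e''}\rceil$ where $e''$ is $e$ placed in the hole of $f$, and symmetrically for $\vstate{-}{-}$. Because $\lceil\estate\epsilon e\rceil=e$, it suffices to prove the generalisation that for all states $S,T$ whose stacks are valid, $\kemj S T a$ implies $\msej{\lceil S\rceil}{\lceil T\rceil}{a}$: instantiating this at $S:=\estate\epsilon e$ and then invoking \Cref{thm:stseeq} yields $\bstej e{\lceil T\rceil}{a}$, and we take $e':=\lceil T\rceil$.

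First I would prove that small-stepping is a congruence under stacks: if $\ssej e{e'}a$ and $\stack k$, then $\ssej{\lceil\estate k e\rceil}{\lceil\estate k{e'}\rceil}{a}$, by induction on the derivation of $\stack k$ with one use of \textsc{SE-Seq(k)} per frame; the only side condition to discharge is that stepping the argument position of an application frame $\farg v$ requires $\val v$, which is precisely the premiss of \textsc{F-Arg}. Next I would verify the single-transition correspondence $\kesj S T a \Rightarrow \msej{\lceil S\rceil}{\lceil T\rceil}{a}$ by case analysis on the rule: the bookkeeping rules \textsc{KE-Zero}, \textsc{KE-Succ1}, \textsc{KE-Succ2}, \textsc{KE-Case}, \textsc{KE-Fun}, \textsc{KE-App1}, and \textsc{KE-App2} carry effect $1$ and leave $\lceil\cdot\rceil$ unchanged (pushing a subexpression onto a fresh frame and then unwinding cancels the push), hence contribute zero PCF steps, while the redex rules \textsc{KE-CaseZ}, \textsc{KE-CaseS}, \textsc{KE-App3}, and \textsc{KE-Eff} each realise exactly one PCF redex step---by \textsc{SE-CaseZ}, \textsc{SE-CaseS}, \textsc{SE-App}, and \textsc{SE-Effect} respectively---with the same effect, lifted through the surrounding stack by the congruence lemma. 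Finally I would lift this to whole computations by induction on the derivation of $\kemj S T a$, concatenating PCF step sequences and using that effects form a monoid with unit $1$; along the way one also checks (a short induction) that K machine transitions preserve stack validity, so the congruence lemma applies at every step---in the intrinsically typed Agda formalisation this is free, since frames are indexed so $\farg v$ admits only values.

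The only mildly delicate part is the per-transition bookkeeping: confirming that filling a frame commutes with unwinding the rest of the stack, and that each $\farg v$ frame genuinely holds a value so that the congruence lemma can fire. There is no deep difficulty and, crucially, no new proof apparatus is needed---the diverging case goes through with the same elementary ingredients as the converging one, which is the point of the big-stop method. In fact, specialising the generalisation above to $T:=\vstate\epsilon v$ (so that $\lceil T\rceil=v$) recovers \Cref{lem:stsoundc} as an immediate corollary.
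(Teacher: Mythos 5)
Your proof is correct, and its engine is the same one the paper uses: induct over the K machine derivation $\kemj{S}{T}{a}$, read states back to PCF expressions by unwinding the stack, show the bookkeeping transitions are silent under readback and the redex transitions realise exactly one source step lifted through the stack (with the \textsc{F-Arg} value premiss discharging the side condition of the sequencing schema), and glue with transitivity. The one place you diverge is the target of the per-transition correspondence: you land in the multi-step judgment $\msej{\lceil S\rceil}{\lceil T\rceil}{a}$ and then convert to big-stop at the very end via \Cref{thm:stseeq}, whereas the paper constructs the big-stop judgment directly, using big-stop transitivity (\Cref{lem:etrans}) in place of multi-step transitivity to concatenate --- that is what it means by big-stop being ``just as fit a target for these proofs as small-step semantics.'' Both routes are sound; yours buys a shorter local argument by reusing the already-proved (and comparatively heavyweight) equivalence theorem, while the paper's direct route keeps the soundness proof self-contained in the big-stop system and is what its Agda line counts measure. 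Your closing observation that specialising $T:=\vstate\epsilon v$ recovers \Cref{lem:stsoundc} is also accurate.
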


\begin{lemma}[Convergent Completeness\agda]
  \label[lemma]{lem:stcomplc}
  If $\bstej e v a$ and $\val v$ then $\kemj {\estate \epsilon e} {\vstate \epsilon v} a$.
\end{lemma}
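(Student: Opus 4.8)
The plan is to prove the slight strengthening of \Cref{lem:stcomplc} that keeps the stack arbitrary: \emph{if $\bstej e v a$ and $\val v$, then $\kemj{\estate k e}{\vstate k v}a$ for every stack $k$}; the stated lemma is the instance $k = \epsilon$. This is the same generalization already present in the big-step version, \Cref{lem:bigcomplete}, and it is needed because the recursive cases push frames onto the stack. Since the result $v$ is a value, the claim really only concerns ordinary terminating evaluation, so I expect it to track the textbook big-step completeness proof almost line for line; one could even derive it by first using \Cref{lem:stbeeq} to turn $\bstej e v a$ into $\bsej e v a$ and then inducting on the big-step derivation. I will instead induct directly on the derivation of $\bstej e v a$, keeping everything inside the big-stop system.

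The body of the proof is a routine case analysis on the last rule. For each progressing rule I unfold the matching K-machine transition(s) for that syntactic form, apply the induction hypothesis to each subderivation --- crucially with the stack extended by the appropriate frame --- and concatenate the resulting runs with \textsc{KEM-Step}, checking that the emitted effect words compose (left to right, using that $1$ is the unit of the effect monoid) to the effect in the conclusion. Concretely, \textsc{StE-App} runs \textsc{KE-App1}, then the IH on $e_1$ under $k;\ffun{e_2}$, then \textsc{KE-App2}, then the IH on $e_2$ under $k;\farg{\lam f x e}$ (a well-formed stack, since $\lam f x e$ is a value, so the premiss of \textsc{F-Arg} is met), then \textsc{KE-App3}, then the IH on the substituted body under $k$; \textsc{StE-CaseZ} and \textsc{StE-CaseS} run \textsc{KE-Case}, then the IH on the scrutinee under $k;\fcase {e_1} {x} {e_2}$, then \textsc{KE-CaseZ} resp.\ \textsc{KE-CaseS}, then the IH on the taken branch under $k$; and \textsc{StE-Eff} runs \textsc{KE-Eff} then the IH on $e$. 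In every such case the subderivations return values ($\lam f x e$, $v_2$, $\suc v$, $\z$, or the final $v$ itself), so the induction hypothesis --- which carries the ``result is a value'' side condition --- genuinely applies; and since the induction is on the derivation rather than on $e$, there is no trouble applying the IH to a substituted lambda body larger than the original application.

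For the schema \textsc{StE-Stop(k)}, the hypothesis $\val v$ makes most instances vacuous: if $E[e'_1,\dots,e'_n]$ equals the value $v$, then $E$ cannot be an application, a case expression, an effect-emitting expression, or a variable, none of which is ever a value. What survives is $v = \z$ (discharged by \textsc{KE-Zero}), $v = \lam f x e$ (discharged by \textsc{KE-Fun}), and $v = \suc v'$. In the successor case the stopping constraint forces $v'$ to be a value, so when $v'$ comes from a nontrivial subderivation $\bstej{e_1}{v'}{a}$ the IH applies under $k;\fsucc$ and we bracket it with \textsc{KE-Succ1} and \textsc{KE-Succ2}; when instead the derivation is the bare $\bstej{\suc v'}{\suc v'}{1}$ obtained from \textsc{StE-Stop(k)} at $k=0$, there is no subderivation, so I invoke the small auxiliary fact that a value runs to itself, $\kemj{\estate k v}{\vstate k v}{1}$, proved by an easy induction on $\val v$ using exactly \textsc{KE-Zero}, \textsc{KE-Succ1}/\textsc{KE-Succ2}, and \textsc{KE-Fun}. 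Taking $k = \epsilon$ then yields \Cref{lem:stcomplc}.

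I do not expect a genuine obstacle here: keeping a value attached to every relevant subexpression --- exactly what big-step, and hence big-stop, derivations do --- is what makes this go through by plain rule induction, unlike the direct small-step attempts sketched in \Cref{sec:motive}. The only care needed is bookkeeping: threading the correct intermediate stacks through the concatenated K-machine runs, discharging the value-runs-to-itself base case, and confirming that the effect words recombine in the right order for each rule's conclusion. These are the routine steps mechanized in the Agda development~\cite{artifact}.
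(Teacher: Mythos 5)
Your proposal is correct and matches the paper's approach: the paper proves \Cref{lem:stcomplc} by reproducing the big-step completeness argument (\Cref{lem:bigcomplete}) inside the big-stop system, i.e., rule induction on the derivation of $\bstej e v a$ with the stack generalized, which is exactly what you do. Your additional care with the \textsc{StE-Stop(k)} cases (vacuous for non-value forms, and the value-runs-to-itself fact for the $k=0$ instance) is precisely the bookkeeping the paper elides when it says the proof is ``practically identical'' to the big-step one.
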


\begin{lemma}[Divergent Completeness\agda]
  \label[lemma]{lem:stcompld}
  If $\bstej e {e'} a$ then $\kemj {\estate \epsilon e} {S} a$ for some 
  state $S$.
\end{lemma}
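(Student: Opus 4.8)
The plan is to prove the statement by rule induction on the derivation of $\bstej e {e'} a$, after first strengthening it to an arbitrary stack: for all $e$, $e'$, $a$ and every stack $k$, if $\bstej e {e'} a$ then $\kemj {\estate k e} {S} a$ for some state $S$; \Cref{lem:stcompld} is then the instance $k=\epsilon$. Generalizing over $k$ is unavoidable here, because in rules like \textsc{StE-App} the sub-derivations talk about subexpressions that the K machine evaluates under a \emph{larger} stack (first $k;\ffun{e_2}$, then $k;\farg{v}$), so the induction hypothesis must be available there. This is the same move made in the proof of \Cref{lem:bigcomplete}, and indeed the whole argument mirrors that proof, with one new case for \textsc{StE-Stop(k)} and one structural relaxation elsewhere. (One could instead route through \Cref{thm:stseeq} to reduce to a multi-step statement, but that reintroduces precisely the inductive awkwardness that big-stop is meant to avoid, so the direct big-stop induction is preferable.) Along the way I will use the stack-generalized form of \Cref{lem:stcomplc} — which is how Convergent Completeness is established — namely: if $\bstej e v a$ with $\val v$, then $\kemj {\estate k e} {\vstate k v} a$ for every $k$.

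For the schema \textsc{StE-Stop(k)} with stop-index $m$: if $m=0$ then $e'=e$ and $a=1$, and $S=\estate k e$ works via \textsc{KEM-Refl}. If $m\ge 1$, then $E$ has one or two subexpressions, and I case on its shape. When $E=\suc{-}$ (so $m=1$), \textsc{KE-Succ1} steps $\estate k {\suc{e_1}}$ to $\estate{k;\fsucc}{e_1}$, and the induction hypothesis applied to $\bstej{e_1}{e'_1}{a_1}$ at stack $k;\fsucc$ finishes; the cases $E=\ifz{e_1}{x}{e_2}{-}$ ($m=1$, via \textsc{KE-Case}) and $E=\app{-}{-}$ with $m=1$ (via \textsc{KE-App1}) are the same. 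When $E=\app{-}{-}$ with $m=2$: \textsc{KE-App1} enters $e_1$; since $\val{e'_1}$, generalized \Cref{lem:stcomplc} runs $e_1$ to $\vstate{k;\ffun{e_2}}{e'_1}$ with effect $a_1$; \textsc{KE-App2} moves to $\estate{k;\farg{e'_1}}{e_2}$; and the induction hypothesis on $\bstej{e_2}{e'_2}{a_2}$ at stack $k;\farg{e'_1}$ finishes. Composing single steps with \textsc{KEM-Step} and simplifying the effect products with the monoid laws (with $1$ the identity) gives the conclusion's effect $a_1\dots a_m$.

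The non-schema rules follow the same template: take one or two fixed K-machine steps to get into position, use generalized \Cref{lem:stcomplc} to drive any subexpression that has already stopped at a value to completion, and then recurse with the induction hypothesis on the one (at most) partially-evaluated subexpression. Concretely: \textsc{StE-Eff} uses \textsc{KE-Eff} then the induction hypothesis; \textsc{StE-CaseZ} and \textsc{StE-CaseS} use \textsc{KE-Case}, then convergent completeness on the scrutinee to $\vstate{k;\fcase{e_1}{x}{e_2}}{\z}$ (resp.\ to $\vstate{k;\fcase{e_1}{x}{e_2}}{\suc v}$, using $\val v$), then \textsc{KE-CaseZ}/\textsc{KE-CaseS}, then the induction hypothesis; \textsc{StE-App} uses \textsc{KE-App1}, then convergent completeness to $\vstate{k;\ffun{e_2}}{\lam f x {e_0}}$, then \textsc{KE-App2}, then convergent completeness to $\vstate{k;\farg{\lam f x {e_0}}}{v_2}$ (using $\val{v_2}$), then \textsc{KE-App3}, then the induction hypothesis on the substituted body. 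In each case the accumulated effects multiply to exactly the effect in the conclusion.

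I do not expect a genuine obstacle. The only things that need care are (i) stating the induction hypothesis over an arbitrary stack, and (ii) in each multi-subexpression rule, routing already-evaluated subexpressions through (generalized) \Cref{lem:stcomplc} while recursing only on the at-most-one subexpression still in progress; the remaining work is the routine monoid bookkeeping for effect sequences. This is exactly the ``big-stop method'' advertised in the paper: start from the big-step completeness proof, add the trivial \textsc{StE-Stop(k)} case, and relax ``evaluate this subexpression to a value'' to ``recurse'' for the final subexpression — which is why the Agda formalization stays compact.
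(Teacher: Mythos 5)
Your proposal is correct and follows essentially the same route as the paper: rule induction on the big-stop derivation generalized over an arbitrary stack, reusing (stack-generalized) Convergent Completeness to drive already-value subexpressions to completion, and adding the straightforward new cases for the \textsc{StE-Stop(k)} instances — precisely the ``big-stop method'' the paper describes and the structure of its Agda proof. No gaps.
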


\Cref{lem:stsoundc,lem:stsoundd} (soundness) can both 
be proven using big-stop semantics in a 
similar manner to small-step semantics, 
as they proceed by induction over the  
K machine dynamics rather than the PCF dynamics.
Big-stop semantics is just as fit a
target for these proofs as small-step semantics.

\Cref{lem:stcomplc} (convergent completeness) is straightforward to prove 
using big-stop semantics, just as 
it was for big-step as shown in \Cref{sec:motive}.
Because big-stop 
semantics are just a small extension of 
big-step semantics, 
the thrust of the proof of \Cref{lem:stcomplc}
is practically identical 
to the easy big-step approach of \Cref{lem:compl}.

Finally, the proof of \Cref{lem:stcompld} 
(divergent completeness)
proceeds by building upon \Cref{lem:stcomplc}. 
To extend \Cref{lem:stcomplc} to cover nontermination,
the new proof just needs a new case for each of the 
rules following from the schema \textsc{StE-Stop(k)}. 
Each of these new cases is straightforward and 
proceeds much the same as a typical big-step case.
The whole proof of this lemma is therefore  
be proved directly with rule induction, 
unlike when using small- or big-step 
semantics.

Altogether, these proofs show that big-stop semantics 
is more useful than the big-step semantics it extends, and 
it also can
provide a superior proving experience as compared 
to small-step semantics.
Moreover, as these proofs concern 
the correctness of compilation, we expect the benefits 
of big-stop semantics to be widely applicable.
Finally, these proofs demonstrate 
that the nondeterministic nature of big-stop semantics 
pose no issue for formalization.

In \Cref{fig:k-machine-proof-sizes},
we compare the sizes of the Agda proofs
of the big-stop soundness and completeness lemmas
(\Cref{lem:stsoundc,lem:stsoundd,lem:stcomplc,lem:stcompld}) against 
the equivalent lemmas formulated using
big-step 
and small-step semantics. 
That is to say, we compare the sizes of proving the
soundness and completeness of the K machine with 
respect to small-step, big-step, and big-stop semantics.

\Cref{fig:k-machine-proof-sizes}'s 
results demonstrate that the big-stop proofs of 
soundness and convergent completeness are each
similar in size to 
traditional big-step proofs of 
the corresponding convergent properties.
Recall 
that big-step simply cannot handle divergent properties,
and so those table entries are left blank.
The only place big-stop 
has a larger proof, divergent completeness,
is because the divergent proof builds 
upon the convergent one, so the lines of 
the big-stop convergent completeness lemma 
are also counted toward divergent completeness.

\Cref{fig:k-machine-proof-sizes}
also shows that traditional small-step proofs of soundness 
are around the same size as their big-step and -stop 
counterparts,
which is to be expected because these proofs 
induct over K machine dynamics rather than 
PCF dynamics. Small-step completeness 
is where the onerousness of small-step semantics comes into play.
Not only are these proofs around twice as large 
as big-step/stop, but the proofs themselves 
depend on the big-step/stop proofs. 
The shortest proofs we could find
for small-step completeness involve 
translating small-step into big-step/stop 
and then invoking the completeness of big-step/stop.
Around half of the lines attributed to these proofs 
are for the translation and the other 
half are for big-step/stop
completeness. It is not clear that 
there is any better way to prove these properties 
using small-step semantics, and it is telling 
that big-stop semantics seems to be the best
way to prove divergent completeness.

\begin{table}
  \centering
  \begin{tabular}{@{}lcccc@{}}
    \toprule
    & \multicolumn{2}{c}{\textbf{Soundness}} & \multicolumn{2}{c}{\textbf{Completeness}} \\
    \cmidrule(lr){2-3} \cmidrule(lr){4-5}
    & \textbf{Convergent}
    & \textbf{Divergent} 
    & \textbf{Convergent}
    & \textbf{Divergent}  \\
    & \Cref{lem:stsoundc}
    & \Cref{lem:stsoundd}
    & \Cref{lem:stcomplc}
    & \Cref{lem:stcompld} \\
    \midrule
    \textbf{Big-stop}  & 76 & 72 & 32 & 72 \\
    \textbf{Big-step}  & 81 & -- & 29 & -- \\
    \textbf{Small-step} & 88 & 84 & 68 & 128 \\
    \bottomrule
  \end{tabular}
\caption{Comparison of Agda proof sizes, measured by non-comment, non-blank lines of code
  }
  \label{fig:k-machine-proof-sizes}
\end{table}

\subsection{The Big-Stop Method}

\Cref{lem:stcomplc,lem:stcompld} exemplify a general proof technique for
generalizing a theorem about converging 
computations to cover diverging ones too. 
We call this technique the
\emph{big-stop method}.
The starting point is a theorem that has been proved by induction on a
standard big-step judgment.
Here the appropriate theorem is the effectful version of \Cref{lem:bigcomplete}:
If $\bsej e v a$ then $\kesj {\estate k e} {\vstate k v} a$.  

The methodical generalization to diverging computations proceeds in three
steps.
First, extend the big-step judgment to big-stop
by adding the appropriate stopping rules.
Second, reproduce the proof of the original 
theorem using the new big-stop
judgment, yielding a proof concerning values (\Cref{lem:stcomplc}).
Third, add induction cases for the new stopping rules, building 
upon the previous proof (\Cref{lem:stcompld}).
All together, 
these steps result in the generalized theorem for partial
evaluations.

The key---which demonstrates the advantage of big-stop over
small-step---is that the generalized 
theorem is void of complicated 
inductive invariants.
In \Cref{lem:stcompld}, the example at hand, this 
materializes in the
relationship (or lack thereof) between 
the partially-evaluated expression $e'$ and state $S$.
In a multi-step setting, one would need
to relate $e'$ and $S$ because they would
eventually occur on the left-hand side 
of relations where they would be inductively consumed.
An invariant would then be necessary to ensure 
that the correct properties are maintained through this consumption.
In contrast, the big-stop system never consumes any
expression it puts out unless that expression is a value,
which is much simpler to deal with than 
an arbitrary partially-evaluated expression---in 
fact, it can be dispatched with the theorem 
for values (here \Cref{lem:stcomplc}), with no complicated 
invariant needed.
Thus, the proof of 
the generalized theorem can proceed by straightforward
induction.


\section{Further Ergonomic Optimization}\label{sec:opt}

This section presents some additional variants 
of big-stop systems that further minimize the 
extent of the change from a typical big-step system.
As set up previously in the paper, big-stop semantics already
essentially requires only one new rule schema, but 
this schema is most easily dispatched in practice 
by replacing it with a rule for each of its 
instantiations---we take this approach with our Agda proofs.
This replacement results in a number of additional rules proportional 
to the number of language constructs, and each of 
these rules needs its own proof case.
This ballooning of proof obligations when using big-stop semantics
can be tedious, even if each new case is easy to dispatch.

With fewer changes, the proof obligations of such a system 
are closer to those of the original big-step system. 
These obligations can be optimized to a point where 
one practically gets the nonterminating results of big-stop 
semantics for free after establishing the standard 
big-step result. At the limit, big-stop proofs
only require extending big-step proofs with 
one easy proof case.

\subsection{Normal Form}

One way to minimize the number of rules required in big-stop 
semantics is to only consider languages that are in
monadic normal form~\cite{jones1993partial, hatcliff1994generic}
(a close relative of A-normal form~\cite{sabry1992reasoning,flanagan1993essence}).
Monadic normal form requires that all possible subexpressions 
are either values or let-bound variables. 
This normalization does not affect expressivity of the language.
An example of monadic normal form PCF expressions is given by the 
grammar of $e$ in \Cref{fig:monadiclang}.

\begin{figure}
  \small
    \begin{subfigure}{0.4\textwidth}
    \begin{align*}
    v ::= & \; x \mid \lam {f} {x} e \mid \z \mid \suc v 
    \end{align*}
    \caption{Variables and values}
    \end{subfigure}
    \begin{subfigure}{0.4\textwidth}
    \begin{align*}
    e ::= & \;v \mid \ifz {e_1} x {e_2} v \mid \app {v_1} {v_2} \mid \elet x {e_1} {e_2} 
    \end{align*}
    \caption{Expressions}
    \end{subfigure}
\caption{Language of monadic normal PCF}
\label{fig:monadiclang}
\end{figure}

Monadic normal form simplifies the dynamics of 
computation by removing most
congruences. Instead, monadic normal form
uses let-expressions to explicitly sequence computations
in a syntactic way. This approach
means the only congruence rule needed is that for
the let-expression itself.
See the small-step
semantics of \Cref{fig:monadicsmall} where the only 
congruence rule is \textsc{SM-Let1}.

\begin{figure}
  \small
    \begin{mathpar}
        \inferrule[SM-Let1]{
            \ssj {e_1} {e_1'} 
        }{
            \ssj {\elet x {e_1} {e_2}} {\elet x {e_1'} {e_2}} 
        }

        \inferrule[SM-Let2]{
            \val v
        }{
            \ssj {\elet x v e} {[v/x]e} 
        }

        \inferrule[SM-CaseZ]{
        }{
            \ssj {\ifz {e_1} {x} {e_2} \z} {e_1} 
        }

        \inferrule[SM-CaseS]{
        }{
            \ssj {\ifz  {e_1} {x} {e_2} {\suc v}} {[v/x]e_2} 
        }

        \inferrule[SM-App]{
        }{
            \ssj {\lam {f} {x} {e}\, v} {[\lam {f} {x} {e}/f,\, v/x]e} 
        }
    \end{mathpar}
    \caption{Small-step semantics for monadic normal PCF}
    \label{fig:monadicsmall}
    \end{figure}

Both the small-step semantics of \Cref{fig:monadicsmall} 
and the corresponding big-step semantics \Cref{fig:monadicbig}
also exhibit another simplification of monadic normal form:
fewer premisses are required. Because monadic normal 
form forces subexpressions to be values as much as possible,
no premisses need to be spent on their evaluation.
Compare \Cref{fig:monadicsmall,fig:monadicbig} to 
their counterparts \Cref{fig:sss,fig:bigpure}.

\begin{figure}
  \small
    \begin{mathpar}
        \inferrule[BM-Val]{
       \val v
    }{
        \bsj {v} {v} 
    }
    
        \inferrule[BM-Let]{
            \bsj {e_1} {v_1} 
            \\
            \bsj {[v_1/x]e_2} {v_2}
        }{
            \bsj {\elet x {e_1} {e_2}} {v_2} 
        }
  \\
  \inferrule[BM-CaseZ]{ 
    \bsj {e_1} {v_1} 
  }{ 
    \bsj  {\ifz {e_1} x {e_2} \z} {v_1} 
  }
    
  \inferrule[BM-CaseS]{ 
    \bsj {[v/x]e_2} {{v_2}} 
  }{ 
    \bsj  {\ifz  {e_1} x {e_2} {\suc v}} {v_2}
  }
    
  \inferrule[BM-App]{ 
    \bsj {[{\lam f x e}/f, v/x]e} {v'} 
  }
  { \bsj {\app {\lam f x e} {v}} {v'}}
    \end{mathpar}
    \caption{Big-step semantics for monadic normal PCF}
    \label{fig:monadicbig}
    \end{figure}

The appropriate big-stop extension of 
\Cref{fig:monadicbig} is given by \Cref{fig:monadicstop},
which only contains one more rule in total.
This extension allows stopping at any point with the two
rules \textsc{StM-Stop} and \textsc{StM-Let1}.
The rule \textsc{StM-Stop} subsumes \textsc{BM-Val} much like 
\textsc{St-Stop(0)} subsumes \textsc{B-Val}, and the rule 
\textsc{StM-Let1} is the single congruence 
rule needed to propagate stopping. 
The rule \textsc{StM-Let2} adds a value premiss 
to the rule \textsc{BM-Let}, just as was done to many rules 
in previous big-stop extensions. This time, however,
\textit{only} the rule \textsc{StM-Let2} needs a new value 
premiss, and the rest of the rules can remain totally untouched.
The rules \textsc{StM-Let1} and \textsc{StM-Let2} together 
are respectively
analogous to the rules \textsc{SM-Let1} and \textsc{SM-Let2}
of \Cref{fig:monadicsmall}, much like 
how \textsc{St-Stop(k)} is analogous to \textsc{S-Seq(k)}.

\begin{figure}
  \small
    \begin{mathpar}
        \inferrule[StM-Stop]{
    }{
        \bstj {e} {e} 
    }

     \inferrule[StM-Let1]{
            \bstj {e_1} {e_1'} 
        }{
            \bstj {\elet x {e_1} {e_2}} {\elet x {e_1'} {e_2}}
        }

        \inferrule[StM-Let2]{
            \bstj {e_1} {v_1} 
            \\
            \val {v_1}
            \\
            \bstj {[v_1/x]e_2} {e_2'}
        }{
            \bstj {\elet x {e_1} {e_2}} {e_2'} 
        }

  \inferrule[StM-CaseZ]{ 
    \bstj {e_1} {e_1'} 
  }{ 
    \bstj  {\ifz {e_1} x {e_2} \z} {e_1'} 
  }
    
  \inferrule[StM-CaseS]{ 
    \bstj {[v/x]e_2} {{e_2'}} 
  }{ 
    \bstj  {\ifz  {e_1} x {e_2} {\suc v}} {e_2'}
  }
    
  \inferrule[StM-App]{ 
    \bstj {[{\lam f x e}/f, v/x]e} {e'} 
  }{ 
    \bstj {\app {\lam f x e}  {v}} {e'}
  }
    \end{mathpar}
    \caption{Big-stop semantics for monadic normal PCF}
    \label{fig:monadicstop}
\end{figure}

The key takeaway of using monadic normal form
is that the unwieldy rule schema for sequencing can be replaced with 
rules for let-bindings rather than rules for every possible 
congruence. The resulting system overall 
has just one more rule in total than 
its big-step base and lightly touches two existing rules
(\textsc{BM-Val} is loosened to \textsc{StM-Stop} and 
\textsc{BM-Let} gains an explicit value premiss as 
\textsc{StM-Let2}). 
Therefore, rather than dispatch a rule schema like \textsc{St-Stop(k)} 
by roughly doubling the number of rules to consider in proof cases, 
monadic normal form enables one to only consider a fixed number of 
new rules.
This accounting applies 
to any language that can be put into the appropriate 
monadic normal form,
not just PCF.

\subsection{Evaluation Contexts}

A similar effect to using monadic normal 
form can be induced via adapting
big-step 
semantics to evaluation contexts. 
Such an approach reduces the big-stop 
extension to just one rule replacement.

The appropriate meta-syntax for PCF's
evaluation contexts is given by $C$ in 
the following grammar, where 
$\hole$ is a hole and where $e$ and $v$ are 
expressions and values as defined in 
\Cref{fig:lang}. This meta-syntax allows us to write $C\langle e \rangle$ to
represent the expression that results from 
replacing the hole in $C$ with $e$, so that,
e.g., $\suc \hole \langle \z \rangle = \suc \z$.
\[
    C \;::=\; \hole  \mid \suc C \mid \ifz e x e C \mid \app C e \mid \app v C
\]

The rules of \Cref{fig:ecstop}
save for \textsc{EC-Stop} then yield
an evaluation context system equivalent to big-step semantics.
Replacing \textsc{EC-Val} with 
\textsc{EC-Stop} yields the corresponding
big-stop semantics.

In this formulation, the rules \textsc{EC-CaseZ}, \textsc{EC-CaseS},
and \textsc{EC-App} are almost the same 
as their monadic normal counterparts. The only difference is
that certain subexpressions are no longer 
syntactically guaranteed to be values 
and so require value premisses. These rules 
therefore function in the same way as their counterparts,
describing how to compute redexes.

The workhorse of the evaluation context formulation 
is the rule \textsc{EC-Seq}. This rule handles all congruences 
by finding the next subexpression ready to 
take one step,
which is picked out with a hole. 
However, big-step semantics needs to consider 
the result of many steps of evaluation, not just one,
and subsequent steps of evaluation may not be local 
to the first. The rule \textsc{EC-Seq} handles this 
by chaining together sequences of 
evaluation steps, much like the rule \textsc{M-Step} 
does for multi-step semantics and let-bindings 
do for monadic normal form. 
First the hole 
contents $e_1$ is evaluated, resulting 
in $e_1'$. Then the resulting
expression $C\langle e_1' \rangle$ is reassessed for evaluation, allowing 
the meta-syntax to readjust the hole to the next part 
of the expression to evaluate.

\begin{figure}
  \small
\begin{mathpar}
    \inferrule[EC-Stop]{
    }{
        \bstj e e
    }

    \inferrule[EC-Val]{
        \val v
    }{
       \bstj v v
    }

    \inferrule[EC-Seq]{
        \bstj {e_1} {e_1'}
        \\
        \bstj {C\langle e_1'\rangle} {e_2}
    }{
        \bstj {C\langle e_1 \rangle} {e_2}
    }
  \\ 
    \inferrule[EC-CaseZ]{
        \bstj {e_1} {e_1'}
    }{
        \bstj {\ifz {e_1} x {e_2} \z} {e_1'}
    }

    \inferrule[EC-CaseS]{
        \val {v}
        \\
        \bstj {[v/x]e_2} {e_2'}
    }{
        \bstj {\ifz {e_1} x {e_2} {\suc {v}}} {e_2'}
    }

    \inferrule[EC-App]{
        \val {v_2}
        \\
        \bstj {[\lam f x {e}/f,\, v/x]e} {e'}
    }{
        \bstj {\app {\lam f x {e}}{v}} {e'}
    }
\end{mathpar}
\caption{
    Big-stop semantics for evaluation context PCF
}
\label{fig:ecstop}
\end{figure}

This approach does 
have at least one theoretical drawback resulting from 
how the system contorts around the rule \textsc{EC-Seq}.
This rule does not guarantee 
evaluation progress, as \textsc{EC-Stop} or \textsc{EC-Val} 
could satisfy its left-hand premiss, leaving the righthand 
premiss equal to the conclusion. 
As a result, arbitrarily deep non-progressing derivations exist 
for all well-typed expressions, which is not 
a desirable property. Nonetheless, if one 
is willing to introduce some 
additional complexity, this issue could 
be alleviated by mutually inductively defining 
the big-stop judgment alongside the progressing 
judgment.

\subsection{Annihilator Effect}

It is often sufficient to only consider the effects emitted by
computation rather than any value resulting from computation.
(Even if one wishes to consider resulting values, 
information about such values can be embedded into 
the emitted effects.)
In such an effectful setting, one can leverage 
the algebraic structure of effect sequences 
to simplify away most
stopping rules.

To make sense of the algebraic approach
provided in this section, it is necessary to 
consider the structure of the effects 
represented by the term $a$. Such a term represents a 
chronological sequence of effects.
Such a sequence can be treated as a \textit{monoid},
which is an algebraic structure with an 
associative binary operation and an identity element.
In this case, the binary operation is 
sequencing---$ab$ means the effects of $b$ follow those of $a$---and 
$1$ represents the identity effect.

To support big-stop semantics,
the monoid of effects needs to be extended 
with one new element not used in the 
original small-step system: $0$. This new element
does not need to have any actual incarnation as 
an effect and can instead be thought of as 
an algebraic bookkeeping device. The new $0$ element
acts 
as a left annihilator for the monoid,
so that $0a = 0$ for any $a$.
The fact that it is a \textit{left} annihilator
rather than an annihilator simpliciter 
means that $a0$ does not 
necessarily equal $0$.
As a result, a string of effects including $0$s
appears to cut the string after the first $0$:
$abc0def = abc0$.

\begin{figure}
  \small
\begin{mathpar}
  \inferrule[StA-Succ]{
    \bstej e {e'} a
  }{
    \bstej {\suc e} {\suc {e'}} a 
  }

  \inferrule[StA-CaseZ]{ 
    \bstej e \z a
    \\ 
    \bstej {e_1} {v_1} b 
  }{ 
    \bstej  {\ifz {e_1} x {e_2} e} {v_1} {ab}
  }
    
  \inferrule[StA-CaseS]{ 
    \bstej e {\suc v} a
    \\ 
    \bstej {[v/x]e_2} {{v_2}} b
  }{ 
    \bstej  {\ifz  {e_1} x {e_2} e} {v_2} {ab}
  }
    
  \inferrule[StA-App]{ 
    \bstej {e_1} {\lam f x e} a
    \\ 
    \bstej {e_2} {v_2} b
    \\\\
    \bstej {[{\lam f x e}/f, v_2/x]e} {v} c
  }{ 
    \bstej {\app {e_1} {e_2}} {v} {abc}
  }

  \inferrule[StA-Eff]{
    \bstej e v b
  }{
    \bstej {\eff a e} v {ab}
  }

      \inferrule[StA-Val]{
       \val v
    }{
        \bstej {v} {v} 1
    }

       \inferrule[StA-Stop]{
       \val v
    }{
        \bstej {e} {v} 0
    }
\end{mathpar}
\caption{Big-stop semantics with the annihilator}
\label{fig:newe}
\end{figure}

To then extend effectful big-step rules into big-stop rules,
only one new rule needs to be added, and no other rules 
need to be touched. This new rule should say that 
an expression can always 
halt evaluation at an arbitrary value by emitting the 0 effect.
For example, extending the big-step rules of
\Cref{fig:bigeffect} to big-stop
yields \Cref{fig:newe}, where the single new stopping rule
is \textsc{StA-Stop}. Every other rule is identical 
to its big-step counterpart.

This treatment 
allows the existing big-step rules to naturally
propagate stoppage since 
every halted expression is just some value. This 
compatibility is the reason
that this big-stop
system can get by with only one stopping
rule---the resulting arbitrary value is
treated just as any other value 
and the existing rules
can compose the effects of its (partial) 
evaluation with no added difficulty.

One might also be concerned that this approach 
could be unsafe because any value at all can be chosen
for the expression $e$ in \textsc{StA-Stop}, so computation might not 
continue as the typical semantics would dictate. However,
because stopping emits the 
$0$ effect, any computation 
that follows stopping has its effects
annihilated. Moreover, one can use the presence of 
the 0 effect to determine whether the value 
resulting from big-stopping is spurious or not.
The effects include 0 iff the stopping rule is used 
in the derivation, since 0 is a new effect 
not present in the small- or big-step semantics. Thus 
if there is a 0, the resulting value is meaningless,
and if there is no 0, the resulting value matches 
that found by the big-step semantics.
One might
achieve the same result by replacing the arbitrary 
value with a dummy token, but then the dynamics 
would need to account for premisses that 
expect certain forms of values, such as 
the premiss $\bstej e \z a$ of \textsc{StA-CaseZ}.

With the new stopping rule in place, 
the following correspondence can be drawn between 
effectful big-stop semantics and 
small-step semantics:

\begin{theorem}[Annihilating Stop/Multi Equivalence]
    \label{thm:annihilatoreq}
For all expressions $e$ and effects $a$,
\[ 
    (\exists {e_1}.\, \bstej{e} {e_1} {a0})
    \iff
    (\exists {e_2}.\, \msej{e} {e_2} a)
\]
\end{theorem}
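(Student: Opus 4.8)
\medskip
The plan is to prove the equivalence by routing it through the schema-based effectful big-stop system of \Cref{fig:stopeffect}, whose coincidence with the effectful multi-step relation is already \Cref{thm:stseeq}. Since the effect annotations in the language (and hence all effects emitted by \textsc{SE-Effect}) are drawn from the original, $0$-free monoid, $0$ can only enter an annihilator derivation through \textsc{StA-Stop}, and --- using that $0$ is a \emph{left} annihilator --- every element of the extended monoid equals a unique word of the form $w$ or $w0$ with $w$ being $0$-free. It therefore suffices to fix a $0$-free $a$ and show that $\bstej{e}{e_1}{a0}$ holds in the annihilator system of \Cref{fig:newe} for some $e_1$ iff $\bstej{e}{e'}{a}$ holds in the schema system for some $e'$; \Cref{thm:stseeq} then converts the latter to $\msej{e}{e'}{a}$. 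Two further observations are used throughout: every annihilator derivation concludes with a value, and an annihilator derivation that never uses \textsc{StA-Stop} (equivalently, whose emitted effect is $0$-free) is literally a big-step derivation of \Cref{fig:bigeffect}, hence also a schema derivation by \Cref{lem:stbeeq}.

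For the ``$\Rightarrow$'' half of the bridge I would induct on the annihilator derivation of $\bstej{e}{e_1}{c}$ and prove the dichotomy: either $c$ is $0$-free, so the derivation is a big-step and hence a schema derivation of $\bstej{e}{e_1}{c}$, or $c = a0$ for a $0$-free $a$, in which case $\bstej{e}{e'}{a}$ holds in the schema system for some $e'$. In the second case one locates the leftmost premise whose emitted effect already contains a $0$; the earlier premises then have $0$-free effects (genuine evaluations), and the instance of \textsc{StE-Stop(k)} in which $k-1$ is the number of those earlier premises reproduces exactly this prefix of computation, with the left-annihilator law discarding the tail. This is routine case analysis on the last rule, the only mild bookkeeping coming from the three-premise rule \textsc{StA-App}, which splits into a few sub-cases according to whether the operator, operand, or function-body subcomputation is the first to stop. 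Applied to the theorem's hypothesis $c = a0$ (which is not $0$-free) this places us in the second case, and \Cref{thm:stseeq} finishes this direction.

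The ``$\Leftarrow$'' half --- from $\msej{e}{e'}{a}$, hence by \Cref{thm:stseeq} a schema derivation of $\bstej{e}{e'}{a}$, to an annihilator derivation of $\bstej{e}{v}{a0}$ --- is the crux, and I would prove it by induction on the schema derivation. Each structural \textsc{StE-} rule maps to the matching \textsc{StA-} rule applied to the recursively obtained annihilator derivation, whose effect has the form $(\cdots)0$, so composing effects yields $a0$ again; the base rule \textsc{StE-Stop(0)}, i.e.\ $\bstej{e}{e}{1}$, maps to \textsc{StA-Stop} (with $a=1$ and any chosen value); and any premise of a schema rule that already concludes a value --- the operator and operand premises of \textsc{StE-App}, the scrutinee premises of \textsc{StE-CaseZ}/\textsc{StE-CaseS} --- is reused directly as a big-step, hence annihilator, derivation via \Cref{lem:stbeeq}. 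The delicate case, which I expect to be the main obstacle, is \textsc{StE-Stop(k)} for $k\ge 1$ when the $k$-th subexpression sits in a position where the big-step (annihilator) rules demand a result of a specific shape: a $\lambda$-form for an application's operator, or a numeral for a case scrutinee. There the replaying derivation must descend through the structural annihilator rules and finish with \textsc{StA-Stop} at a value of exactly the demanded shape, which is available only because \textsc{StA-Stop} is stated to range over \emph{all} values. Verifying that such a shaped stopping derivation always exists is the subtle point: it is immediate for languages in monadic normal form (where these value-demanding positions are already syntactically values), which is why the annihilator is naturally paired with that form, and for unrestricted PCF it reduces to checking that no intervening congruence (such as a successor wrapped around an operator position) blocks the stop from producing the needed shape. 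Granting this, the induction is otherwise mechanical, and the two halves of the bridge combine with \Cref{thm:stseeq} to give the stated equivalence.
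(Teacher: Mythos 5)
Your overall architecture---bridging the annihilator system of \Cref{fig:newe} to the schema system of \Cref{fig:stopeffect} and then invoking \Cref{thm:stseeq}---is sound, and your forward direction (normal forms $w$ and $w0$ in the extended monoid, locating the leftmost premise whose effect contains $0$, replaying the $0$-free prefix via \textsc{StE-Stop(k)}) goes through. The paper itself offers only a one-line ``induction on the derivation of each judgment,'' so the real question is whether the point you yourself flag as ``the subtle point'' in the ``$\Leftarrow$'' direction can be discharged.

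At the stated level of generality it cannot; this is a genuine gap, not a routine check. Take $e = \app{\suc{\eff{c}{\z}}}{\z}$. By \textsc{SE-Seq(1)} (twice) and \textsc{SE-Effect} we get $\msej{e}{\app{\suc{\z}}{\z}}{c}$, so the right-hand side of the equivalence holds with $a = c$. But the only annihilator rules applicable to an application are \textsc{StA-App} and \textsc{StA-Stop}: the latter emits exactly $0$, which differs from $c0$ for any non-identity $c$; and the former requires its first premise to conclude a $\lambda$-value, which for the operator $\suc{\eff{c}{\z}}$ is possible only via \textsc{StA-Stop} (since \textsc{StA-Succ} forces a result of successor shape), and that annihilates $c$. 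Hence no $e_1$ satisfies $\bstej{e}{e_1}{c0}$, and the equivalence fails---this is exactly your ``successor wrapped around an operator position,'' and the check you defer comes back negative for unrestricted PCF. The repair is to restrict to well-typed expressions (which the statement ``for all expressions'' does not do, though the paper's conventions arguably intend it) and to strengthen the inductive claim of your ``$\Leftarrow$'' direction to a typing-indexed one: for well-typed $e$ with a schema derivation $\bstej{e}{e'}{a}$, there is an annihilator derivation $\bstej{e}{v}{a0}$ in which $v$ may be chosen of any shape compatible with the type of $e$ (an arbitrary $\lambda$ at arrow type; $\z$ or a successor, as dictated by the value spine, at $\nat$). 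With that strengthening the remaining cases of your induction close as you describe; without it, the induction does not go through.
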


\begin{proof}
This equivalence follows by induction on 
the derivation of each judgment. 
\end{proof}

Unlike the previous effectful equivalence provided,
\Cref{thm:stseeq},
\Cref{thm:annihilatoreq} does not express 
an exact equivalence between big-stopping 
and multi-stepping. 
In particular, the expressions $e_1$
and $e_2$ bear no clear relation, and 
the effect trace induced in the big-stop 
judgment ends in 0. (The multi-step 
judgment's effect cannot include any 0s because 
0 is a new effect introduced only via 
\textsc{StA-Stop}.) Nonetheless,
\Cref{thm:annihilatoreq} shows that 
the key focus of this setting, the effects of $a$,
are clearly maintained through each semantic system. 
As a result, \Cref{thm:annihilatoreq} can 
be used similarly to previous equivalences to 
recover most of the value of small-step 
semantics in this setting.



\section{Imperative Variant}\label{sec:imp}

To allay any concerns that the big-stop approach 
only applies in nicely-behaved functional 
settings with impoverished effects,
this section shows how to apply big-stop techniques
to an imperative while-loop language with mutation.

\paragraph{An Imperative Language}

The language we consider has the following statements, where $x$
ranges over variable names and $a$ stands for integer-valued
arithmetic expressions that contain variables (e.g., $x + 5$).
For simplicity, all variables stand for integers.
\[ s ::= \iif a s \mid \iset x a \mid \iwhile a s \mid \iseq s s \mid \iskip \]
%
%
The guards of the control-flow statements, namely if-then conditionals
and while-do loops, test for the inequality of a given arithmetic
expression with zero.
Compound statements are sequenced using
$\iseq {s_1} {s_2}$.
The language is also equipped with the identity 
statement $\iskip$, which does nothing.

\begin{figure}
\small
\begin{mathpar}
\inferrule[SI-Bind]{
    \sigma(a) = z
}{
    \ssj {\langle  {\iset x a} \mid \sigma \rangle} {\langle \iskip \mid [z/x]\sigma  \rangle}
}

\inferrule[SI-Seq1]{
    \ssj {\langle s_1 \mid \sigma \rangle} {\langle s_1' \mid \sigma' \rangle}
}{
    \ssj {\langle \iseq {s_1} {s_2} \mid \sigma \rangle} {\langle \iseq {s_1'} {s_2} \mid \sigma'\rangle}
}

\inferrule[SI-Seq2]{
}{
    \ssj {\langle \iseq {\iskip} {s} \mid \sigma \rangle} {\langle s \mid \sigma\rangle}
}

\inferrule[SI-Then]{
    \sigma(a) \neq 0
}{
    \ssj {\langle  {\iif a s} \mid \sigma \rangle} {\langle  s \mid \sigma\rangle}
}

\inferrule[SI-Else]{
    \sigma(a) = 0
}{
    \ssj {\langle {\iif a s} \mid \sigma \rangle} {\langle {\iskip} \mid \sigma\rangle}
}

\inferrule[SI-Do]{
    \sigma(a) \neq 0
}{
    \ssj {\langle {\iwhile a s} \mid \sigma \rangle} {\langle  {\iseq s {\iwhile a s}}\mid \sigma\rangle}
}

\inferrule[SI-Done]{
    \sigma(a) = 0
}{
    \ssj {\langle {\iwhile a s} \mid \sigma \rangle} {\langle \iskip \mid \sigma\rangle}
}
\end{mathpar}
\caption{Small-step semantics for the imperative language}
\label{fig:smallimp}
\end{figure}

The small-step semantics of this imperative language is given 
by \Cref{fig:smallimp}. In this setting, a step relates pairs
of statements $s$ and states $\sigma$.
A state is a mapping of variables to integers. We use the notation 
$\sigma(a)$ to evaluate the arithmetic expression $a$ under those mappings,
and we use the notation $[z/x]\sigma$ to update $\sigma$ to include 
a binding of $z$ to $x$. 
Note that, upon termination, one is always left 
with the statement $\iskip$; the
computational result is recorded via the paired state.

The big-step semantics defines the judgment
$\bsj {\langle s \mid \sigma \rangle} {\sigma'}$,
which states that statement $s$ terminates in state $\sigma'$ when
executed with starting state $\sigma$.
The syntax directed rules for the judgment are identical to the rules
in \Cref{fig:bigimp} when we omit the first component of the evaluation
result.

\begin{figure}
    \small
\begin{mathpar}
\inferrule[BI-Skip]{
}{
    \bstj {\langle \iskip \mid \sigma \rangle} {\langle \iskip \mid \sigma \rangle}
}

\inferrule[BI-Then]{
    \sigma(a) \neq 0
    \\
    \bstj {\langle {s} \mid {\sigma} \rangle} {\langle s' \mid \sigma' \rangle}
}{
    \bstj {\langle {\iif a {s}} \mid {\sigma} \rangle} {\langle s' \mid \sigma' \rangle}
}

\inferrule[BI-Else]{
    \sigma_1(a) = 0
}{
    \bstj {\langle {\iif a {s}}  \mid {\sigma} \rangle} {\langle {\iskip} \mid \sigma \rangle}
}

\inferrule[BI-Seq]{
    \bstj  {\langle s_1  \mid {\sigma_1} \rangle} {\langle \iskip  \mid {\sigma_2} \rangle} 
    \\
    \bstj  {\langle s_2  \mid {\sigma_2} \rangle} {\langle s'  \mid {\sigma_3} \rangle} 
}{
    \bstj {\langle {\iseq {s_1} {s_2}}  \mid {\sigma_1} \rangle} {\langle s' \mid \sigma_3 \rangle}
}

\inferrule[BI-Bind]{
    \sigma(a) = z
}{
    \bstj {\langle {\iset x a} \mid \sigma \rangle}  {\langle {\iskip} \mid [z/x]\sigma \rangle}
}

\inferrule*[leftskip=2em,rightskip=2em,lab=BI-Do]{
    \sigma_1(a) \neq 0
    \\
    \!\!\!\! \bstj {\langle {s} \mid {\sigma_1} \rangle} {\langle \iskip \mid \sigma_2 \rangle} \!\!\!\!
    \\
    \bstj {\langle  \iwhile a {s} \mid {\sigma_2} \rangle} {\langle s' \mid \sigma_3 \rangle}
}{
    \bstj {\langle  \iwhile a {s} \mid {\sigma_1} \rangle} {\langle s' \mid \sigma_3 \rangle}
}

\inferrule*[rightskip=2em,lab=BI-Done]{
    \sigma_1(a) = 0
}{
    \bstj {\langle {\iwhile a {s}} \mid {\sigma} \rangle}{\langle \iskip \mid \sigma \rangle}
  }
  
\end{mathpar}
\caption{Big-stop versions of the big-step rules of the imperative language}
\label{fig:bigimp}
\end{figure}

\paragraph{Big-Stop Semantics}

Following the same principle as for PCF, we extend the big-step rules
for the imperative language in two steps.
To match the format of the small-step semantics, we change the results
of computations to be pairs $\langle s \mid \sigma \rangle$ of
statements $s$ and states $\sigma$ instead of simply states like in
standard big-step semantics.
The big-stop versions of the big-step rules are given in
\Cref{fig:bigimp}.
Because big-step rules only cover terminating computations,
these rules derive judgments of the form
$\bstj {\langle s \mid \sigma \rangle} {\langle \iskip \mid \sigma'
  \rangle}$ where the resulting statement is always $\iskip$.
To extend the judgment to partial evaluations, one then needs to add
the stopping rules of \Cref{fig:stopimp}. As usual,
there is one rule for stopping any program where it stands
(\textsc{StI-Stop}) and a few additional rules for propagating that
stoppage into the middle of the program. Note that \textsc{StI-Stop}
can also just replace \textsc{BI-Skip}.

\begin{figure}
    \small
\begin{mathpar}
\inferrule[StI-Seq]{
    \bstj {\langle s_1 \mid {\sigma} \rangle} {\langle {s_1'} \mid \sigma' \rangle}
}{
    \bstj {\langle \iseq {s_1} {s_2} \mid {\sigma} \rangle} {\langle {\iseq {s_1'} {s_2}} \mid \sigma' \rangle}
}

\inferrule[StI-Do]{
    \sigma(a) \neq 0
    \\
    \bstj {\langle {s} \mid {\sigma} \rangle} {\langle {s'} \mid \sigma' \rangle}
}{
    \bstj {\langle  {\iwhile a {s}} \mid {\sigma} \rangle} {\langle {\iseq {s'} {\iwhile a {s}}} \mid \sigma' \rangle}
}

\inferrule[StI-Stop]{
}{
    \bstj {\langle s \mid \sigma \rangle} {\langle s \mid \sigma \rangle}
}
\end{mathpar}
\caption{New rules for big-stop semantics for the imperative language}
\label{fig:stopimp}
\end{figure}

As expected, these big-stop semantics agree with 
the small-step semantics (\Cref{thm:impeq}).

\begin{theorem}[Imperative Stop/Multi Equivalence]\label{thm:impeq}
    For all statements $s,s'$ and states $\sigma, \sigma'$,
    \[\bstj {\langle s \mid \sigma \rangle} {\langle s' \mid \sigma' \rangle}
    \iff \msj {\langle s \mid \sigma \rangle} {\langle s' \mid \sigma' \rangle}\]
\end{theorem}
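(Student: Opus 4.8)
The plan is to prove both implications by rule induction, in the same spirit as the proofs of \Cref{thm:eq} and \Cref{thm:stseeq}. First I would record two routine auxiliary facts about the small-step system: transitivity of $\msj - -$ (immediate from \textsc{M-Step}), and a sequencing congruence lemma stating that $\msj {\langle s_1 \mid \sigma \rangle}{\langle s_1' \mid \sigma' \rangle}$ implies $\msj {\langle \iseq {s_1}{s_2} \mid \sigma \rangle}{\langle \iseq {s_1'}{s_2} \mid \sigma' \rangle}$, proved by induction on the multi-step derivation using \textsc{SI-Seq1}.

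For the $\Rightarrow$ direction I would induct on the derivation of $\bstj {\langle s \mid \sigma \rangle}{\langle s' \mid \sigma' \rangle}$. The rules \textsc{StI-Stop} and \textsc{BI-Skip} are discharged by \textsc{M-Refl}; the axiom-like rules \textsc{BI-Bind}, \textsc{BI-Else}, \textsc{BI-Done} each correspond to a single small step (\textsc{SI-Bind}, \textsc{SI-Else}, \textsc{SI-Done}); \textsc{BI-Then} prepends the step \textsc{SI-Then} to the multi-step sequence provided by the induction hypothesis. The remaining cases---\textsc{StI-Seq}, \textsc{BI-Seq}, \textsc{StI-Do}, \textsc{BI-Do}---all combine the congruence lemma with transitivity: for \textsc{StI-Seq} apply the congruence lemma directly; for \textsc{BI-Seq} apply it and then splice in \textsc{SI-Seq2} to consume the leading $\iskip$ before the second induction hypothesis; for \textsc{StI-Do} and \textsc{BI-Do} additionally prepend the step \textsc{SI-Do} and then proceed as in the sequencing cases.

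For the $\Leftarrow$ direction I would induct on the derivation of $\msj {\langle s \mid \sigma \rangle}{\langle s' \mid \sigma' \rangle}$. The \textsc{M-Refl} case is exactly \textsc{StI-Stop}. The \textsc{M-Step} case reduces, via the induction hypothesis on the tail, to a \emph{prepend lemma}: if $\ssj {\langle s \mid \sigma \rangle}{\langle s'' \mid \sigma'' \rangle}$ and $\bstj {\langle s'' \mid \sigma'' \rangle}{\langle s' \mid \sigma' \rangle}$, then $\bstj {\langle s \mid \sigma \rangle}{\langle s' \mid \sigma' \rangle}$. This lemma is itself proved by induction on the small-step derivation. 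The base cases are easy: \textsc{SI-Then} feeds into \textsc{BI-Then}; \textsc{SI-Bind}, \textsc{SI-Else}, \textsc{SI-Done} feed into \textsc{BI-Bind}, \textsc{BI-Else}, \textsc{BI-Done} after the trivial inversion that a big-stop derivation out of $\langle \iskip \mid \sigma \rangle$ must conclude $\langle \iskip \mid \sigma \rangle$; \textsc{SI-Seq2} is \textsc{BI-Seq} with \textsc{BI-Skip} on the left.

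The main obstacle is the two congruence small-step cases, \textsc{SI-Seq1} and \textsc{SI-Do}. In the \textsc{SI-Seq1} case we are given the structurally smaller derivation $\ssj {\langle s_1 \mid \sigma \rangle}{\langle s_1' \mid \sigma'' \rangle}$ together with a big-stop derivation of $\bstj {\langle \iseq {s_1'}{s_2} \mid \sigma'' \rangle}{\langle s' \mid \sigma' \rangle}$, and we must invert the latter: it can only have been concluded by \textsc{StI-Stop}, \textsc{StI-Seq}, or \textsc{BI-Seq}. Each of these cases exposes a big-stop derivation out of $\langle s_1' \mid \sigma'' \rangle$ (the trivial one in the \textsc{StI-Stop} case), to which we apply the inner induction hypothesis to obtain a big-stop derivation out of $\langle s_1 \mid \sigma \rangle$, and then reassemble the result with \textsc{StI-Seq} or \textsc{BI-Seq} respectively. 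The \textsc{SI-Do} case is handled identically, since its continuation $\langle \iseq s {\iwhile a s} \mid \sigma \rangle$ is again a sequencing expression; the same inversion and reassembly go through, producing \textsc{StI-Do} or \textsc{BI-Do}. This interplay---inner induction on the small step combined with inversion on the big-stop derivation of the continuation---is the delicate point, exactly as in the PCF proofs, and it works smoothly precisely because the big-stop system never consumes an output expression unless that expression is already in normal form ($\iskip$), so no auxiliary invariant on partially-evaluated statements is needed.
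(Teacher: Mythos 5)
Your proposal is correct and follows the same route the paper takes, which states only that the result ``follows from induction over the derivations''; your auxiliary lemmas (multi-step transitivity, the \textsc{SI-Seq1} congruence lemma, and the prepend lemma with inversion on the big-stop derivation of the continuation) are exactly the standard scaffolding that such an induction requires, mirroring the PCF equivalence proofs. The only nitpick is that the \textsc{SI-Do} case of the prepend lemma needs no inner induction hypothesis at all (the rule is an axiom), so it is handled by inversion and reassembly alone rather than ``identically'' to \textsc{SI-Seq1}.
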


\begin{proof}
This property follows from induction over the derivations.
\end{proof}

In many formalisms, imperative big-step judgments
are of the form 
$\bsj {\langle s \mid \sigma \rangle} {\sigma'}$
where $s$ is a statement and $\sigma,\sigma'$ are states.
This form of judgment can also be extended 
to big-stop, in particular by using the annihilator optimization
from \Cref{sec:opt}. 
We exemplify this extension in
\Cref{sec:extraimp}.


\section{Related Work}\label{sec:related}

Big-stop semantics is not the first system 
designed to ameliorate the shortcomings of big-step 
semantics. Various extensions to 
big-step semantics already
exist in the literature. Broadly speaking,
these extensions operate along two axes:
interpreting big-step rules coinductively,
and adding additional rules for handling 
the problematic cases. We describe a variety 
of these systems in this section and display 
what their rules look like using variants 
of the ``big-step'' 
judgment $e \Downarrow \nontinf$,
meaning that the expression $e$ does not terminate.

\paragraph{Coinductive Techniques}
Cousot and Cousot were the first 
to propose using coinductive 
techniques to handle infinitary 
computation in the context of big-step semantics \cite{cousot1992inductive}. 
They did so by introducing additional, 
coinductively-interpreted big-step
rules to capture nontermination.
Put simply, their rules
define nontermination of 
some expression $e$ via the nontermination 
of some subevaluation of $e$.  
In principle, this approach appears to require
a new set of such rules
for each possibly-nonterminating 
subexpression of a given syntactic form,
which often more than doubles the 
number of inference rules. An example of 
their style of nontermination rules
is given
in \Cref{fig:cousotrules} for 
application.\footnote{
    Recall that $\bsntj e$ 
    means that the reduction of $e$
    does not terminate.
}
There are three such rules that are added
alongside the single terminating
big-step application rule.
These coinductive rules form the basis 
of several proposals to 
characterize nonterminating 
computation, including
\citet{hughes1995making,crole1998lectures},
and others.

\begin{figure}
    \small
\begin{mathpar}
\inferrule[CC-App1]{
    \bsntj {e_1}
}{
    \bsntj {\app {e_1}{e_2}}
}

\inferrule[CC-App2]{
    \bsj {e_1} {v_1}    
    \\
    \bsntj {e_2} 
}{
    \bsntj {\app {e_1}{e_2}} 
}

\inferrule[CC-App3]{
    \bsj {e_1} {\lam {f} {x} e}    
    \\
    \bsj {e_2} {v}
    \\\\
    \bsntj {[(\lam {f} {x} e)/f, v/x]e} 
}{
    \bsntj {\app {e_1}{e_2}} 
}
\end{mathpar}
\caption{Cousot-Cousot-style \cite{cousot1992inductive} coinductive divergence rules for application}
\label{fig:cousotrules}
\end{figure}

Leroy and Grall \cite{leroy2009coinductive} formalize
and compare both
Cousot and Cousot's \cite{cousot1992inductive}
coinductive treatment of nontermination 
and a notion of
``coevaluation'' arising from 
coinductive interpretation
of the standard eager big-step rules.
This coevaluation is able to 
handle divergence
coinductively in a
similar manner to the rules 
of \Cref{fig:cousotrules},
but simultaneously 
it can reason about 
the values that result from
terminating evaluation.
Some key relations found
between evaluation, coevaluation, and 
divergence include that
if $e$ evaluates to $v$ then 
$e$ also coevaluates to $v$,
and that if $e$ coevaluates to $v$
then $e$ either evaluates to $v$ or 
diverges. However, there exist 
expressions that diverge but do 
not coevaluate, and 
big-step coevaluation 
does not perfectly coincide with 
small-step coevaluation.

Later work by Cousot and Cousot 
approaches the problem 
using fixedpoints in a different 
way that is neither inductive nor 
coinductive \cite{cousot2009bi}.
The resulting ``bifinitary'' 
system is able to handle 
both terminating and nonterminating 
big-step-style inference rules 
using just one kind of judgment,
rather than separate terminating 
and nonterminating ones. 
As a result, this system requires less 
rule duplication than their previous 
system, but still requires multiple 
rules per syntactic form. 
Cousot and Cousot provide a three-rule setup 
for reasoning about 
function application in section 6.4.4
of their work \cite{cousot2009bi}.
The three rules are largely the same as 
those of
\Cref{fig:cousotrules} except 
that their version of \textsc{CC-App3} is 
generalized to allow 
for both terminating and nonterminating 
executions in its third premiss and 
conclusion, so long as the result of each reduction is the same.

Chargu\'eraud's pretty-big-step semantics 
\cite{chargueraud2013pretty}
takes a similar approach to 
Cousot and Cousot's bifinitary system 
\cite{cousot2009bi} but ensures 
that the interpretation of its 
rules correspond 
to standard induction and coinduction. 
In particular, interpreting pretty-big-step 
rules inductively corresponds to 
evaluation (which handles termination), 
and interpreting the same rules 
coinductively corresponds to 
coevaluation (which handles nontermination).
The pretty-big-step system also builds upon 
another technique in Cousot and Cousot \cite{cousot2009bi}:
breaking down expression reduction 
into smaller chunks. In Chargu\'eraud's work, 
this breakdown minimizes
premiss duplication across rules. 
Chargu\'eraud then goes further to 
capture features like
divergence and exceptions 
alongside values via ``outcomes.''
The new outcome for, e.g., divergence 
is an additional pseudo-value 
$\nontinf$ which requires its own propagation rules.
An example of a resulting rule set can be found 
in \Cref{fig:prettybigrules},
where application is broken down into 
$\mathtt{app}$, $\mathtt{app2}$,
and $\mathtt{app3}$, and where 
$o$ ranges over outcomes.
While 
this approach results in more
inference rules,  
Chargu\'eraud reports that having fewer premisses leads 
to smaller formal definitions and proofs.
Chargu\'eraud also proves similar theorems 
about pretty-big-step coevaluation as Leroy 
and Grall prove about their coevaluation \cite{leroy2009coinductive}.
However, because pretty-big-step semantics 
can coevaluate to the nonterminating outcome $\nontinf$,
all diverging expressions can be shown 
to coevaluate.

\begin{figure}
    \small
\begin{mathpar}
\inferrule[Ch-App1]{
    \bsj {{e_1}} {o_1}
    \\
    \bsj {\mathtt{app2}\;{o_1}\;{e_2}} v
}{
    \bsj {\mathtt{app}\;{e_1}\;{e_2}} v
}

\inferrule[Ch-App2]{
    \val {v_1}
    \\
    \bsj {{e_2}} {o_2}
    \\
    \bsj {\mathtt{app3}\;{v_1}\;{o_2}} v
}{
    \bsj {\mathtt{app2}\;{v_1}\;{e_2}} v
}

\inferrule[Ch-App2Div]{
}{
    \bsntj {\mathtt{app2}\;{\nontinf}\;{e_2}} 
}

\inferrule[Ch-App3]{
    \bsj {[\lam {f} {x} e/f, v/x]e} {o}
}{
    \bsj {\mathtt{app3}\;{\lam {f} {x} e}\;{v}} {o}
}

\inferrule[Ch-App3Div]{
}{
    \bsntj {\mathtt{app3}\;{\lam {f} {x} e}\;{\nontinf}} 
}
\end{mathpar}
\caption{Pretty-big-step-style \cite{chargueraud2013pretty}
inductive/coinductive rules for application with divergence 
 }
\label{fig:prettybigrules}
\end{figure}

Similar to 
pretty-big-step semantics is
the flag-based semantics 
of Poulsen and Mosses \cite{poulsen2017flag}.
They modify pretty-big-step 
semantics by pairing expressions with 
status flags that
propagate information 
about termination:
$\downarrow$ for successful termination 
and $\uparrow$ for divergence.
This approach reduces the number 
of propagation rules needed 
because, rather than allow 
a range of outcomes, the result of 
evaluation is always some value that 
can be propagated naturally by the existing 
rules. It might just be the case that 
the resulting value is ignored 
because evaluation 
is flagged as diverging.
An example of flag-based 
rules applied to application can 
be found in \Cref{fig:flagrules},
where $\delta$ ranges 
over flags.
There are three rules for 
application, and the rules 
\textsc{PM-Div} and \textsc{PM-Val} 
are also included 
to show how flags are handled.
Note that a coinductive interpretation
of these rules 
is necessary to derive
nontermination judgments. Such 
nontermination judgments take the form
$\bsj {(e, \downarrow)}{(v, \uparrow)}$
and ignore the value $v$ 
so that they have the same meaning 
as $\bsntj e$.

\begin{figure}
    \small
    \begin{mathpar}
    \inferrule[PM-Div]{
        \val v
    }{
        \bsj {(e, \uparrow)} {(v, \uparrow)}
    }

    \inferrule[PM-Val]{
        \val v
    }{
        \bsj {(v, \downarrow)} {(v, \downarrow)}
    }

    \inferrule[PM-App1]{
        \bsj {({e_1}, \downarrow)} {(v_1, \delta)}
        \\
        \bsj {(\mathtt{app2}\;{v_1}\;{e_2}, \delta)} {(v, \delta')}
    }{
        \bsj {(\mathtt{app}\;{e_1}\;{e_2}, \downarrow)} {(v, \delta')}
    }
    
    \inferrule[PM-App2]{
        \val {v_1}
        \\
        \bsj {({e_2}, \downarrow)} {(v_2, \delta)}
        \\
        \bsj {(\mathtt{app3}\;{v_1}\;{o_2}, \delta)} {(v, \delta')}
    }{
        \bsj {(\mathtt{app2}\;{v_1}\;{e_2}, \downarrow)} {(v, \delta')}
    }
    
    \inferrule[PM-App3]{
        \bsj {[\lam {f} {x} e/f, v/x]e, \downarrow} {(v, \delta)}
    }{
        \bsj {(\mathtt{app3}\;{\lam {f} {x} e}\;{v}, \downarrow)} {(v, \delta)}
    }
    \end{mathpar}
    \caption{Flag-based \cite{poulsen2017flag}
    inductive/coinductive rules for application with divergence}
    \label{fig:flagrules}
\end{figure}

\paragraph{Inductive Techniques}


Unlike all of other work 
discussed so far,
Gunter and R\'emy's
partial proof semantics \cite{gunter1993proof}
does not use a coinductive relation 
to handle divergence. Instead,
partial proofs 
augment big-step 
semantics with the ability 
to abstractly denote 
values with ``logic variables'' 
and leave the derivation of 
those variables uncalculated.
This approach has the effect 
of allowing big-step derivations 
to exist in a partially-completed 
form, which means nonterminating computation 
can be represented with a sequence of
arbitrary-depth derivations.
The rules of this system 
use two forms of big-step evaluation 
judgments, the ``search'' judgment 
$\ppj e u$ and the ``redex''
judgment $\bsj {e} {u}$,
to help ensure that one cannot 
make arbitrarily-deep derivations 
of terminating computations.
Each syntactic form in this 
system requires its own search and 
redex rules.
The resulting rules look like 
those of \Cref{fig:partialproofrules},
where $v$ ranges over only 
values, $z$ ranges over only logic 
variables, and $u$ ranges over both.
This rule set shows the
two rules for application as
well as the two rules for 
stopping the growth of 
the proof tree, 
\textsc{GR-StopS} and 
\textsc{GR-StopR}.

\begin{figure}
    \small
    \begin{mathpar}
        \inferrule[GR-StopS]{
            }{
                \ppj {e} z
            }
    
            \inferrule[GR-StopR]{
            }{
                \bsj {e} z
            }

        \inferrule[GR-AppS]{
            \ppj {e_1} {u_1}    
            \\
            \ppj {e_2} {u_2}
            \\
            \bsj {\app {u_1} {u_2}}  {u}
        }{
            \ppj{\app {e_1}{e_2}} u
        }

        \inferrule[GR-AppR]{
            \ppj {[\lam {f} {x} e/f, v/x]e} u
        }{
            \bsj {\app {\lam {f} {x} e}{v}} u
        }
    \end{mathpar}
    \caption{Partial-proof \cite{gunter1993proof}
    inductive rules for application}
    \label{fig:partialproofrules}
\end{figure}

Another inductive approach comes from 
a line of work based on 
step-indexing or 
``fuel'' \cite{ernst2006virtual,siek2012,amin2017type}. 
This approach 
augments the operational semantics with a counter
and augments the language with time-out error states.
The counter counts down in a manner
corresponding to steps or derivation depth. 
When the counter 
is above 0, the semantics behave as normal.
However, when the counter hits 0, 
the evaluation is forced into a time-out 
error state represented here by $\nont$. By quantifying over 
counters, this approach allows for
every finite prefix of computation to be 
captured in a big-step manner (although 
each strict prefix concludes in an error state).
The rules of such a fuel-based system 
are exemplified in \Cref{fig:stepind},
where $e \Downarrow_n v$ means that 
$e$ evaluates to $v$ with a big-step 
derivation no deeper than $n$,
and where $\nont$
represents the error state. 
These fuel-based rules structurally 
match those for 
the coinductive interpretation 
of divergence (\Cref{fig:cousotrules})
except that here they are inductive
and they include the rule
\textsc{SC-Stop}. A typical type safety theorem using fuel 
would state 
that, given any amount of fuel, 
a well-typed expression evaluates to 
either a value of the same type or an error state,
$\forall n.\, \cdot \vdash e:\tau \wedge e \Downarrow_n v 
\implies \cdot \vdash v:\tau \vee v = \nont$.
The idea behind such a theorem is that,
when enough fuel is given, the values 
of terminating computations are captured,
and when too little fuel is given (including for 
infinite computations),
it is still verified that evaluation could 
safely occur up until the fuel is used up and 
the time-out error is reached.

\begin{figure}
    \small
\begin{mathpar}
\inferrule[SC-Stop]{
}{
    e \Downarrow_0 \nont
}

\inferrule[SC-App1]{
    {e_1} \Downarrow_{n-1} \nont 
}{
    {\app {e_1}{e_2}} \Downarrow_n \nont 
}

\inferrule[SC-App2]{
    {e_1} \Downarrow_{n-1} {v_1} 
    \\
    {e_2} \Downarrow_{n-1} \nont 
}{
    {\app {e_1}{e_2}} \Downarrow_{n} \nont 
}

\inferrule[SC-App3]{
    {e_1} \Downarrow_{n-1} {\lam {f} {x} e}  
    \\
     {e_2}  \Downarrow_{n-1} {v} 
    \\\\
     {[\lam {f} {x} e/f, v/x]e} \Downarrow_{n-1} \nont
}{
    {\app {e_1}{e_2}} \Downarrow_{n} \nont 
}
\end{mathpar}
\caption{Counter-based \cite{ernst2006virtual,siek2012,amin2017type} inductive rules 
for application}
\label{fig:stepind}
\end{figure}

While fuel counters are inductive, it is also possible 
to adapt them to work alongside coinductive rules like in the coevaluative
work of Z\'u\~niga and Bel-Enguix \cite{zuniga2022coevaluation}.
This work has two sets of rules: one set for concluding with values
and one for concluding with nontermination. 
The counters count down in the value rules to ensure 
that, even though they are interpreted coinductively, 
these rules can only be 
used for terminating computations---an infinite 
computation would run down any counter, preventing these rules' use.
Thus, nonterminating computations cannot coevaluate to values here, 
unlike in other coevaluative systems.
Simultaneously, the nontermination rules (\Cref{fig:zuniga}) 
ignore the counters and thus act 
like the standard rules of Cousot and Cousot (\Cref{fig:cousotrules}).

\begin{figure}
    \small
\begin{mathpar}
\inferrule[ZB-App1]{
    {e_1} \Downarrow_{m} \nontinf
}{
    {\app {e_1}{e_2}} \Downarrow_n \nontinf
}

\inferrule[ZB-App2]{
    {e_1} \Downarrow_{\ell} {v_1} 
    \\
    {e_2} \Downarrow_{m} \nontinf
}{
    {\app {e_1}{e_2}} \Downarrow_{n} \nontinf
}

\inferrule[ZB-App3]{
    {e_1} \Downarrow_{k} {\lam {f} {x} e}  
    \\
     {e_2}  \Downarrow_{\ell} {v} 
    \\\\
     {[\lam {f} {x} e/f, v/x]e} \Downarrow_{m} \nontinf
}{
    {\app {e_1}{e_2}} \Downarrow_{n} \nontinf
}
\end{mathpar}
\caption{Counter-based \cite{zuniga2022coevaluation} coinductive rules 
for application}
\label{fig:zuniga}
\end{figure}

Hoffmann and Hofmann's partial 
big-step semantics tackle nontermination inductively,
without coinduction
\cite{hoffmann2010amortized}.
While their work 
is specialized to reasoning about 
program cost, we abstract 
to its key takeaways here.
Partial big-step semantics
handles nontermination by 
introducing a new judgment, 
$\bsej e \nont a$, to represent the 
nondeterministic halting of computation
during $e$'s evaluation after accumulating 
some trace $a$. (Note that here the symbol 
$\nont$ is \textit{not} an expression and 
is considered part of the judgment notation.)
This approach behaves similarly to 
the step-counting approach without needing 
any counter.
An example of partial big-step 
rules for function application can 
be found in \Cref{fig:partialbigstep}.\footnote{
    Technically the work on 
    partial big-step semantics also 
    uses a variant of monadic normal form. 
    However, this form plays no role 
    in their partial big-step semantics,
    and is instead used for 
    other conveniences related to 
    rule presentation.
    As monadic normal form 
    would obscure some of the 
    comparisons made in this section, 
    we do not use it for this example.
}
Big-stop semantics is based on this work.

\begin{figure}
    \small
    \begin{mathpar}
\inferrule[HH-Stop]{
}{
    \bsej e \nont 1
}

\inferrule[HH-App1]{
    \bsej {e_1} \nont a
}{
    \bsej {\app {e_1}{e_2}} \nont a
}

\inferrule[HH-App2]{
    \bsej {e_1} {v_1} a  
    \\
    \bsej {e_2} \nont b
}{
    \bsej {\app {e_1}{e_2}} \nont {ab}
}

\inferrule[HH-App3]{
    \bsej {e_1} {\lam {f} {x} e}  a
    \\
    \bsej {e_2} {v} b
    \\
    \bsej {[\lam {f} {x} e/f, v/x]e} \nont c
}{
    \bsej {\app {e_1}{e_2}} \nont {abc}
}
    \end{mathpar}
    \caption{Hoffmann-Hofmann-style partial big-step \cite{hoffmann2010amortized}
    inductive rules for application}
    \label{fig:partialbigstep}
\end{figure}

Later work by D. M.
Kahn improved upon the partial 
big-step approach to cost analysis by introducing 
an algebraic annihilator for 
cost traces and treating
$\nont$ as a value
to reduce propagation rules \cite{kahn2024leveraging}.
The resulting system only adds one new
rule to the standard 
big-step rules, rather than 
a few rules for each syntactic form,
and it makes use of a 
variant of monadic normal form to 
help with this rule minimization.
An example of Kahn's rules for 
application can be found in 
\Cref{fig:kbps}. This work formed the basis 
of some of the ergonomic optimizations
of \Cref{sec:opt}.

\begin{figure}
    \small
    \begin{mathpar}
\inferrule[K-Stop]{
}{
    \bsej e \nont 0
}

\inferrule[K-App]{
    \bsej {[\lam {f} {x} e/f, v/x]e} {v'} a
}{
    \bsej {\app {\lam {f} {x} e}{v}} {v'} {a}
}
    \end{mathpar}
    \caption{Kahn-style partial big-step \cite{kahn2024leveraging}
    inductive rules for application with divergence}
    \label{fig:kbps}
\end{figure}

\paragraph{Comparison with Big-Stop}

When comparing big-stop semantics 
and the other systems 
discussed here, 
there are clear similarities. Aside 
from the work that big-stop is based on, 
one of the most similar lines of work is 
that using fuel \cite{ernst2006virtual,siek2012,amin2017type}.
By consuming fuel in proportion to 
evaluation step count, one should be able to induce 
an error state at exactly the same expressions 
that big-stop can stop. If the error states are 
then altered to record the last pre-error
expression, then such an approach would successfully 
match small-step while using
inductive, big-step-style rules, just as big-stop does. 
However, big-stop can 
get these same results with less work.
Big-stop rules are essentially a 
superset of big-step rules, whereas fuel-based rules 
are more intrusive to implement because every 
expression must track additional state: the fuel counter. 
This state 
must be quantified over in fuel-based formalizations, 
and such quantification must be 
discharged by, e.g., proving that program evaluation 
is sufficiently independent of fuel count.
Because big-stop needs none of this, 
it shows that the fuel of 
fuel-based approaches is actually unnecessary.
Compare the typical fuel-based type safety statement
given earlier in this section to big-stop's
\Cref{thm:pres}.

Otherwise, the effectful version of big-stop 
semantics has the most obvious 
similarities to discuss.
Structurally,
the inductive judgment for effectful
stopping is very 
similar to the coinductive 
judgment for divergence \cite{cousot1992inductive}. 
Like the use of the 
flag $\downarrow$ from flag-based 
semantics \cite{poulsen2017flag},
annihilator-based big-stop semantics
uses the special effect $0$
to prevent spurious effects from 
spurious evaluations that are introduced 
in the effort to reduce 
rule count. 
Like partial proof semantics \cite{gunter1993proof},
big-stop derivations can 
be cut off at any point.
And of course,
the effectful big-stop semantics 
is based on partial big-step semantics 
\cite{hoffmann2010amortized,kahn2024leveraging}.

Nonetheless, big-stop semantics
introduces some clear benefits over 
most other systems discussed here.
One benefit is that 
it is inductive, which 
coheres better with other 
inductively defined judgments like
typing, and which enables 
induction-based proofs 
rather than less-commonly-understood 
coinduction-based proofs.
Another benefit is that it 
can be optimized 
to require extremely few rules---only 
one or two rules are needed for 
entire languages, whereas other 
approaches require multiple 
rules for each syntactic form.
It also does not introduce additional complexity
like counters to the judgments.
And finally, big-stop semantics 
has direct correspondence with 
the behaviour of 
small-step semantics,
whereas coevaluation 
has a less-clear relationship.

\paragraph{Other Related Work}
Much other work exists which is less 
comparable to big-stop semantics,
and almost all of it is coinductive.
A few such examples are listed as follows:
Nakata and Uustalu introduce 
a mutually-coinductive 
trace-based semantics for an 
imperative language
that includes infinite loops
\cite{nakata2009trace}.
Capretta's delay monad
uses coinductive types to 
capture partial computation in a type-theoretic 
way
\cite{capretta2005general}, and Danielsson has 
applied this monad to operational 
semantics, resulting in 
something similar to a 
hybrid big-step operational/denotational semantics
that can handle nontermination
via giving a big-step system 
a mixed recursive/corecursive 
definitional interpreter 
equipped with a partiality monad
\cite{danielsson2012operational}.
Dagnino uses coaxioms 
to develop a metatheory 
for divergent reasoning 
concerning big-step systems.
\cite{dagnino2022meta}.
A similar approach has also been 
applied to the effectful setting,
where infinite sequences 
of effects are formalized as an 
$\omega$-monoid
\cite{ancona2020big}, rather than 
just a finite monoid as in 
this work.


\section{Conclusion}\label{sec:conc}

Big- and small-step operational semantics 
are popular forms of semantics that 
traditionally have occupied subtly 
different niches. Small-step 
semantics describe each individual step 
of computation, and can capture 
diverging computation just as well as 
converging. Big-step semantics 
sacrifice the ability to reason about 
diverging computation in exchange 
for easy access to the 
values resulting from computation,
which is more convenient for purposes such as proving semantic
preservation.

Big-stop semantics recaptures 
the expressive power of small-step 
semantics with all the ergonomic benefits
of big-step semantics. 
Big-stop semantics works by
introducing rules for
nondeterministically stopping computation. 
In this article, we have presented big-stop semantics for a
call-by-value variant of PCF and a simple imperative language.
However, we are not aware of obstacles that prevent or complicate the
adaptation of big-stop semantics to more complex languages that enjoy
big-step semantics.
Nevertheless, it remains an open research question if big-stop semantics
can be extended to language features, such as concurrency and
continuations, that are challenging to express in a big-step
semantics.

Big-stop semantics is the latest in a 
long line of work that aims to 
regain the power of small-step 
semantics in a big-step-style system. 
However, other work 
in this vein has focused on coinductive techniques
and/or required comparatively large changes 
to the standard big-step system.
For this reason, the simplicity of 
big-stop semantics is notable.
Big-stop semantics is both 
inductive and can be formulated
to require very few changes 
to the original big-step system.



\begin{acks}
  We would like to first acknowledge Martin Hofmann (1965-2018),
  whose initial idea of replacing coinductive techniques for handling
  nontermination in a cost analysis setting resulted in the partial
  big-step semantics upon which big-stop semantics was built.
  We would also like to acknowledge Bob Harper and the anonymous
  reviewers of this article for their helpful feedback on this work.

  This material is based upon work supported by the
  \grantsponsor{NSF}{National Science
    Foundation}{https://doi.org/10.13039/100000001} under Grants
  Nos.~\grantnum{NSF}{2311983} and ~\grantnum{NSF}{2525102}, and the 
  \grantsponsor{AFOSR}{United States Air Force Office of Scientific Research}{}
  under grant number \grantnum{AFOSR}{FA9550-21-0009} and 
  \grantnum{AFOSR}{FA9550-23-1-0434} (Tristan Nguyen, program manager). 
  Any opinions, findings, and conclusions or recommendations in this
  material are those of the authors and do not necessarily reflect the
  views of the NSF or the AFOSR.
  
\end{acks}

\bibliographystyle{ACM-Reference-Format}
\bibliography{sources}

\clearpage 
\appendix

\section{Types}\label{sec:types}

\begin{figure}
  \small
    \begin{mathpar}
        %
        %
        \inferrule[T-Var]{
        }{
            \ty {\Gamma,x:\tau} x \tau
        }

        \inferrule[T-Lam]{
            \ty {\Gamma, f:\fun \tau \sigma, x:\tau} e \sigma
        }{
            \ty {\Gamma} {\lam {f} {x} e} {\fun \tau \sigma}
        }

        \inferrule[T-Zero]{
        }{
            \ty \Gamma \z \nat
        }

        \inferrule[T-Succ]{
            \ty \Gamma e \nat
        }{
            \ty \Gamma {\suc e} {\nat}
        }

        \inferrule[T-App]{
            \ty \Gamma {e_1} {\fun \sigma \tau} 
            \\
            \ty \Gamma {e_2} {\sigma}
        }{
            \ty \Gamma {\app {e_1} {e_2}} \tau
        }

        \inferrule[T-Case]{ 
            \ty \Gamma {e_3} \nat
            \\
            \ty \Gamma {e_1} \tau 
            \\\\
            \ty {\Gamma, x:\nat} {e_2} \tau
        }{
            \ty \Gamma {\ifz {e_1} x {e_2} {e_3}} \tau 
        }

        \inferrule[T-Eff]{
            \ty {\Gamma} e \tau
        }{
            \ty {\Gamma} {\eff a e} \tau
        }
    \end{mathpar}
    \caption{Typing rules for PCF}
\label{fig:ty}
\end{figure}

This section contains types and typing rules for PCF.

The only types of PCF are
natural numbers and functions, formalized by the following grammar:
\[ \tau ::= \nat \mid \fun {\tau_1} {\tau_2} \]

The typign rules are
given 
in \Cref{fig:ty}.
The typing rules assign
types to expressions using the
judgment $\ty{\Gamma}{e}{\tau}$, which
means that the expression $e$ has type $\tau$ 
given the typing assumptions of the typing context $\Gamma$.

The typing rules are standard, with the exception of \textsc{T-Eff},
which is not used in the current pure setting. (It is relevant 
to the effectful setting of \Cref{sec:eff}.)
Note that types of expressions may not be 
unique since functions exist like the identity $\lam \_ x x$ 
which can be typed as $\fun \tau \tau$ for any $\tau$.

\section{Big-Stop Examples}\label{sec:examples}

Here we collect some example derivations using big-stop.

\subsection{Pure}

To illustrate some big-stop derivations 
and show how big-stop operational 
semantics differ from an alternative 
approach based on ``coevaluation'',
we consider two expressions extracted
from the work of Leroy and Grall \cite{leroy2009coinductive}.
(For more about this system, see \cref{sec:related}.)

\subsubsection{Pure Example 1}
First
we consider the following expression of type $\fun \nat \nat$:
\[{\app {\lam \_ {x} \z} {\app {\lam f y {\app f y}} \z}}\] 

This expression is our language's well-typed
version of the expression $(\lambda x.\, 0) \omega$,
which Leroy and Grall's system coevaluates to 0 \cite{leroy2009coinductive}.
In typical eager small-step semantics,
however, this expression never 
reduces to a value. Instead, it
loops forever 
while attempting to evaluate the argument 
$\app {\lam f y {\app f y}} \z$.

The big-stop system captures the looping behaviour 
of the expression in the following ways.
Firstly, it allows for a
trivial
derivation that the expression big-stops to itself 
via applying 
\textsc{St-Stop(0)}.
%
%
Secondly, it is able to conclude 
the same big-stop judgment nontrivially using a progressing 
derivation, as guaranteed by \Cref{thm:progress}.
To this end, let $e = {\lam f y  {\app f y}}$.
(For space, we elide premisses of the form $\val v$.)

\begin{small}
\begin{mathpar}
    \inferrule*[Right=\textsc{St-Stop(2)},leftskip=1cm]{
        \inferrule*[Right=\textsc{St-Stop(0)},rightskip=-1cm]{
        }{
            \lam \_ {x} \z
        }
        \\
        \inferrule*[Right=\textsc{St-App},vdots=1cm,leftskip=3cm]{
            \inferrule*[Right=\textsc{St-Stop(0)}]{
            }{
                \bstj {e} {e}
            }
            \\
            \inferrule*[Right=\textsc{St-Stop(0)},leftskip=-1cm,rightskip=-1cm]{
            }{
                \bstj \z \z
            }
            \\
           \inferrule*{
            }{
                \val \z
            }
            \\
            \inferrule*[Right=\textsc{St-Stop(0)}]{
            }{
                \bstj {\app e \z} {\app e \z}
            }
        }{
            \bstj {\app e \z} {\app e \z}
        }
        \\
        \inferrule*[leftskip=2cm]{
        }{
            \val {\lam \_ {x} \z}
        }
    }{
        \bstj {\app {\lam \_ {x} \z} {\app e \z}}{\app {\lam \_ {x} \z}  {\app e \z}}
    }
\end{mathpar}
\end{small}

In fact, any number of applications of the progressing
rule \textsc{St-App} may be applied, as the upper-right application 
of \textsc{St-Stop(0)} has the same conclusion as
the whole \textsc{St-App} subtree.
This circumstance shows that 
any finite amount of computation of this expression 
can be appropriately expressed in the big-stop system.
In contrast, coevaluation captures the 
``infinite'' computation that finishes 
the infinite looping of the argument and then tosses away that value
using the constant function $\lam \_ { x} \z$,
leaving the result of $\z$. In effect,
this behaviour of coevaluation seems to ignore the distinction between call-by-value 
and call-by-name evaluation order.

\subsubsection{Pure Example 2}
The second example is the following expression 
of type $\fun \nat \nat$ found by 
A. Filinski \cite{leroy2009coinductive}.
\[ 
    {\app {\lam f x {{\lam \_ {y} {\app {\app {f} x} y} }}} \z}
\]
This expression does not coevaluate at all in Leroy and Grall's system, 
but it does have
big-stop derivations. Letting
$e'$ be $\lam f x {{\lam \_ {y} {\app {\app {f} x} y} }}$,
the following 
is such a progressing derivation.
This derivation can also be extended to be arbitrarily deep.

\begin{mathpar}
\inferrule*[Right=\textsc{St-App},leftskip=1cm]{
    \inferrule*[Right=\textsc{St-Stop(0)},rightskip=-1cm]{
    }{
        \bstj {e'} {e'}
    }
    \\
    \inferrule*[Right=\textsc{St-Stop(0)},rightskip=-1cm]{
    }{
        \bstj {\z} {\z}
    }
    \\
    \inferrule*{
    }{
        \val \z
    }
    \\
    \inferrule*[Right=\textsc{St-Stop(0)},vdots=1cm,leftskip=4cm]{
    }{
        \bstj {\lam \_ y {\app {\app {e'} \z} {y}}}  {\lam \_ y {\app {\app {e'} \z} {y}}}
    }
}{
    \bstj {\app {e'} \z}  {\lam \_ y {\app {\app {e'} \z} {y}}}
}
\end{mathpar}

\subsection{Effectful}

\subsubsection{Application to Resource Analysis 1}

To demonstrate the use of 
the effectful big-stop system, consider 
reasoning about the memory requirements of 
some function $f$. One might care 
to ensure that $f$'s application does 
not require more memory than is available,
say, 2 cells of memory.
One could approach this task with effects by using
the effect ``$\mathit{alloc}$'' to represent 
the allocation of one new cell of memory.
Then the derivation of a judgment like 
$\msej {\app f v} {e} {(\mathit{alloc})^2}$
would show that exactly 2 memory cells are required to 
reduce to $e$ given argument $v$, and the desired
memory safety property would be formalized 
with the statement 
$\msej {\app f v} {e} {(\mathit{alloc})^n} \implies n \leq 2$.
In this way, effects can be more useful to 
consider than the actual results of computation.

Using only effectful big-step (not stop) semantics, 
a similar memory-safety property can be 
stated: for all values $v : \nat$, 
$\bsej {\app f v} {v'} {(\mathit{alloc})^n} \implies n \leq 2$.
However, this new statement only captures 
the memory behaviour of terminating computations. 
It misses nonterminating behaviour of
functions like the following:
\[
    \lam {f} {x} {{\ifz {\z}{y} {\app {\lam {g} z {\eff {\mathit{alloc}} {\app g z}}} \z} {x}}}
\]

Depending upon whether the argument to the function $f$ is zero,
$f$ either allocates
no memory before immediately 
terminating
or allocates infinite memory while looping forever. 
The latter behaviour will clearly 
surpass 5 cells and thus is unsafe. 
However, the big-step memory-safety statement
$\bsej {\app f v} {v'} {(\mathit{alloc})^n} \implies n \leq 2$
is still true
because the infinite looping behaviour never 
results in a value $v'$. Big-step semantics is unsuited to 
formalizing the property of interest.

Using big-stop semantics, however, 
the property
$\bsej {\app f v} {e} {(\mathit{alloc})^n} \implies n \leq 2$
actually expresses what we want.
Moreover, this statement can be shown 
false with the following derivation scheme capable of 
inferring $\bsej {\app f v} {e} {(\mathit{alloc})^n}$ for any $n$,
where $v= \suc \z$, $e = \app {e'} \z$, and $e' = \lam {g} {z} {\eff {\mathit{alloc}} {\app g z}}$.

\begin{small}
\begin{mathpar}
\inferrule*[Right=\textsc{StE-App},leftskip=1cm]{
    \inferrule*[Right=\textsc{StE-Stop(0)}]{
    }{
        \bstej f f 1
    }
    \\
    \inferrule*[Right=\textsc{StE-Stop(0)},vdots=1cm]{
    }{
        \bstej {\suc \z} {\suc \z} 1
    }
    \\
    \inferrule*[leftskip=1cm]{
    }{
        \val {\suc \z} 
    }
    \\
    \inferrule*[Right=\textsc{StE-CaseS},vdots=2cm,leftskip=3cm]{
        \inferrule*[Right=\textsc{StE-Stop(0)}]{
        }{
            \bstej {\suc \z} {\suc \z} 1
        }
        \\
        \inferrule*[leftskip=-1.5cm]{
        }{
            \val {\suc \z} 
        }
        \\
        \mathcal{D}(n)
    }{
        \bstej {\ifz \z y {\app {e'} \z} {\suc \z}} {\app {e'} \z} {(\mathit{alloc})^n}
    }
}{
    \bstej {\app {\lam f x {\ifz \z y {\app {e'} \z} x}} {\suc \z}} {\app {e'} \z} {(\mathit{alloc})^n}
}
\end{mathpar}
\end{small}

where for $n>0$, $\mathcal{D}(n)$ is given by the following 
derivation
\begin{small}
\begin{mathpar}
    \inferrule*[Right=\textsc{StE-App},leftskip=1cm]{
            \inferrule*[Right=\textsc{StE-Stop(0)},rightskip=-0.5cm]{
            }{
                \bstej {e'} {e'} 1
            }
            \\
            \inferrule*[Right=\textsc{StE-Stop(0)},vdots=1cm]{
            }{
                \bstej \z \z 1
            }
            \\
            \inferrule*[leftskip=0.5cm]{
            }{
                \val { \z} 
            }
            \\
            \inferrule*[Right=\textsc{StE-Eff},leftskip=1.5cm,vdots=2cm,rightskip=1.5cm]{
                \mathcal D (n-1)
            }{
                \bstej {\eff {\mathit{alloc}} {\app {e'} \z}} {\app {e'} \z} {(\mathit{alloc})^n}
            }
        }{
            \bstej {\app {e'} \z} {\app {e'} \z} {(\mathit{alloc})^n}
        }
\end{mathpar}
\end{small}

and $\mathcal{D}(0)$ is the following derivation.
\begin{small}
\begin{mathpar}
\inferrule[StE-Stop(0)]{
}{
    \bstej {\app {e'} \z} {\app {e'} \z} 1
}
\end{mathpar}
\end{small}

In this way, big-stop semantics allows 
one to reason correctly about nonterminating
behaviour.

\subsubsection{Application to Resource Analysis 2}
Now consider verifying that memory requirement of 
the following function $f$ does not exceed 2 cells 
of memory on any input value $v:\nat$.
\[
    \lam {f} {x} {\eff {\mathit{alloc}}
    {{\ifz {\z}{y}{\app {\lam {g} {z} {{\app g z}}} \z} {x}}}}
\]

Unlike the previous example, this function actually 
does stay under 2 cells of memory no matter the input;
it is true that 
$\bsej {\app {f} v} {v'} {(\mathit{alloc})^n} \implies n \leq 2$
for
all values $v : \nat$.
In fact, the function only ever requires one allocation.

However, big-step semantics is unable to properly verify this 
bound, again due to being unable to reason about the nonterminating 
execution that occurs when the input $v$ is nonzero. 
Big-\textit{stop} semantics, on the other 
hand, can verify this bound without issue. Such a proof 
would look something like the following, where 
$\omega = {\app {\lam {g} {z} {{\app g z}}} \z}$.

Suppose $v = \z$. Then the following derivation shows 
that only 1 allocation is required to reach the value $\z$. 

\begin{small}
\begin{mathpar}
\inferrule*[Right=\textsc{StE-App},leftskip=1cm]{
    \inferrule*[Right=\textsc{StE-Stop(0)}]{
    }{
        \bstej f f 1
    }
    \\
    \inferrule*[Right=\textsc{StE-Stop(0)},vdots=1cm,leftskip=-1cm]{
    }{
        \bstej \z \z 1
    }
    \\
    \inferrule*[leftskip=0.5cm]{
    }{
        \val \z
    }
    \\
    \inferrule*[Right=\textsc{StE-Eff},vdots=2cm,leftskip=2cm]{
        \inferrule*[Right=\textsc{StE-CaseZ}]{
            \inferrule*[Right=\textsc{StE-Stop(0)}]{
            }{
                 \bstej \z \z 1
            }
            \\
            \inferrule*[Right=\textsc{StE-Stop(0)},leftskip=-1.5cm]{
            }{
                 \bstej \z \z 1
            }
        }{
            \bstej 
            {{{\ifz {\z}{y}{\omega} {\z}}}}
            \z 
            {1}
        }
    }{
        \bstej 
        {\eff {\mathit{alloc}}{\ifz {\z}{y}{\omega}}}
        \z 
         {\mathit{alloc}}
    }
}{
    \bstej {\app f \z} \z {\mathit{alloc}}
}
\end{mathpar}
\end{small}

Now suppose instead that $v = \suc{v''}$ for some value $v'':\nat$.
Observe that arbitrarily deep progressing derivations can be 
made for which no more than 1 allocation is required. 
Deep enough derivations take the following form:

\begin{small}
\begin{mathpar}
\inferrule*[Right=\textsc{StE-App},leftskip=1cm]{
    \inferrule*[Right=\textsc{StE-Stop(0)}]{
    }{
        \bstej f f 1
    }
    \\
    \inferrule*[Right=\textsc{StE-Stop(0)},vdots=1cm,leftskip=-0.5cm]{
    }{
        \bstej {\suc {v''}} {\suc {v''}} 1
    }
    \\
    \inferrule*[leftskip=1cm]{
    }{
        \val {\suc {v''}}
    }
    \\
    \inferrule*[Right=\textsc{StE-Eff},vdots=2cm,leftskip=3cm]{
        \inferrule*[Right=\textsc{StE-CaseS}]{
            \inferrule*[Right=\textsc{StE-Stop(0)},rightskip=-1cm]{
            }{
                  \bstej {\suc {v''}} {\suc {v''}} 1
            }
            \\
            \mathcal D(n)
        }{
            \bstej 
            {{{\ifz {\z}{y}{\omega} {{\suc {v''}}}}}}
            {\omega}
            {1}
        }
    }{
        \bstej 
        {\eff {\mathit{alloc}}{{\ifz {\z}{y}{\omega} {{\suc {v''}}}}}}
        {\omega}
         {\mathit{alloc}}
    }
}{
    \bstej {\app f {\suc {v''}}} {\omega} {\mathit{alloc}}
}
\end{mathpar}
\end{small}

where for $n>0$, $\mathcal{D}(n)$ is given by the following 
derivation
\begin{small}
\begin{mathpar}
\inferrule*[Right=\textsc{StE-App},leftskip=1cm]{
    \inferrule*[Right=\textsc{StE-Stop(0)}]{
    }{
        \bstej {\lam g z {\app g z}} {\lam g z {\app g z}} 1
    }
    \\
    \inferrule*[Right=\textsc{StE-Stop(0)},vdots=1cm,leftskip=-1cm]{
    }{
        \bstej {\z} {\z} 1
    }
    \\
    \mathcal{D}(n-1)
}{
    \bstej {\omega} {\omega} 1
}
\end{mathpar}
\end{small}

and $\mathcal{D}(0)$ is the following 
derivation.
\begin{small}
\begin{mathpar}
\inferrule*[Right=\textsc{StE-Stop(0)},leftskip=1cm]{
}{
    \bstej {\omega} {\omega} 1
}
\end{mathpar}
\end{small}

The only other remaining big-stop
derivations arise from 
using \textsc{StE-Stop} to cut the 
previously given derivations short.
In any case, however, such shorter derivations can 
require no more 
allocations than the deeper derivations.

This proof sketch also demonstrates how big-stop semantics 
can make use of both big-step and small-step reasoning techniques 
when convenient. The first case used typical 
big-step style reasoning to jump straight to the resulting value
with no fuss. The second case used typical small-step 
style reasoning to show an evaluation loop. The big-stop 
system is ergnomically 
convenient because it has access to both kinds of reasoning techniques.

\section{Additional Imperative Formalism}\label{sec:extraimp}

This section contains some extra semantic 
formalizations related to the imperative language 
from \Cref{sec:imp}.

\begin{figure}
     \small
 \begin{mathpar}
 \inferrule[BI2-Bind]{
     \sigma_1(a) = z
 }{
     \bsj {\langle {\iset x a} \mid \sigma \rangle}  {[z/x]\sigma}
 }

 \inferrule[BI2-Seq]{
     \sigma_1(a) = 0
     \\
     \bsj  {\langle s_1  \mid {\sigma_1} \rangle} {\sigma_2} 
     \\
     \bsj  {\langle s_2  \mid {\sigma_2} \rangle}  {\sigma_3}  
 }{
     \bsj {\langle {\iseq {s_1} {s_2}}  \mid {\sigma_1} \rangle} \sigma_3 
 }

 \inferrule[BI2-Skip]{
 }{
     \bsj {\langle \iskip \mid \sigma \rangle}  \sigma 
 }

 \inferrule[BI2-Then]{
     \sigma(a) \neq 0
     \\
     \bsj {\langle {s} \mid {\sigma} \rangle}  \sigma'
 }{
     \bsj {\langle {\iif a {s}} \mid {\sigma} \rangle}  \sigma' 
 }

 \inferrule[BI2-Else]{
     \sigma_1(a) = 0
 }{
     \bsj {\langle {\iif a {s}}  \mid {\sigma} \rangle}  \sigma 
 }

 \inferrule[BI2-Done]{
     \sigma_1(a) = 0
 }{
     \bsj {\langle {\iwhile a {s}} \mid {\sigma} \rangle}  \sigma 
 }

 \inferrule[BI2-Do]{
     \sigma_1(a) \neq 0
     \\
     \bsj {\langle {s} \mid {\sigma_1} \rangle} \sigma_2 
     \\
     \bsj {\langle  \iwhile a {s} \mid {\sigma_2} \rangle} \sigma_3 
 }{
     \bsj {\langle  \iwhile a {s} \mid {\sigma_1} \rangle} \sigma_3
 }
 \end{mathpar}
 \caption{Big-step semantics for the imperative language}
 \label{fig:altbig}
\end{figure}

\begin{figure}
     \small
 \begin{mathpar}
 \inferrule[StI2-Bind]{
     \sigma_1(a) = z
 }{
     \bstj {\langle {\iset x a} \mid \sigma \rangle}  {[z/x]\sigma}
 }

 \inferrule[StI2-Seq]{
     \sigma_1(a) = 0
     \\
     \bstj  {\langle s_1  \mid {\sigma_1} \rangle} {\sigma_2} 
     \\
     \bstj  {\langle s_2  \mid {\sigma_2} \rangle}  {\sigma_3}  
 }{
     \bstj {\langle {\iseq {s_1} {s_2}}  \mid {\sigma_1} \rangle} \sigma_3 
 }

 \inferrule[StI2-Skip]{
 }{
     \bstj {\langle \iskip \mid \sigma \rangle}  \sigma 
 }

 \inferrule[StI2-Then]{
     \sigma(a) \neq 0
     \\
     \bstj {\langle {s} \mid {\sigma} \rangle}  \sigma'
 }{
     \bstj {\langle {\iif a {s}} \mid {\sigma} \rangle}  \sigma' 
 }

 \inferrule[StI2-Else]{
     \sigma_1(a) = 0
 }{
     \bstj {\langle {\iif a {s}}  \mid {\sigma} \rangle}  \sigma 
 }

 \inferrule[StI2-Done]{
     \sigma_1(a) = 0
 }{
     \bstj {\langle {\iwhile a {s}} \mid {\sigma} \rangle}  \sigma 
 }

 \inferrule[StI2-Do]{
     \sigma_1(a) \neq 0
     \\
     \bstj {\langle {s} \mid {\sigma_1} \rangle} \sigma_2 
     \\
     \bstj {\langle  \iwhile a {s} \mid {\sigma_2} \rangle} \sigma_3 
 }{
     \bstj {\langle  \iwhile a {s} \mid {\sigma_1} \rangle} \sigma_3
 }

 \inferrule[StI2-Stop]{
 }{
     \bstj {\langle s \mid \sigma \rangle} {\mathtt{freeze}(\sigma)}
 }
 \end{mathpar}
 \caption{Alternative big-stop semantics for the imperative language}
 \label{fig:altstop}
\end{figure}


Consider the typical big-step judgment of the form 
$\bsj {\langle s \mid \sigma \rangle} {\sigma'}$.
This form of judgment is slightly trickier to adapt to big-stop
because big-stop leverages judgements 
of the form $\bstj {\langle s \mid \sigma \rangle} {\langle \iskip \mid \sigma' \rangle}$
to distinguish total from partial evaluations. 
For example, the use of $\iskip$ distinguishes the applicability of
\textsc{BI-Do} from that of
\textsc{StI-Do}. 

Nonetheless, this problem can be overcome with the 
annihilator trick of \Cref{sec:opt}. An appropriate 
new annihilator effect is given by the function $\mathtt{freeze}$.
This function renders a state immutable so that 
$[z/x]\mathtt{freeze}(\sigma) =\mathtt{freeze}(\sigma)$.
The result of this approach is the judgment
\[
 \bstj  {\langle s  \mid {\sigma} \rangle}  {\sigma'}
\]
defined by the rules of \Cref{fig:altstop}.
These rules are comprised of the
standard big-step rules and the additional rule \textsc{StI2-Stop}.
\begin{mathpar}
  \inferrule[StI2-Stop]{
}{ \bstj {\langle s \mid \sigma \rangle} {\mathtt{freeze}(\sigma)}}
\end{mathpar}
These alternative big-stop semantics agree 
with the small-step semantics in the following way:

\begin{theorem}[Alternative Imperative Equivalence]
    For all statements $s$ and states $\sigma, \sigma'$,
    \[
    \bstj {\langle s \mid \sigma \rangle} {\sigma'}
    \iff 
    \exists s'.\,\msj {\langle s \mid \sigma \rangle} {\langle s' \mid \sigma' \rangle}
    \]
\end{theorem}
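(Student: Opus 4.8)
The plan is to prove both implications by induction, in the same style as \Cref{thm:impeq}, the only genuinely new ingredient being the bookkeeping forced by the annihilator $\mathtt{freeze}$. Unlike the big-stop judgment of \Cref{fig:bigimp}, the state-valued judgment $\bstj {\langle s \mid \sigma \rangle} {\sigma'}$ of \Cref{fig:altstop} discards the residual statement of a partial evaluation, so the $\mathtt{freeze}$ tag on the output state is the only thing distinguishing ``$s$ ran to completion in $\sigma'$'' from ``$s$ was stopped having reached $\sigma'$''. I would therefore first record two auxiliary facts. (a) A \emph{strict} big-stop derivation (one not using \textsc{StI2-Stop}) is literally a big-step derivation for the rules of \Cref{fig:altbig} (the analogue of \Cref{lem:iso}), and the classic big-step/multi-step equivalence for this language holds: $\bsj {\langle s \mid \sigma \rangle} {\sigma'} \iff \msj {\langle s \mid \sigma \rangle} {\langle \iskip \mid \sigma' \rangle}$. (b) The tag $\mathtt{freeze}$ faithfully records stopping: a big-stop derivation uses \textsc{StI2-Stop} iff its output state is frozen, and any big-stop derivation from a configuration $\langle s \mid \mathtt{freeze}(\tau)\rangle$ yields exactly $\mathtt{freeze}(\tau)$ --- both shown by induction on the derivation using $[z/x]\mathtt{freeze}(\tau) = \mathtt{freeze}(\tau)$, idempotence of $\mathtt{freeze}$, and availability of \textsc{StI2-Stop} to cut off loops. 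Throughout, $\mathtt{freeze}(\tau)$ is identified with $\tau$ when matched against the untagged states reached by the small-step relation.

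For the forward direction I would induct on the derivation of $\bstj {\langle s \mid \sigma \rangle} {\sigma'}$ and split on whether it is strict. If it is strict, fact (a) gives $\msj {\langle s \mid \sigma \rangle} {\langle \iskip \mid \sigma' \rangle}$, so take $s' = \iskip$. If it uses \textsc{StI2-Stop}, then $\sigma' = \mathtt{freeze}(\tau)$ for the state $\tau$ present at the point of the (first) stop; I would peel off the structural rules \textsc{StI2-Seq}, \textsc{StI2-Then}, \textsc{StI2-Do} along the congruence lemmas for $\msj$ --- e.g.\ $\msj {\langle s_1 \mid \sigma \rangle} {\langle s_1' \mid \sigma' \rangle}$ implies $\msj {\langle \iseq {s_1} {s_2} \mid \sigma \rangle} {\langle \iseq {s_1'} {s_2} \mid \sigma' \rangle}$ by lifting \textsc{SI-Seq1}, and the analogue built from \textsc{SI-Do} --- until the subderivation that invokes \textsc{StI2-Stop}, which is matched by \textsc{M-Refl}. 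Fact (b) guarantees that once a left subderivation is frozen the whole output is that same frozen state, so the multi-step witness is precisely the partial prefix reaching $\langle s' \mid \tau\rangle$ with no spurious continuation into the later components; the residual $s'$ is the obvious $\iseq {s_1'} {s_2}$ (resp.\ $\iseq {s'} {\iwhile a s}$). The leaf rules \textsc{StI2-Bind}, \textsc{StI2-Skip}, \textsc{StI2-Else}, \textsc{StI2-Done} are immediate one- or zero-step sequences ending at $\langle \iskip \mid \sigma'\rangle$.

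For the backward direction I would induct on the derivation of $\msj {\langle s \mid \sigma \rangle} {\langle s' \mid \sigma' \rangle}$. The \textsc{M-Refl} case ($s' = s$, $\sigma' = \sigma$) is handled by \textsc{StI2-Stop}. For \textsc{M-Step} we have $\ssj {\langle s \mid \sigma \rangle} {\langle s_1 \mid \sigma_1 \rangle}$ and, by the inductive hypothesis, $\bstj {\langle s_1 \mid \sigma_1 \rangle} {\sigma'}$, and must produce $\bstj {\langle s \mid \sigma \rangle} {\sigma'}$. This reduces to a single ``prepend one small step'' lemma: if $\ssj {\langle s \mid \sigma \rangle} {\langle s_1 \mid \sigma_1 \rangle}$ and $\bstj {\langle s_1 \mid \sigma_1 \rangle} {\sigma'}$ then $\bstj {\langle s \mid \sigma \rangle} {\sigma'}$, proved by case analysis on the small-step rule, mirroring the congruence structure of the big-stop rules: an \textsc{SI-Seq1} step is absorbed by either \textsc{StI2-Seq} or \textsc{StI2-Stop} according to whether the derivation of $\langle s_1 \mid \sigma_1\rangle$ drives its first component to $\iskip$; an \textsc{SI-Do} step by \textsc{StI2-Do} or \textsc{StI2-Stop}; and the redex steps \textsc{SI-Bind}, \textsc{SI-Seq2}, \textsc{SI-Then}, \textsc{SI-Else}, \textsc{SI-Done} are matched directly by the corresponding \textsc{StI2-} rule after inverting the hypothesis on $\langle s_1 \mid \sigma_1\rangle$. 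This is the direct analogue of the ``prepend a step'' arguments underlying the other Stop/Multi equivalences of the paper.

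The main obstacle is precisely this $\mathtt{freeze}$ bookkeeping. Because the state-valued judgment throws away the residual statement, nearly every case split must additionally track whether a subderivation is strict (and so corresponds, via fact (a), to an honest big-step evaluation ending at $\iskip$) or stopped (its output carries the $\mathtt{freeze}$ tag, which by fact (b) then forces the surrounding output to be frozen), and one must verify that the congruence lemmas for $\msj$ and the ``prepend one step'' lemma interact correctly with this dichotomy. One could instead try to route through \Cref{thm:impeq} by relating the state-valued judgment of \Cref{fig:altstop} to the pair-valued one of \Cref{fig:bigimp} and \Cref{fig:stopimp}, but establishing that relation requires exactly the same $\mathtt{freeze}$ reconciliation, so there is no real shortcut. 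Once the strict/stopped dichotomy is threaded through, both inductions are routine and the proof is no harder than that of \Cref{thm:impeq}.
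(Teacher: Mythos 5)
Your proposal is correct and takes essentially the same route as the paper, whose entire proof is the one-line remark that the property follows by induction over the derivations. Your elaboration of the $\mathtt{freeze}$ bookkeeping (strict-vs-stopped dichotomy, idempotence and absorption of $\mathtt{freeze}$, the congruence and prepend-one-step lemmas) is exactly the detail that one-liner leaves implicit, so there is nothing to flag.
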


\begin{proof}
This property follows from induction over the derivations.
\end{proof}

\end{document}
\endinput
